\renewcommand{\triangleq}{\mathbin{:=}}
\newcommand{\rloop}[2][-]{\save \POS!R(.7) \ar@(ru,rd)^#1{#2} \restore}
\newcommand{\lloop}[2][-]{\save \POS!L(.7) \ar@(lu,ld)_#1{#2} \restore}
\newcommand{\uloop}[2][-]{\save \POS!U(.7) \ar@(lu,ru)^(.8){#2} \restore}
\newcommand{\Pred}{\mathbf{Pred}}
\newcommand{\congrightarrow}{\mathrel{\stackrel{
           \raisebox{.5ex}{$\scriptstyle\cong\,$}}{
           \raisebox{0ex}[0ex][0ex]{$\rightarrow$}}}}
\newcommand{\iso}{\congrightarrow}
\newcommand{\relto}{-{\kern-1.5ex}\raisebox{-.1pt}{\mbox{$\shortmid$}}{\kern-2ex}\to} %
\newcommand{\Set}{\mathbf{Set}}
\newcommand{\Pf}{\mathcal{P}}
\newcommand{\dotP}{\dot{\mathcal{P}}}
\newcommand{\D}{\mathcal{D}}
\newcommand{\CLatw}[0]{\textbf{CLat}$_{\wedge}$-}
\newcommand{\supp}[1]{\mathrm{supp}(#1)}
\newcommand{\op}{\mathrm{op}}
\newcommand{\Act}{\mathrm{Act}}
\newcommand{\KTomega}{$\text{KT}^{\omega}$}
\newcommand{\Pref}[1]{\mathbf{Pre}(#1)}
\newcommand{\Postf}[1]{\mathbf{Post}(#1)}
\newcommand{\backwardGFP}[0]{backward safety problem\xspace}
\newcommand{\forwardLFP}[0]{forward safety problem\xspace}
\newcommand{\backwardLFP}[0]{inverse backward safety problem\xspace}
\newcommand{\Gmdp }{G}
\newcommand{\Fmdp}{F'}
\newcommand{\Fmrm}{F'}
\newcommand{\Fkrf}{F''}
\newcommand{\Fkrb}{F'''}
\newcommand{\myparagraph}[1]{\vspace*{.3em}\noindent\textbf{#1}\;}
\newcommand{\FKripke}{$\textbf{PDR}^{\textbf{F-Kr}}$}
\newcommand{\IBKripke}{$\textbf{PDR}^{\textbf{IB-Kr}}$}
\newcommand{\IBMDP}{$\textbf{PDR}^{\textbf{IB-MDP}}$}
\newcommand{\MRM}{$\textbf{PDR}^{\textbf{MRM}}$}
\newline\textbf{BEGIN: AUX-PROOF}\dotfill\newline}
\newline\textbf{END: AUX-PROOF}\dotfill\newline}
\newif\ifdraft\draftfalse
\newif\ifarxiv\arxivtrue
\newcommand\kori[1]{\textcolor{blue}{#1}}
\newcommand\korit[1]{\todo[color=green!40]{#1 --kori}}
\newcommand{\conf}[1]{}
\newcommand{\todoil}[1]{\todo[inline,caption={}]{#1}}
\newcommand\kori[1]{#1}
\newcommand\korit[1]{}
\newcommand{\conf}[1]{}
\newcommand{\todoil}[1]{}
\newcommand\finalarxiv[2]{#2}
\newcommand\finalarxiv[2]{#1}
\spnewtheorem{notation}[theorem]{Notation}{\bfseries}{\upshape}
\begin{document}
\title{The Lattice-Theoretic Essence of Property Directed Reachability
  Analysis
\thanks{
  The authors are supported by ERATO HASUO Metamathematics for Systems Design Project (No.~JPMJER1603).
  MK is a JSPS DC fellow and supported by JSPS KAKENHI Grant (No.~22J21742).
  KS is supported by JST CREST Grant (No.~JPMJCR2012) and JSPS KAKENHI Grant (No.~19H04084).
  }}

\author{ Mayuko Kori\inst{1, 2}\orcidID{0000-0002-8495-5925} \and
  Natsuki Urabe \inst{2} \orcidID{0000-0002-1554-6618} \and Shin-ya
  Katsumata \inst{2}\orcidID{0000-0001-7529-5489} \and Kohei Suenaga
  \inst{3}\orcidID{0000-0002-7466-8789} \and Ichiro
  Hasuo\inst{1,2}\orcidID{0000-0002-8300-4650} }
\authorrunning{M. Kori et al.}
\institute{The Graduate University for Advanced Studies (SOKENDAI),
  Hayama, Japan \and
  National Institute of Informatics, Tokyo, Japan \\
  \email{\{mkori, urabenatsuki, s-katsumata, hasuo\}@nii.ac.jp} \and
  Kyoto University, Kyoto, Japan \\
  \email{ksuenaga@fos.kuis.kyoto-u.ac.jp} }

\maketitle %
\begin{abstract}
  We present {\em LT-PDR}, a lattice-theoretic generalization of  Bradley's property directed reachability analysis (PDR) algorithm. LT-PDR identifies the essence of PDR to be an ingenious combination of verification and refutation attempts based on the Knaster--Tarski and Kleene theorems.
We introduce four  concrete instances of LT-PDR, derive their implementation from a generic Haskell implementation of LT-PDR, and experimentally evaluate them. We also present a categorical structural theory that derives these instances.

  \keywords{property directed reachability analysis \and model checking \and lattice theory \and
    fixed point theory \and category theory }
\end{abstract}

\section{Introduction}

{\em Property directed reachability (PDR)}  (also called \emph{IC3}) introduced in~\cite{Bradley11,EenMB11}
is a model checking algorithm for
proving/disproving safety problems. It has been successfully applied
to software and hardware model checking, and later it has been
extended in several directions, including {\em fbPDR}
\cite{SeufertS18,SeufertS19} that uses both forward and backward
predicate transformers and {\em PrIC3} \cite{BatzJKKMS20} for the
quantitative safety problem for probabilistic systems. See~\cite{Gurfinkel2015IC3PA} for a concise overview.

The original PDR  assumes that systems are given by binary
predicates representing  transition relations.   The PDR
algorithm maintains  data structures called {\em
  frames} and {\em proof obligations}---these are collections of
predicates over states---and updates them.
While this logic-based description immediately yields automated tools using SAT/SMT solvers, it limits
target systems to qualitative and
nondeterministic ones. This limitation was first overcome
by PrIC3 \cite{BatzJKKMS20} whose target is probabilistic systems. This suggests room for
further generalization of  PDR.

In this paper, we propose the first lattice theory-based generalization
of the PDR algorithm; we call it  \emph{LT-PDR}.
This makes the PDR algorithm apply to a wider class of safety problems, including qualitative and quantitative. We also derive a new concrete extension of PDR, namely one for Markov reward models.

We implemented the general algorithm LT-PDR in Haskell, in a way that maintains the theoretical abstraction and clarity. Deriving concrete instances for various types of systems is easy (for Kripke structures, probabilistic systems, etc.). We conducted an experimental evaluation, which shows that these easily-obtained instances have at least reasonable performance.

\myparagraph{Preview of the Theoretical Contribution}
We generalize the PDR algorithm so that it operates over an
  arbitrary complete lattice $L$.  This generalization recasts the PDR
  algorithm to solve a general problem  $\mu F\leq^{?}\alpha$ of
  over-approximating the least fixed point of an $\omega$-continuous function $F\colon L\to L$ by a
  safety property $\alpha$.  This lattice-theoretic generalization
  signifies the relationship between the PDR algorithm and the theory of
  fixed points. This also allows us to incorporate quantitative
  predicates suited for probabilistic verification.

 More specifically,
 we reconstruct the original PDR algorithm as a combination of
  two constituent parts. They are  called {\em positive LT-PDR} and {\em negative LT-PDR}.  Positive LT-PDR comes
  from a witness-based proof method by the \emph{Knaster--Tarski fixed point
  theorem}, and aims to \emph{verify} $\mu F\leq^{?}\alpha$.
  In contrast, negative LT-PDR comes from the \emph{Kleene fixed point theorem} and aims to \emph{refute}  $\mu F\leq^{?}\alpha$.
  The two algorithms build up witnesses in an iterative and
  nondeterministic manner, where nondeterminism
  accommodates guesses and heuristics. We identify the essence of PDR
  to be an ingenious combination of these two algorithms, in which
  intermediate results on one side (positive or negative) give
  informed guesses on the other side.
  This is how we formulate LT-PDR in~\S{}\ref{sec:int}.

 We discuss several instances of our general theory of PDR. We discuss three concrete settings:
 Kripke structures (where we obtain two instances of LT-PDR),
Markov decision
  processes (MDPs), and Markov reward models. The two in the first setting essentially subsume many existing PDR
  algorithms, such as the original PDR~\cite{Bradley11,EenMB11}  and Reverse PDR~\cite{SeufertS18,SeufertS19}, and  the  one for MDPs
  resembles PrIC3~\cite{BatzJKKMS20}. The last one (Markov reward models) is a new
  algorithm that fully exploits the generality of our
   framework.

In fact, there is another dimension of theoretical generalization: the derivation of the above concrete instances follows a \emph{structural theory of state-based dynamics and predicate transformers}. We formulate the structural theory in the language of \emph{category theory}~\cite{MacLane71,Awodey06}---using especially \emph{coalgebras}~\cite{Jacobs16coalgBook} and \emph{fibrations}~\cite{CLTT}---following works such as~\cite{HermidaJ98,SprungerKDH18,KoriHK21,BonchiKP18}. The structural theory tells us which  safety problems  arise under what conditions; it can therefore suggest that certain safety problems are unlikely to be formulatable, too.  The structural theory is important because it builds a mathematical order in the PDR literature, in which theoretical developments tend to be closely tied to implementation and thus theoretical essences are often not very explicit. For example, the theory is useful in classifying a plethora of PDR-like algorithms for Kripke structures (the original, Reverse PDR, fbPDR, etc.). See \S\ref{sec:LTPDRsForKripke}.

We present the above structural theory in \S\ref{sec:strTh} and briefly discuss its use in  the derivation of concrete instances in \S\ref{sec:instances}. We note, however, that  this categorical theory is not needed for reading and  using the other parts of the paper.

There are other works on generalization of PDR~\cite{HoderB12,RinetzkyS16}, but
our identification of the interplay of Knaster--Tarski and Kleene is new. They do not accommodate probabilistic verification, either.
See \finalarxiv{\cite[Appendix A]{ArxivFull}}{Appendix~\ref{appendix:relatedWorkOnGen}} for further discussions.

\myparagraph{Preliminaries}
Let $(L,\le)$ be a poset.   $(L,\le)^\op$ denotes the opposite poset
$(L,\ge)$. Note that if $(L,\leq)$ is a complete lattice then so is
$(L,\le)^\op$.
An $\omega$-chain (resp. $\omega^{op}$-chain) in $L$ is an
$\mathbb{N}$-indexed family of increasing (resp. decreasing) elements
in $L$.  A monotone function $F:L\to L$ is {\em $\omega$-continuous}
(resp. $\omega^{op}$-continuous) if $F$ preserves existing suprema of
$\omega$-chains (resp. infima of $\omega^\op$-chains).

\section{Fixed-points in Complete Lattices}
Let $(L, \leq)$ be a complete lattice and $F: L \to L$ be a monotone
function.  When we analyze
fixed points %
of $F$, pre/postfixed points play important
roles. %

\begin{definition}
  A \emph{prefixed point} of $F$ is an element $x \in L$ satisfying
  $Fx \leq x$.  A \emph{postfixed point} of $F$ is an element
  $x \in L$ satisfying $x \leq Fx$.  We write $\Pref F$ and $\Postf F$
  for the set of prefixed points and postfixed points of $F$,
  respectively.
\end{definition}

The following
results %
are central in fixed point theory.  They allow us to
under/over-approximate the least/greatest fixed points.

\begin{theorem} \label{thm:kt_cc}
  A monotone endofunction $F$ on a complete lattice $(L, \leq)$ has
  the least fixed point $\mu F$ and the greatest fixed point $\nu
  F$. Moreover,  %
  \begin{enumerate}
  \item\label{item:thm:kt_cc1} (Knaster--Tarski~\cite{Tarski55}) The
    set of fixed points forms a complete lattice.  Furthermore,
    $\mu F = \bigwedge \{x \in L \mid Fx \leq x\}$ and
    $\nu F = \bigvee \{x \in L \mid x \leq Fx\}$.
  \item\label{item:thm:kt_cc2} (Kleene, see e.g.~\cite{Baranga91}) If
    $F$ is $\omega$-continuous,
    $\mu F=\bigvee_{n \in \mathbb{N}} F^n\bot$.  Dually, if $F$ is
    $\omega^\op$-continuous,
    $\nu F = \bigwedge_{n \in \mathbb{N}} F^n\top$.
    \qed
  \end{enumerate}
\end{theorem}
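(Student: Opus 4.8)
The plan is to prove the two parts of Theorem~\ref{thm:kt_cc} in sequence, establishing the Knaster--Tarski characterization first and then deriving the Kleene characterization from $\omega$-continuity (resp.\ $\omega^\op$-continuity), using the fact that a complete lattice $(L,\le)$ is dualized to a complete lattice $(L,\le)^\op$ so that each dual statement follows for free.

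\medskip\noindent\textbf{Part~\ref{item:thm:kt_cc1} (Knaster--Tarski).}
First I would set $P = \{x \in L \mid Fx \le x\}$ (the prefixed points) and $m = \bigwedge P$, which exists since $L$ is a complete lattice. The key step is to show $m \in P$ and that $m$ is in fact a fixed point. For every $x \in P$ we have $m \le x$, hence $Fm \le Fx \le x$ by monotonicity; taking the infimum over $x \in P$ yields $Fm \le m$, so $m \in P$ and $m$ is the least prefixed point. Applying $F$ to $Fm \le m$ and monotonicity gives $F(Fm) \le Fm$, so $Fm \in P$, whence $m \le Fm$; combined with $Fm \le m$ this gives $Fm = m$. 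Since every fixed point is a prefixed point, $m = \mu F$ is the least fixed point, and $\mu F = \bigwedge\{x \mid Fx \le x\}$. The statement for $\nu F$ follows by applying this argument in $(L,\le)^\op$, noting that prefixed points of $F$ in $(L,\le)^\op$ are exactly postfixed points of $F$ in $(L,\le)$. For the claim that the set of all fixed points forms a complete lattice, I would take an arbitrary family $S$ of fixed points and show it has a supremum within the fixed-point set: let $a = \bigvee S$ (in $L$) and consider the sub-lattice $\mathord{\uparrow}a = \{x \mid a \le x\}$; one checks $F$ restricts to a monotone endofunction on $\mathord{\uparrow}a$ (since $x \ge a \ge s = Fs$ for all $s \in S$, hmm, more carefully: for $x \ge a$ and each $s\in S$, $s = Fs \le Fx$, so $a \le Fx$, i.e.\ $Fx \in \mathord{\uparrow}a$), and $\mathord{\uparrow}a$ is itself a complete lattice, so $F$ has a least fixed point there by the first part; that element is the least fixed point of $F$ above all of $S$, i.e.\ the supremum of $S$ in the fixed-point poset. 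The dual argument in $(L,\le)^\op$ gives infima, so the fixed points form a complete lattice.

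\medskip\noindent\textbf{Part~\ref{item:thm:kt_cc2} (Kleene).}
Assume $F$ is $\omega$-continuous. First note the chain $\bot \le F\bot \le F^2\bot \le \cdots$ is an $\omega$-chain: $\bot \le F\bot$ holds since $\bot$ is least, and monotonicity propagates it. Let $k = \bigvee_{n}F^n\bot$, which exists by completeness. By $\omega$-continuity, $Fk = F\bigl(\bigvee_n F^n\bot\bigr) = \bigvee_n F^{n+1}\bot = \bigvee_{n\ge 1}F^n\bot = k$ (the last equality because adding the term $F^0\bot = \bot$ does not change the supremum), so $k$ is a fixed point. To see it is the least, let $x$ be any fixed point (in particular any prefixed point); then $\bot \le x$, and inductively $F^n\bot \le F^n x = x$, so $k = \bigvee_n F^n\bot \le x$. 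Hence $k = \mu F$. The statement $\nu F = \bigwedge_n F^n\top$ for $\omega^\op$-continuous $F$ is the dual, obtained by running the same argument in $(L,\le)^\op$, where $\top$ becomes the bottom, infima become suprema, and $\omega^\op$-continuity becomes $\omega$-continuity.

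\medskip
I do not anticipate a genuine obstacle here, as this is classical; the only point requiring mild care is the complete-lattice claim for the fixed-point set in Part~\ref{item:thm:kt_cc1}, where one must be careful that suprema computed in $L$ need not be fixed points, so the supremum in the fixed-point poset must instead be constructed as a least fixed point of $F$ restricted to an appropriate up-set (and dually for infima). Everything else is a direct application of monotonicity, completeness, and the duality $(L,\le) \leftrightarrow (L,\le)^\op$ recalled in the preliminaries.
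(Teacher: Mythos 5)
Your proof is correct and complete; the paper itself gives no proof of Theorem~\ref{thm:kt_cc}, citing it as classical (Tarski for part~\ref{item:thm:kt_cc1}, Kleene/Baranga for part~\ref{item:thm:kt_cc2}), and your argument is precisely the standard one, including the only delicate point --- constructing suprema in the fixed-point poset as least fixed points of $F$ restricted to an up-set $\mathord{\uparrow}a$ rather than as suprema in $L$.
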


Thm.~\ref{thm:kt_cc}.\ref{item:thm:kt_cc2} is known to hold for
arbitrary $\omega$-cpos (complete lattices are their special case).
A generalization of Thm.~\ref{thm:kt_cc}.\ref{item:thm:kt_cc2} is the
Cousot--Cousot characterization~\cite{Cousot79}, where $F$ is assumed
to be monotone (but not necessarily $\omega$-continuous) and we have
$\mu F=F^\kappa\bot$ for a sufficiently large, possibly transfinite,
ordinal $\kappa$. In this paper, for the algorithmic study of PDR, we
assume the $\omega$-continuity of $F$. Note that $\omega$-continuous
$F$ on a complete lattice is necessarily monotone.

We call the $\omega$-chain $\bot \leq F\bot \leq \cdots $ \emph{the
  initial chain of $F$} and the $\omega^\op$-chain
$\top \geq F\top \geq \cdots $ \emph{the final chain of $F$}.
These appear in Thm.~\ref{thm:kt_cc}.\ref{item:thm:kt_cc2}.

Thm.~\ref{thm:kt_cc}.\ref{item:thm:kt_cc1} and~\ref{thm:kt_cc}.\ref{item:thm:kt_cc2} yield the following witness
notions for \emph{proving} and \emph{disproving} $\mu F \leq\alpha$,
respectively.
 \begin{corollary} \label{cor:kt_kleene} Let $(L,\leq)$ be a complete
  lattice and $F:L\to L$ be $\omega$-continuous.
  \begin{enumerate}
  \item\label{item:cor:kt_kleene1} (KT) $\mu F \leq\alpha$ if and only
    if there is $x\in L$ such that $Fx \leq x\leq \alpha$.
  \item\label{item:cor:kt_kleene2} (Kleene) $\mu F \not\leq\alpha$ if
    and only if there is $n\in\mathbb{N}$ and $x \in L$ such that
    $x \leq F^n \bot$ and $x \not \leq\alpha$.  \qed
  \end{enumerate}
 \end{corollary}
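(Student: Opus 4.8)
The plan is to derive both biconditionals of Corollary~\ref{cor:kt_kleene} directly from Theorem~\ref{thm:kt_cc}, using only elementary lattice manipulations and the monotonicity of $F$. I would treat the two items separately, since item~\ref{item:cor:kt_kleene1} is the Knaster--Tarski half and item~\ref{item:cor:kt_kleene2} is the Kleene half.

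For item~\ref{item:cor:kt_kleene1}, I would argue as follows. For the ``if'' direction, suppose $Fx \le x \le \alpha$. Then $x$ is a prefixed point of $F$, so by Theorem~\ref{thm:kt_cc}.\ref{item:thm:kt_cc1} we have $\mu F = \bigwedge\{y \in L \mid Fy \le y\} \le x \le \alpha$, giving $\mu F \le \alpha$. For the ``only if'' direction, suppose $\mu F \le \alpha$; then simply take $x \triangleq \mu F$. Since $\mu F$ is a fixed point we have $F(\mu F) = \mu F \le \mu F$, so $Fx \le x$, and $x = \mu F \le \alpha$ by assumption. Thus the witness exists.

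For item~\ref{item:cor:kt_kleene2}, I would use the contrapositive form together with Theorem~\ref{thm:kt_cc}.\ref{item:thm:kt_cc2}, which applies since $F$ is $\omega$-continuous: $\mu F = \bigvee_{n \in \mathbb{N}} F^n\bot$. For the ``if'' direction, suppose there are $n$ and $x$ with $x \le F^n\bot$ and $x \not\le \alpha$. Since $F^n\bot \le \bigvee_{m} F^m\bot = \mu F$, we get $x \le \mu F$; if we had $\mu F \le \alpha$ then $x \le \alpha$, a contradiction, so $\mu F \not\le \alpha$. For the ``only if'' direction, suppose $\mu F \not\le \alpha$, i.e.\ $\bigvee_n F^n\bot \not\le \alpha$. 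Here is the one point that needs a small argument: if $F^n\bot \le \alpha$ held for every $n$, then $\alpha$ would be an upper bound of the initial chain, whence $\bigvee_n F^n\bot \le \alpha$, contradicting the assumption. So there exists $n$ with $F^n\bot \not\le \alpha$, and we take $x \triangleq F^n\bot$, which satisfies $x \le F^n\bot$ trivially and $x \not\le \alpha$.

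I do not expect any serious obstacle here; the statement is essentially a repackaging of Theorem~\ref{thm:kt_cc}. The only mildly delicate step is the ``only if'' direction of item~\ref{item:cor:kt_kleene2}, where one must pass from ``the supremum of the chain is not below $\alpha$'' to ``some element of the chain is not below $\alpha$''; this uses the universal property of the supremum (contrapositively) rather than any continuity, and it is the sole place where the chain structure genuinely enters. Everything else is a one-line instantiation, choosing the witness to be $\mu F$ or an appropriate finite approximant $F^n\bot$.
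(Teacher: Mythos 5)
Your proposal is correct and follows essentially the same route as the paper, which proves item~\ref{item:cor:kt_kleene1} by the Knaster--Tarski characterization (the paper dismisses it as ``easy'') and item~\ref{item:cor:kt_kleene2} via exactly the chain of equivalences you spell out: $\mu F \not\leq \alpha$ iff $F^n\bot \not\leq \alpha$ for some $n$ iff a witness $x \leq F^n\bot$ with $x \not\leq \alpha$ exists. Your explicit justification of the step from ``$\bigvee_n F^n\bot \not\leq \alpha$'' to ``some $F^n\bot \not\leq \alpha$'' is a detail the paper leaves implicit, and it is handled correctly.
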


By Cor.~\ref{cor:kt_kleene}.\ref{item:cor:kt_kleene1}, proving
$\mu F \leq\alpha$ is reduced to searching for $x\in L$ such that
$Fx \leq x\leq \alpha$. We call such $x$ a \emph{KT (positive)
  witness}.  In contrast, by
Cor.~\ref{cor:kt_kleene}.\ref{item:cor:kt_kleene2}, disproving
$\mu F \leq\alpha$ is reduced to searching for $n\in\mathbb{N}$ and
$x \in L$ such that $x \leq F^n \bot$ and $x \not \leq\alpha$. We call
such $x$ a \emph{Kleene (negative) witness}.
\begin{notation} %
We shall use lowercase
  (Roman and Greek) letters for elements of $L$ (such as
  $\alpha, x\in L$), and uppercase letters for (finite or infinite)
  sequences of $L$ (such as $X\in L^{*}$ or $L^{\omega}$).  The $i$-th (or $(i-j)$-th when subscripts are started from $j$)
  element of a sequence $X$ is designated by a subscript:
  $X_{i}\in L$.
\end{notation}

\section{Lattice-Theoretic Reconstruction of PDR }
\label{sec:seqcon}

Towards the LT-PDR algorithm, we first introduce two simpler
algorithms, called positive LT-PDR (\S{}\ref{sec:pos}) and negative
LT-PDR (\S{}\ref{sec:neg}).  The target problem of the LT-PDR
algorithm is the following:
 \begin{definition}[the LFP-OA
  problem $\mu F \leq^{?} \alpha$]\label{def:lfpOverapprox}
  Let $L$ be a complete lattice, $F: L \to L$ be $\omega$-continuous,
  and $\alpha \in L$.  The \emph{lfp over-approximation (LFP-OA)
    problem} asks if $\mu F \leq \alpha$ holds; the problem is denoted
  by $\mu F \leq^{?} \alpha$.
 \end{definition}

 \begin{example}\label{ex:forward}
  Consider a transition system, where $S$ be the set of states,
  $\iota \subseteq S$ be the set of initial states,
  $\delta: S \to \Pf S$ be the transition relation, and
  $\alpha \subseteq S$ be the set of safe states.  Then letting
  $L\coloneqq \Pf S$ and
  $F \coloneqq \iota \cup \bigcup_{s \in (-)}\delta(s)$, the lfp
  over-approximation problem $\mu F \leq^{?} \alpha$ is the problem
  whether all reachable states are safe.  It is equal to the problem
  studied by the conventional IC3/PDR~\cite{Bradley11,EenMB11}.
 \end{example}

Positive LT-PDR iteratively builds a KT witness in a bottom-up manner
that positively answers the LFP-OA problem, while negative LT-PDR
iteratively builds a Kleene witness for the same LFP-OA
problem. We shall present these two algorithms as clear reflections of
two proof principles
(Cor.~\ref{cor:kt_kleene}), each of which comes from the fundamental
Knaster--Tarski and Kleene theorems.

The two algorithms build up witnesses in an iterative and
nondeterministic manner. The nondeterminism is there for accommodating
guesses and heuristics. We identify the essence of PDR to be an
ingenious combination of these two algorithms, in which intermediate
results on one side (positive or negative) give informed guesses on
the other side. This way, each of the positive and negative algorithms
provides heuristics in resolving the nondeterminism in the execution of
the other. This is how we formulate the LT-PDR algorithm
in~\S{}\ref{sec:int}.

The dual of LFP-OA problem is called the \emph{gfp-under-approximation
  problem} (GFP-UA): the GFP-UA problem for a complete lattice $L$,
  an $\omega^\op$-continuous function $F:L\to L$  and
  $\alpha\in L$
  is whether the inequality $\alpha\leq \nu F$ holds or not,
and is denoted by $\alpha\le^{?}\nu F$.  It is evident that the GFP-UA
problem for $(L, F, \alpha)$ is equivalent to the LFP-OA problem for
$(L^\op,F,\alpha)$. This suggests the dual algorithm called LT-OpPDR
for GFP-UA problem. See Rem.~\ref{rem:LTOpPDR} later.

\subsection{Positive LT-PDR: Sequential   Positive Witnesses}\label{sec:pos}

We introduce the notion of KT$^\omega$ witness---a KT witness
(Cor.~\ref{cor:kt_kleene}) constructed in a sequential manner. Positive LT-PDR searches for a KT$^\omega$ witness by growing its finitary
approximations (called KT sequences).

Let $L$ be be a complete lattice. We regard each element $x\in L$ as
an abstract presentation of a predicate on states. The inequality $x\le y$ means
that the predicate $x$ is stronger than the predicate $y$.  We introduce
the complete lattice $[n, L]$ of increasing chains of length
$n\in\mathbb{N}$, whose elements are $(X_0 \leq \cdots \leq X_{n-1})$
in $L$ equipped with the element-wise order. We similarly introduce
the complete lattice $[\omega, L]$ of $\omega$-chains in $L$.  We lift
$F:L\to L$ to $F^\# : [\omega, L] \to [\omega, L]$ and
$F^\#_n: [n, L] \to [n, L]$ (for $n \geq 2$) as follows. Note that the
entries are shifted.
\begin{equation}
\begin{aligned}
   F^\#(X_0 \leq X_1 \leq \cdots)&\;:=\; (\bot \leq FX_0 \leq FX_1 \leq\cdots)\\[-.3em]
  F^\#_n(X_0 \leq \cdots \leq X_{n-1})\;&:=\; (\bot \leq FX_0 \leq \cdots\leq FX_{n-2})
\end{aligned}
\end{equation}

 \begin{definition}[\KTomega{} witness]\label{def:KTseqwitness}
	Let $L,F,\alpha$ be as in  Def.~\ref{def:lfpOverapprox}.
	Define
	$\Delta \alpha := (\alpha \leq \alpha \leq \cdots)$.
	A \emph{\KTomega{} witness} is $X \in [\omega, L]$ such that $F^\# X \leq X \leq \Delta \alpha$.
 \end{definition}

 \begin{theorem} \label{thm:safe_witness}
  Let $L,F,\alpha$ be as in  Def.~\ref{def:lfpOverapprox}.
  There exists a KT  witness (Cor.~\ref{cor:kt_kleene})
  if and only if there exists a \KTomega  witness.
  \qed
 \end{theorem}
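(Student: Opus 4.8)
The plan is to prove the biconditional by exhibiting explicit translations between KT witnesses and $\text{KT}^\omega$ witnesses in both directions.

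For the ``only if'' direction, suppose $x \in L$ satisfies $Fx \le x \le \alpha$. I would consider the constant $\omega$-chain $X := (x \le x \le \cdots) = \Delta x$. Then $X \le \Delta\alpha$ is immediate from $x \le \alpha$. For $F^\# X \le X$, unfolding the definition gives $F^\# X = (\bot \le Fx \le Fx \le \cdots)$; the $0$-th component inequality $\bot \le x$ is trivial, and every later component requires $Fx \le x$, which holds by assumption. So $\Delta x$ is a $\text{KT}^\omega$ witness.

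For the ``if'' direction, suppose $X = (X_0 \le X_1 \le \cdots) \in [\omega, L]$ satisfies $F^\# X \le X \le \Delta\alpha$. The natural candidate for a KT witness is $x := \bigvee_{n} X_n$ (equivalently, since the chain is increasing, a kind of limit). First, $x \le \alpha$ because each $X_n \le \alpha$ and $\alpha$ is an upper bound. Second, I need $Fx \le x$. Here I use $\omega$-continuity of $F$: $Fx = F\bigl(\bigvee_n X_n\bigr) = \bigvee_n F X_n$. Now the componentwise inequality $F^\# X \le X$ says precisely that $F X_n \le X_{n+1}$ for all $n$ (and $\bot \le X_0$); hence $\bigvee_n F X_n \le \bigvee_n X_{n+1} = \bigvee_n X_n = x$. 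Combining, $Fx \le x \le \alpha$, so $x$ is a KT witness.

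The main obstacle, such as it is, is making sure the index shift in $F^\#$ is handled correctly: $F^\# X \le X$ must be read component-by-component as $\bot \le X_0$ and $FX_{i} \le X_{i+1}$ for $i \ge 0$, rather than $FX_i \le X_i$, and it is exactly this shifted form that lets the suprema match up after reindexing. The only genuine hypothesis being consumed is the $\omega$-continuity of $F$, used to push $F$ through the supremum $\bigvee_n X_n$; monotonicity alone would give only $\bigvee_n FX_n \le F\bigl(\bigvee_n X_n\bigr)$, the wrong direction. Everything else is bookkeeping with the elementwise order on $[\omega, L]$.
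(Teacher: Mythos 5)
Your proof is correct and is essentially the paper's: the paper uses exactly the same two translations, $x \mapsto (x \le x \le \cdots)$ and $X \mapsto \bigvee_{n} X_{n}$, merely packaging them as the Galois connection $\sup \dashv \Delta$ restricted to prefixed points (via a general lifting result), whereas you verify the required componentwise inequalities directly. You also correctly isolate the one place $\omega$-continuity is consumed, namely pushing $F$ through $\bigvee_{n} X_{n}$, which in the paper's formulation is the condition that $F^{\#}$ is a lifting of $F$ along $\sup$.
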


Concretely, a KT witness $x$ yields a \KTomega\ witness $x\le x\le\cdots$; a \KTomega\ witness $X$ yields a KT witness $\bigvee_{n \in \omega} X_{n}$.
A full proof (via Galois connections) is in
\finalarxiv{\cite{ArxivFull}}{the appendix}.

The initial chain $\bot \le F\bot \le \cdots$ is always a \KTomega\ witness for $\mu F \leq \alpha$.
There are other  \KTomega\  witnesses whose growth is accelerated by some heuristic guesses---an extreme example is $x\le x\le\cdots$ with a KT witness $x$.
\KTomega\ witnesses embrace the spectrum of such different sequential witnesses for $\mu F\le \alpha$, those which mix
routine constructions (i.e.\ application of $F$) and heuristic guesses.
\begin{definition}[KT sequence] \label{def:kt_sequence}
  Let $L,F,\alpha$ be as in  Def.~\ref{def:lfpOverapprox}.
A \emph{KT sequence} for
\todo{check "finite chain"}
 $\mu F \leq^{?} \alpha$ is a finite chain
  $(X_0\le\cdots\le X_{n-1})$, for $n\geq 2$, satisfying
  \begin{enumerate}
  \item \label{item:xn} $X_{n-2} \leq \alpha$; and
  \item \label{item:fn-alg} $X$ is a prefixed point of $F^\#_n$, that is, $FX_{i}\le X_{i+1}$ for each $i\in [0, n-2]$.
  \end{enumerate}
  A KT sequence $(X_{0} \leq \cdots \leq X_{n-1})$ is
  \emph{conclusive} if $X_{j+1} \le X_{j}$ for some $j$.
\end{definition}
 KT sequences are finite by definition. Note  that the upper bound $\alpha$ is imposed on all $X_{i}$ but $X_{n-1}$. This freedom in the choice of $X_{n-1}$ offers room for heuristics, one that is exploited in the combination with negative LT-PDR (\S{}\ref{sec:int}).

We take KT sequences as finite approximations of \KTomega\ witnesses.
This view shall be justified by the partial order
$(\preceq)$ between KT sequences defined below.
 \begin{definition}[order  $\preceq$ between KT sequences]
 We define a partial order relation $\preceq$ on  KT sequences
 as follows: $(X_0, \dots, X_{n-1}) \preceq (X'_0, \dots, X'_{m-1})$ if
 $n \leq m$ and $X_j \geq X_j'$ for each $0 \leq j \leq n-1$.
 \end{definition}

The order $X_j\geq X_j'$ represents that $X_j'$ is
a stronger predicate (on states) than $X_j$. Therefore $X\preceq X'$ expresses
that $X'$ is a longer and stronger / more determined chain than $X$. We obtain \KTomega\
witnesses as their $\omega$-superma.

  \begin{theorem}\label{thm:KTseqCPO}
    Let $L,F,\alpha$ be as in Def.~\ref{def:lfpOverapprox}.  The set
    of KT sequences, augmented with the set of \KTomega\
    witnesses %
    $\{X \in [\omega, L] \mid F^\# X \leq X \leq \Delta \alpha \}$ and
    ordered by the natural extension of $\preceq$, is an $\omega$-cpo.
    In this $\omega$-cpo, each \KTomega witness $X$ is represented as
    the suprema of an $\omega$-chain of KT sequences, namely
    $X = \bigvee_{n \geq 2} X|_n$ where $X|_n \in [n, L]$ is the
    length $n$ prefix of $X$.  \qed
 \end{theorem}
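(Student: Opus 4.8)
The plan is to regard the augmented set — call it $\mathcal{K}$ — as a poset under the evident extension of $\preceq$: treat a \KTomega{} witness as a ``sequence of length $\omega$'' and keep the same defining clause, $X \preceq X'$ iff $\mathrm{len}(X) \le \mathrm{len}(X')$ and $X_j \ge X'_j$ for all $j < \mathrm{len}(X)$ (with $n < \omega$ for every finite $n$), and then to establish the two $\omega$-cpo requirements: $\mathcal{K}$ has a least element, and every $\omega$-chain in $\mathcal{K}$ has a supremum. The partial-order axioms are immediate; the one observation worth isolating is that $Z \preceq Z'$ cannot hold when $Z$ is an infinite witness and $Z'$ a finite KT sequence, so along any $\omega$-chain the lengths are non-decreasing and, once a witness occurs, every later element is a witness too. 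For the bottom element I would check that $(\alpha \le \top) \in [2,L]$ is a KT sequence ($\alpha\le\alpha$ and $F\alpha \le \top$) and that it lies $\preceq$-below every KT sequence and every \KTomega{} witness $Y$: indeed $\mathrm{len}(Y) \ge 2$, $Y_0 \le \alpha$ (for a KT sequence because $Y_0 \le Y_{n-2} \le \alpha$, for a witness directly from $Y \le \Delta\alpha$), and $Y_1 \le \top$ trivially.

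For suprema I would fix an $\omega$-chain $Z^{(0)} \preceq Z^{(1)} \preceq \cdots$ in $\mathcal{K}$, let $n_j \in \mathbb{N}\cup\{\omega\}$ be the length of $Z^{(j)}$ and set $N = \sup_j n_j$, and then show that the ``pointwise infimum'' $W$ given by $W_i := \bigwedge\{\, Z^{(j)}_i \mid n_j > i \,\}$ for $i < N$ — which exists because $L$ is a complete lattice — is the supremum. Note $W$ has length $N$, and $N = \omega$ precisely when the finite lengths grow unboundedly or a witness occurs in the chain. The crux is to verify $W \in \mathcal{K}$, i.e.\ that $W$ is a KT sequence when $N$ is finite and a \KTomega{} witness when $N = \omega$. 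That $W$ is a $\le$-chain in $L$ follows from monotonicity of the chain, since $W_i \le Z^{(j)}_i \le Z^{(j)}_{i+1}$ for every $j$ with $n_j > i+1$, hence $W_i \le W_{i+1}$. For the $\alpha$-bound: for each $i \le N-2$ pick $j$ with $n_j \ge i+2$; then $Z^{(j)}_i \le Z^{(j)}_{n_j-2} \le \alpha$ by the defining inequalities of the KT sequence $Z^{(j)}$, and since $(Z^{(j)}_i)_j$ is $\ge$-decreasing this forces $W_i \le \alpha$ — covering all required indices, finite $N$ or not. For the prefixed-point inequality the one-step index shift is essential: for $i+1 < N$ pick $j$ with $n_j \ge i+2$, so $FW_i \le F Z^{(j)}_i \le Z^{(j)}_{i+1}$ by monotonicity of $F$ together with the prefixed-point clause for $Z^{(j)}$, and taking the infimum over such $j$ gives $FW_i \le W_{i+1}$. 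Crucially this uses only \emph{monotonicity} of $F$ — no $\omega^{\mathrm{op}}$-continuity — even though we take an infimum of a decreasing chain.

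It then remains to see $W = \bigvee_j Z^{(j)}$ in $\mathcal{K}$, which is routine: $Z^{(j)} \preceq W$ since $n_j \le N$ and, on the common range, $Z^{(j)}_i \ge Z^{(k)}_i \ge W_i$ for $k \ge j$; and if $V \in \mathcal{K}$ bounds all $Z^{(j)}$ then $\mathrm{len}(V) \ge N$ and $V_i \le Z^{(j)}_i$ for every admissible $i$ and every $j$ with $n_j > i$, so $V_i \le W_i$ and $W \preceq V$. The last assertion is a specialisation of this computation: for a \KTomega{} witness $X$, every prefix $X|_n$ ($n \ge 2$) is a KT sequence (the $\alpha$-clause from $X_{n-2}\le\alpha$, the prefixed-point clause from $F^\# X \le X$), the $X|_n$ form a $\preceq$-chain because consecutive prefixes agree on their common range, and running the supremum computation on this chain gives $W_i = \bigwedge_{n>i} X_i = X_i$, i.e.\ $\bigvee_{n\ge 2} X|_n = X$.

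I expect the main obstacle to be the ``phase transition'' case in which all $Z^{(j)}$ are finite KT sequences with lengths growing without bound: there the supremum is genuinely a new, infinite object — a \KTomega{} witness — so one must check that the defining inequalities $F^\# W \le W \le \Delta\alpha$ survive the limit. The delicacy is that a finite KT sequence $Z^{(j)}$ constrains only its entries up to index $n_j-2$ (its last entry is free), so each statement about a fixed index $i$ must be justified ``for all sufficiently large $j$'' using unboundedness of the lengths, and the prefixed-point inequality must additionally be carried across the one-step index shift — which, reassuringly, succeeds with monotonicity of $F$ alone. The other cases (a witness already present in the chain, or eventually-constant finite lengths) are then routine specialisations of the same pointwise-infimum construction.
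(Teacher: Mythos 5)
Your proposal is correct and follows essentially the same route as the paper's proof: the supremum of an $\omega$-chain is the pointwise infimum of the defined entries, which one then checks is again a chain bounded by $\alpha$ (up to the penultimate index) and a prefixed point of $F^\#_n$ resp.\ $F^\#$, using only monotonicity of $F$. You supply considerably more detail than the paper (the least element $(\alpha\le\top)$, the least-upper-bound verification, and the prefix claim), all of which checks out.
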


  \begin{proposition} \label{prop:safe}
    Let $L,F,\alpha$ be as in  Def.~\ref{def:lfpOverapprox}. There exists a \KTomega\ witness
  if and only if there exists a conclusive KT sequence.
 \end{proposition}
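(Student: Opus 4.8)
The plan is to prove both directions by exploiting the $\omega$-cpo structure established in Thm.~\ref{thm:KTseqCPO}, together with the characterization of \KTomega{} witnesses via KT witnesses (Thm.~\ref{thm:safe_witness}) and the equivalence of KT witnesses with $\mu F \leq \alpha$ (Cor.~\ref{cor:kt_kleene}.\ref{item:cor:kt_kleene1}).

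First, for the ``only if'' direction, suppose a \KTomega{} witness $X$ exists. By Thm.~\ref{thm:KTseqCPO}, $X = \bigvee_{n \geq 2} X|_n$ is the supremum of the $\omega$-chain of its finite prefixes $X|_n$, each of which is a KT sequence. The idea is that since $X$ is an $\omega$-chain with $F^\# X \leq X$, the chain $X_0 \leq X_1 \leq \cdots$ together with $\bigvee_n X_n$ being a fixed point must eventually stabilize in a suitable sense; more directly, I would first produce a KT witness $x := \bigvee_{n \in \omega} X_n$ (as in the remark after Thm.~\ref{thm:safe_witness}), which satisfies $Fx \leq x \leq \alpha$, and then observe that the constant sequence $(x, x)$ — or rather a length-$2$ chain obtained by appropriately truncating — is a conclusive KT sequence: we need some $X_{j+1} \leq X_j$. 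Here I would take the KT sequence $(x \leq x)$: it satisfies $X_0 = x \leq \alpha$ (condition~\ref{item:xn}), $FX_0 = Fx \leq x = X_1$ (condition~\ref{item:fn-alg}), and $X_1 = x \leq x = X_0$, so it is conclusive with $j = 0$.

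For the ``if'' direction, suppose $(X_0 \leq \cdots \leq X_{n-1})$ is a conclusive KT sequence, so $X_{j+1} \leq X_j$ for some $j \in [0, n-2]$. Combined with the chain inequality $X_j \leq X_{j+1}$, this gives $X_j = X_{j+1}$. From condition~\ref{item:fn-alg} we have $F X_j \leq X_{j+1} = X_j$, so $X_j$ is a prefixed point of $F$. To get a KT witness we additionally need $X_j \leq \alpha$. If $j \leq n-2$ this is immediate from condition~\ref{item:xn} if $j = n-2$, and from the chain ordering $X_j \leq X_{n-2} \leq \alpha$ if $j < n-2$. Hence $X_j$ satisfies $F X_j \leq X_j \leq \alpha$, i.e., it is a KT witness, so by Cor.~\ref{cor:kt_kleene}.\ref{item:cor:kt_kleene1} (equivalently by Thm.~\ref{thm:safe_witness}) a \KTomega{} witness exists.

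The main obstacle, and the point requiring care, is the bookkeeping around the index $j$: one must make sure that the collapse $X_{j+1} \leq X_j$ actually lands on an entry bounded by $\alpha$. Since the definition of conclusiveness only requires $X_{j+1} \leq X_j$ for \emph{some} $j$, and condition~\ref{item:xn} only bounds $X_0, \dots, X_{n-2}$ (not $X_{n-1}$) by $\alpha$, the subtle case is $j = n-2$, where the collapsing pair is $(X_{n-2}, X_{n-1})$; but then $X_{n-1} \leq X_{n-2} \leq \alpha$ by condition~\ref{item:xn}, so $X_{n-2}$ is still a valid KT witness. Thus every choice of witnessing index $j$ yields a prefixed point below $\alpha$, and the argument goes through uniformly. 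The remaining steps are routine applications of the already-established theorems, so no further calculation is needed.
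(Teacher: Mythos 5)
Your proof is correct and follows essentially the same route as the paper's: the forward direction collapses to a length-$2$ conclusive KT sequence built from a KT witness (the paper uses $\mu F$, you use $\bigvee_n X_n$; both work), and the backward direction extracts the prefixed point $X_j \leq \alpha$ at the collapsing index and converts it back via Thm.~\ref{thm:safe_witness}, where the paper instead writes down the eventually-constant \KTomega{} witness directly. Your careful handling of the index $j \leq n-2$ matches what the paper leaves implicit.
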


 \begin{proof}
  ($\Rightarrow$): If there exists a \KTomega\ witness,
  $\mu F \leq \alpha$ holds by Cor.~\ref{cor:kt_kleene} and
  Thm.~\ref{thm:safe_witness}.  Therefore, the ``informed guess'' $(\mu F \leq \mu F)$ gives a
  conclusive KT sequence.  ($\Leftarrow$): When $X$ is a conclusive KT sequence with $X_j = X_{j+1}$,
  $X_0 \leq \cdots \leq X_j = X_{j+1} = \cdots$ is a \KTomega\ witness. \qed
 \end{proof}

\noindent
The proposition above yields the following partial algorithm that aims to answer positively to
the LFP-OA problem. %
It searches for a conclusive KT sequence.

\begin{definition}[positive LT-PDR]
  Let $L,F,\alpha$ be as in  Def.~\ref{def:lfpOverapprox}. \emph{Positive LT-PDR} is
 the algorithm shown in Alg.~\ref{alg:posi_pdr}, which says
  `True' to the LFP-OA problem $\mu F \leq^{?} \alpha$ if successful.
\end{definition}

\begin{figure}[p]
\begin{minipage}{\textwidth}
    \begin{algorithm}[H]
    \caption{positive LT-PDR}\label{alg:posi_pdr}
    \SetKwInOut{Input}{Input} \SetKwInOut{Output}{Output}
    \SetKwInOut{Initially}{Initially} \Input{An instance
      ($\mu F \leq^? \alpha$) of the LFP-OA problem in $L$}
       \SetKwRepeat{Repeat}{repeat (do
      one of the following)}{until}
    \Output{`True' with a conclusive KT sequence} \KwData{a KT
      sequence $X = (X_0 \leq \dots \leq X_{n-1})$} \Initially{$X\coloneqq(\bot \leq F \bot)$}
    \Repeat{any return value is obtained}{ \textbf{Valid}
      If $X_{j+1} \leq X_j$ for some $j < n-1$, return `True' with the conclusive KT sequence $X$. \\
      \textbf{Unfold} If $X_{n-1} \leq \alpha$,
      let $X\coloneqq(X_0 \leq \cdots \leq X_{n-1} \leq \top)$, appending $\top$ \\
      \textbf{Induction} If some $k \geq 2$ and $x\in L$ satisfy
      $X_{k} \not \leq x$ and $F(X_{k-1} \land x) \leq x$,
      let $X \coloneqq X[X_j := X_j \land x]_{2 \leq j \leq k}$. \\
    }
  \end{algorithm}
\end{minipage}

\vspace{.5em}
\begin{minipage}{\textwidth}
\begin{algorithm}[H]
    \caption{negative LT-PDR}\label{alg:nega_pdr}
    \SetKwInOut{Input}{Input} \SetKwInOut{Output}{Output}
    \SetKwInOut{Initially}{Initially} \SetKwRepeat{Repeat}{repeat (do
      one of the following)}{until} \Input{An instance
      ($\mu F \leq^? \alpha$) of the LFP-OA problem in $L$}
    \Output{`False' with a conclusive Kleene sequence} \KwData{a
      Kleene sequence $C = (C_0, \dots, C_{n-1})$}
    \Initially{$C\coloneqq()$} \Repeat{any return value is obtained}{
      \textbf{Candidate} Choose $x\in L$ such that
      $x \not \leq \alpha$,
      and let $C\coloneqq(x)$. \\
      \textbf{Model}
      If $C_0 = \bot$, return `False' with the conclusive Kleene sequence $C$. \\
      \textbf{Decide} If there exists $x$ such that $C_0 \leq Fx$,
      then let $C \coloneqq (x, C_0, \dots, C_{n-1})$. \\
    }
  \end{algorithm}
\end{minipage}

\vspace{.5em}
\begin{minipage}{\textwidth}
  \begin{algorithm}[H]
    \caption{LT-PDR}\label{alg:pdr}
    \SetKwInOut{Input}{Input} \SetKwInOut{Output}{Output}
    \SetKwInOut{Initially}{Initially}
     \SetKwRepeat{Repeat}{repeat (do
      one of the following)}{until}
    \Input{An instance
      ($\mu F \leq^? \alpha$) of the LFP-OA problem in $L$}
    \Output{`True' with a conclusive KT sequence, or `False' with a
      conclusive Kleene sequence} \KwData{$(X; C)$ where $X$ is a KT
      sequence $(X_0 \leq \cdots \leq X_{n-1})$, and $C$ is a Kleene
      sequence $(C_i, C_{i+1}, \dots, C_{n-1})$ ($C$ is empty if
      $n=i$).}  \Initially{$(X; C)\coloneqq(\bot \leq F \bot;\; ()\,)$}
    \Repeat{any return value is obtained}{ \textbf{Valid}
      If $X_{j+1} \leq X_j$ for some $j < n-1$, return `True' with the conclusive KT sequence $X$. \\
      \textbf{Unfold} If $X_{n-1} \leq \alpha$,
      let $(X; C)\coloneqq(X_0 \leq \cdots \leq X_{n-1} \leq \top; ())$. \\
      \textbf{Induction} If some $k \geq 2$ and $x\in L$ satisfy
      $X_{k} \not \leq x$ and $F(X_{k-1} \land x) \leq x$,
      let $(X; C) \coloneqq (X[X_j := X_j \land x]_{2 \leq j \leq k}; C)$. \\
      \textbf{Candidate} If $C=()$ and $X_{n-1} \not \leq \alpha$,
      choose
      $x\in L$ such that $x \leq X_{n-1}$ and $x \not \leq \alpha$,
      and let $(X;C)\coloneqq(X; (x))$. \\
      \textbf{Model}
      If $C_1$ is defined, return `False' with the conclusive Kleene sequence $(\bot, C_1, \dots, C_{n-1})$. \\
      \textbf{Decide} If $C_i \leq FX_{i-1}$, choose $x \in L$
      satisfying $x \leq X_{i-1}$ and $C_i \leq Fx$,
      and let $(X; C) \coloneqq (X; (x, C_i, \dots, C_{n-1}))$. \\
      \textbf{Conflict} If $C_i \not \leq FX_{i-1}$, choose $x \in L$
      satisfying $C_i \not \leq x$ and $F(X_{i-1} \land x) \leq x$, and let
      $(X; C) \coloneqq (X[X_j := X_j \land x]_{2 \leq j \leq i};
      (C_{i+1}, \dots, C_{n-1}))$.  }
  \end{algorithm}
\end{minipage}
\end{figure}

The rules are designed by the following principles.

  \textbf{Valid} is applied when the current $X$ is
  conclusive.

 \textbf{Unfold}  extends $X$ with $\top$.  In fact, we
  can use any element $x$ satisfying $X_{n-1} \leq x$ and $FX_{n-1} \leq x$ in place of
  $\top$ (by the application of \textbf{Induction} with $x$).  The condition $X_{n-1} \leq \alpha$ is checked to ensure
  that the extended $X$ satisfies the condition in
  Def.~\ref{def:kt_sequence}.\ref{item:xn}.

 \textbf{Induction} strengthens $X$, replacing the $j$-th element with its meet with $x$.
  The first condition $X_{k} \not \leq x$ ensures that this rule indeed strengthens $X$,
  and the second condition
  $F(X_{k-1} \land x) \leq x$ ensures that the strengthened $X$ satisfies
  the condition in Def.~\ref{def:kt_sequence}.\ref{item:fn-alg}, that is,
  $F^\#_n X \leq X$ (see the proof in \finalarxiv{\cite{ArxivFull}}{Appendix~\ref{ap:config}}).

\begin{theorem}\label{thm:positive_sound_terminate}
 Let $L,F,\alpha$ be as in  Def.~\ref{def:lfpOverapprox}.
 Then positive LT-PDR is sound, i.e.~if it outputs `True' then $\mu F \leq \alpha$ holds.

 Moreover, assume $\mu F\le \alpha$ is true. Then positive LT-PDR is weakly
    terminating
    (meaning that suitable choices \kori{of $x$ when applying \textbf{Induction}} make the algorithm terminate).
    \qed
\end{theorem}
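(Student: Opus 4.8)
The plan is to establish the two claims separately, since soundness is a structural invariant argument while weak termination is a matter of exhibiting a good strategy for resolving nondeterminism. For soundness, I would maintain the invariant that the data $X$ is always a KT sequence for $\mu F \leq^? \alpha$ (in the sense of Def.~\ref{def:kt_sequence}), and check that each rule preserves this invariant: the initial value $(\bot \leq F\bot)$ is a KT sequence because $\bot \leq \alpha$ and $F\bot \leq F\bot$; \textbf{Unfold} preserves it precisely because its guard $X_{n-1}\leq\alpha$ supplies condition~\ref{def:kt_sequence}.\ref{item:xn} for the new chain, and appending $\top$ trivially keeps it a prefixed point of $F^\#$; \textbf{Induction} preserves it by the verification already referenced in the text (the guard $F(X_{k-1}\land x)\leq x$ gives $F^\#_n X \leq X$ after the update, and meeting with $x$ only shrinks entries so the $\leq\alpha$ bounds on $X_0,\dots,X_{n-2}$ survive; one must also check the chain stays increasing, which follows since $X_{j}\land x \leq X_{j+1}\land x$ for $j\leq k-1$ and $X_{k-1}\land x\leq X_k$). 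Once the invariant holds, if the algorithm returns `True' it has found a $j<n-1$ with $X_{j+1}\leq X_j$, i.e.~a conclusive KT sequence; by Prop.~\ref{prop:safe} there is a \KTomega\ witness, hence by Thm.~\ref{thm:safe_witness} and Cor.~\ref{cor:kt_kleene}.\ref{item:cor:kt_kleene1} we conclude $\mu F \leq \alpha$.

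For weak termination under the hypothesis $\mu F \leq \alpha$, the strategy I would use is: alternately apply \textbf{Unfold} and then saturate with \textbf{Induction}, driving the $i$-th frame $X_i$ down toward $F^i\bot$ (equivalently toward the truncations of the initial chain, which is itself a \KTomega\ witness since $\mu F\leq\alpha$). Concretely, after $k$ \textbf{Unfold} steps the chain has length $k+2$; I would argue that by repeatedly applying \textbf{Induction} with well-chosen $x$ (for instance $x = F^{j-1}\bot$, or more generally any value that is already forced) each $X_j$ can be brought to satisfy $X_j \leq F^j\bot$. Then, since $\mu F = \bigvee_n F^n\bot \leq \alpha$ and the chain $F^n\bot$ stabilizes in the "right" instances — or rather, since the chain of KT sequences built this way converges (Thm.~\ref{thm:KTseqCPO}) to the \KTomega\ witness given by the initial chain, and that witness, truncated, must eventually have $X_{n-2}\leq\alpha$ forcing no further need to extend — one reaches a configuration where \textbf{Unfold}'s guard fails while the chain is already a prefix of a \KTomega\ witness; I would then show \textbf{Valid} becomes applicable. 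The cleanest route is: take the \KTomega\ witness $W = (\bot \leq F\bot \leq \cdots)$; since $W \leq \Delta\alpha$ means $F^n\bot\leq\alpha$ for all $n$, but also $W$ being a witness is equivalent (via Thm.~\ref{thm:safe_witness}) to the existence of a KT witness $x^\*$, so $W' = (x^\* \leq x^\* \leq \cdots)$ is also a \KTomega\ witness; by Prop.~\ref{prop:safe} there is a \emph{conclusive} KT sequence $Y$; and one shows the greedy strategy above produces a KT sequence $X$ with $X \succeq$ a truncation that already witnesses conclusiveness, or can be strengthened by finitely many \textbf{Induction} steps (using the components of $Y$ as the guesses $x$) into a conclusive one.

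The main obstacle I expect is making the termination argument rigorous rather than hand-wavy: one must pin down a concrete well-founded measure or a concrete finite strategy and verify that the chosen \textbf{Induction} witnesses $x$ actually satisfy the guards $X_k\not\leq x$ and $F(X_{k-1}\land x)\leq x$ at the moment they are applied. The subtlety is that $X_k\not\leq x$ (the "makes progress" guard) can fail if $X$ has already been strengthened below $x$, so the strategy must be phrased so that progress guards are met until saturation and then \textbf{Unfold} is forced, and one needs that only finitely many \textbf{Unfold} steps are needed — which is where one genuinely uses that a \emph{conclusive} KT sequence of some finite length $N$ exists (Prop.~\ref{prop:safe} applied to the hypothesis $\mu F\leq\alpha$), so that after $N$ \textbf{Unfold}s the greedy \textbf{Induction} saturation can reproduce it (or a pointwise-smaller conclusive sequence). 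I would isolate this as a lemma: "if $\mu F\leq\alpha$ then for the length-$N$ conclusive KT sequence $Y$, there is a finite run of \textbf{Unfold}/\textbf{Induction} reaching a KT sequence $X$ with $X_j \leq Y_j$ for all $j\leq N-1$," after which \textbf{Valid} fires.
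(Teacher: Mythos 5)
Your soundness argument is fine and matches the paper's: the invariant ``$X$ is always a KT sequence'' is preserved by every rule (this is exactly Lem.~\ref{lem:config}, restricted to \textbf{Valid}/\textbf{Unfold}/\textbf{Induction}), and a conclusive KT sequence yields $\mu F\leq\alpha$ via Prop.~\ref{prop:safe}, Thm.~\ref{thm:safe_witness} and Cor.~\ref{cor:kt_kleene}.\ref{item:cor:kt_kleene1}. One small point: when checking that \textbf{Induction} keeps the chain increasing and prefixed, you implicitly need $X_1\leq x$ (position $1$ is not met with $x$, but position $2$ is); this holds because $X_1=F\bot$ throughout and $F\bot\leq F(X_{k-1}\wedge x)\leq x$ by the guard, so it is worth saying explicitly.

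The weak-termination half has a genuine gap. Your primary strategy---drive each $X_j$ down toward $F^j\bot$, i.e.\ converge to the initial chain---cannot terminate in general: the initial chain need not stabilize at any finite index, and a KT sequence that merely over-approximates its truncations never triggers \textbf{Valid} unless $F^{j}\bot=F^{j+1}\bot$ for some $j$. Your fallback lemma is also false as stated: reaching $X$ with $X_j\leq Y_j$ for a conclusive $Y$ does \emph{not} make \textbf{Valid} fire, since being pointwise below a conclusive sequence does not make a sequence conclusive (the initial configuration $(\bot\leq F\bot)$ is already pointwise below the conclusive $(\mu F\leq\mu F)$). What you need is equality, not inequality, at two consecutive positions. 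The paper's argument is much more direct: it exhibits one explicit finite run, using $x:=\mu F$ as the informed guess in \textbf{Induction},
\begin{math}
(\bot \leq F\bot)
\xmapsto{\mathbf{Unfold}} (\bot \leq F\bot \leq \top)
\xmapsto{\mathbf{Induction}} (\bot \leq F\bot \leq \mu F)
\xmapsto{\mathbf{Unfold}} (\bot \leq F\bot \leq \mu F \leq \top)
\xmapsto{\mathbf{Induction}} (\bot \leq F\bot \leq \mu F \leq \mu F)
\xmapsto{\mathbf{Valid}} \text{`True'},
\end{math}
where the guards hold because $\mu F\leq\alpha$ gives $F\bot\leq\alpha$ and $\mu F\leq\alpha$ for the two \textbf{Unfold}s, and $F(X_{k-1}\wedge\mu F)\leq F(\mu F)=\mu F$ for the two \textbf{Induction}s (skipping \textbf{Induction} in the degenerate case $\mu F=\top$, where two \textbf{Unfold}s already yield $\top=\top$ and \textbf{Valid} applies). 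The meets with $\mu F$ hit $\top$ exactly, so two consecutive entries become equal---this is the step your ``$X_j\leq Y_j$'' formulation misses. I recommend replacing your termination paragraphs with this explicit run.
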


The last ``optimistic termination'' is realized by the informed guess
$\mu F$ as $x$ in \textbf{Induction}.
To guarantee the termination of LT-PDR,
it suffices to assume that the complete lattice $L$ is well-founded
(no infinite decreasing chain exists in $L$)
\kori{and there is no strictly increasing $\omega$-chain under $\alpha$ in $L$,}
although we cannot hope for this assumption in every instance (\S{} \ref{sec:LTPDRsForMDP},
\ref{sec:LTPDRsForMRM}).

  \begin{lemma} \label{lem:kt_order}
    Let $L,F,\alpha$ be as in  Def.~\ref{def:lfpOverapprox}.
 If $\mu F \leq \alpha$, then for any KT sequence $X$, at least one of the three rules in Algorithm~\ref{alg:posi_pdr} is enabled.

 Moreover, for any KT sequence $X$, let $X'$ be obtained by applying either \textbf{Unfold} or \textbf{Induction}. Then $X\preceq X'$ and  $X\neq X'$.
 \qed
 \end{lemma}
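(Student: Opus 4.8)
The plan is to prove the two parts of Lemma~\ref{lem:kt_order} separately, starting with the claim that at least one rule is enabled and then handling the monotonicity claim about \textbf{Unfold} and \textbf{Induction}.

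\textbf{Part 1: some rule is enabled.} Suppose $\mu F\leq\alpha$ and let $X=(X_0\leq\cdots\leq X_{n-1})$ be a KT sequence. If \textbf{Valid} is enabled we are done, so assume $X$ is not conclusive, i.e. $X_{j+1}\not\leq X_j$ for all $j<n-1$. If \textbf{Unfold} is enabled, i.e. $X_{n-1}\leq\alpha$, we are again done. So assume $X_{n-1}\not\leq\alpha$. I would then argue that \textbf{Induction} must be enabled by exhibiting a concrete choice: take $x:=\mu F$ and $k:=n-1$. We need to check (i) $X_{n-1}\not\leq\mu F$ and (ii) $F(X_{n-2}\wedge\mu F)\leq\mu F$. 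For (ii): since $X_{n-2}\leq\alpha$ (Def.~\ref{def:kt_sequence}.\ref{item:xn}) this does not immediately help, but $X_{n-2}\wedge\mu F\leq\mu F$, so by monotonicity $F(X_{n-2}\wedge\mu F)\leq F(\mu F)=\mu F$ since $\mu F$ is a fixed point; (ii) holds. For (i): I must rule out $X_{n-1}\leq\mu F$. Here I use that $X$ is a KT sequence and an easy induction: the prefixed-point condition $FX_i\leq X_{i+1}$ together with $X_0\geq F\bot$ (which follows since $X$ extends $(\bot\leq F\bot)$ and the algorithm only ever strengthens above $F\bot$... actually we must be careful) — the cleaner route is to show by induction on $i$ that $F^{i+1}\bot\leq X_i$: indeed the initial chain $\bot\leq F\bot\leq\cdots$ is pointwise below any KT sequence reachable by the algorithm, because \textbf{Unfold} and \textbf{Induction} preserve $F^{i+1}\bot\leq X_i$ (this is essentially the content that $X$ is a prefixed point of $F^\#_n$ combined with $X_0\geq F\bot$). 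Hence $\mu F=\bigvee_k F^k\bot$, and if $X_{n-1}\leq\mu F$ held, then combined with $X_i\leq\alpha$ for $i\leq n-2$ and the prefixed-point property one would force a conclusive situation or $X_{n-1}\leq\alpha$, contradicting our assumptions; the precise accounting is the delicate point and I expect it to require showing that a non-conclusive, non-unfoldable KT sequence with $X_{n-1}\leq\mu F$ cannot exist when $\mu F\leq\alpha$. If that direct argument is awkward, an alternative is: since $\mu F\leq\alpha<X_{n-1}$ is impossible to contradict directly, instead pick $x:=\alpha$ and $k:=n-1$ for \textbf{Induction}, checking $X_{n-1}\not\leq\alpha$ (true by assumption) and $F(X_{n-2}\wedge\alpha)\leq\alpha$; the latter needs $FX_{n-2}\leq\alpha$, which is where $\mu F\leq\alpha$ and the initial-chain bound $F^{n-1}\bot\leq X_{n-2}\leq\alpha$... this still requires knowing $FX_{n-2}\leq\alpha$, not obvious. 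I will settle on the $x:=\mu F$ choice and carefully establish $X_{n-1}\not\leq\mu F$ using non-conclusiveness.

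\textbf{Part 2: $X\preceq X'$ and $X\neq X'$.} For \textbf{Unfold}, $X'=(X_0\leq\cdots\leq X_{n-1}\leq\top)$ has length $n+1>n$ and agrees with $X$ on indices $0,\dots,n-1$, so $X'_j\leq X_j$ trivially (equality) for those, giving $X\preceq X'$ by definition of $\preceq$; and $X\neq X'$ since the lengths differ. For \textbf{Induction}, $X'=X[X_j:=X_j\wedge x]_{2\leq j\leq k}$ has the same length $n$, and $X'_j=X_j\wedge x\leq X_j$ for $2\leq j\leq k$ while $X'_j=X_j$ otherwise; in all cases $X'_j\leq X_j$, i.e. $X_j\geq X'_j$, which is exactly $X\preceq X'$. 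For $X\neq X'$: the enabling condition of \textbf{Induction} includes $X_k\not\leq x$, hence $X_k\wedge x<X_k$, so $X'_k\neq X_k$ and the chains differ. I should also note in passing that $X'$ is again a KT sequence — but that is asserted in the rule-design discussion and can be cited from Appendix~\ref{ap:config}, so it is not part of what this lemma must reprove.

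\textbf{Main obstacle.} The genuinely delicate step is establishing, in Part 1, that when $X$ is neither conclusive nor unfoldable ($X_{n-1}\not\leq\alpha$), the element $x:=\mu F$ witnesses \textbf{Induction} — specifically verifying $X_{n-1}\not\leq\mu F$. This hinges on the interaction between the prefixed-point condition $FX_i\leq X_{i+1}$, the bound $X_i\leq\alpha$ for $i\leq n-2$, the hypothesis $\mu F\leq\alpha$, and non-conclusiveness. The key observation to nail down is that if $X_{n-1}\leq\mu F$, then since $\mu F\leq\alpha$ we would get $X_{n-1}\leq\alpha$, i.e. \textbf{Unfold} would be enabled — directly contradicting our standing assumption. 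So in fact $X_{n-1}\not\leq\mu F$ is immediate from $X_{n-1}\not\leq\alpha$ and $\mu F\leq\alpha$, and the "obstacle" dissolves: the real content is just assembling these three assumptions in the right order. I would present Part 1 in exactly this streamlined form, with the fallback $x$-choices mentioned only if a referee wants robustness.
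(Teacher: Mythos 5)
Your final streamlined argument is correct and is exactly the paper's proof: when \textbf{Valid} and \textbf{Unfold} are both disabled, take $x:=\mu F$ in \textbf{Induction}, noting $F(X_{k-1}\wedge\mu F)\leq F(\mu F)=\mu F$ and that $X_{n-1}\leq\mu F$ would contradict $X_{n-1}\not\leq\alpha$ via $\mu F\leq\alpha$; Part~2 is the same routine check the paper leaves implicit. (The one wrinkle --- your explicit choice $k:=n-1$ needs $n\geq 3$ since the rule demands $k\geq 2$, so length-$2$ sequences are not literally covered --- is present equally in the paper's one-line proof, so it is not a defect of your argument relative to the paper's.)
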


  \begin{theorem} \label{thm:posi_kt}
    Let $L,F,\alpha$ be as in  Def.~\ref{def:lfpOverapprox}.
  Assume that $\le$ in $L$ is well-founded and $\mu F \leq \alpha$.
 Then, any non-terminating run of positive LT-PDR converges to a \KTomega\ witness
 \kori{(meaning that it gives a \KTomega\ witness in $\omega$-steps)}.
 Moreover, if there is no strictly increasing $\omega$-chain bounded by $\alpha$ in $L$, then positive LT-PDR is strongly terminating.
 \qed
 \end{theorem}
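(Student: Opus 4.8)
The plan is to analyze a non-terminating run of positive LT-PDR as an $\omega$-chain in the $\omega$-cpo of KT sequences furnished by Thm.~\ref{thm:KTseqCPO}, and extract its supremum. First I would observe that, since $\mu F\le\alpha$ is assumed, Lem.~\ref{lem:kt_order} guarantees that at every step at least one rule is enabled, so a run of the algorithm never gets stuck: either it terminates (via \textbf{Valid}) or it produces an infinite sequence of KT sequences $X^{(0)}\preceq X^{(1)}\preceq\cdots$ in which consecutive terms are related by \textbf{Unfold} or \textbf{Induction} and, by the second part of Lem.~\ref{lem:kt_order}, are strictly increasing, $X^{(i)}\neq X^{(i+1)}$. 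Next I would take the supremum $X^{(\infty)}:=\bigvee_i X^{(i)}$ in the $\omega$-cpo of Thm.~\ref{thm:KTseqCPO}; the content of that theorem is precisely that this supremum is either a KT sequence again or a \KTomega{} witness. I would rule out the first case: a run applies \textbf{Unfold} infinitely often (otherwise, after the last \textbf{Unfold}, the length $n$ is fixed and only \textbf{Induction} fires, producing a strictly decreasing $\omega$-chain in the well-founded lattice $L$ in some coordinate---contradiction), so the lengths $n$ of $X^{(i)}$ are unbounded, forcing $X^{(\infty)}$ to have length $\omega$; and the closure properties $F^\#X^{(\infty)}\le X^{(\infty)}\le\Delta\alpha$ pass to the limit because $F$ is $\omega$-continuous and the conditions in Def.~\ref{def:kt_sequence} (with the caveat that the $\alpha$-bound excludes only the last, now vanished, entry) are preserved under $\preceq$-suprema. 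Hence $X^{(\infty)}$ is a \KTomega{} witness, which is the first claim.

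For the strong-termination claim I would argue by contradiction: suppose there is a non-terminating run. By the first part, it yields a \KTomega{} witness $X^{(\infty)}=(X^{(\infty)}_0\le X^{(\infty)}_1\le\cdots)$ with $X^{(\infty)}_i\le\alpha$ for all $i$ (the $\Delta\alpha$ bound is now on every entry, since the limit chain is infinite). This is a weakly increasing $\omega$-chain bounded by $\alpha$. Under the extra hypothesis that no strictly increasing $\omega$-chain bounded by $\alpha$ exists, this chain must be eventually constant, say $X^{(\infty)}_j=X^{(\infty)}_{j+1}=\cdots$ from some $j$ on. I would then show this forces the run to be finite after all: because each $X^{(i)}$ is a $\preceq$-approximant of $X^{(\infty)}$ and the tail of $X^{(\infty)}$ is stationary, once the run has made $X^{(i)}$ long enough (length $>j+1$) and the $j$-th and $(j{+}1)$-th entries have descended to their limit values, the \textbf{Valid} guard $X_{j+1}\le X_j$ is satisfied---and one must check that the run actually reaches that state in finitely many steps, which follows because in each coordinate the values form a decreasing chain in the well-founded $L$ and hence stabilize after finitely many \textbf{Induction} steps, while \textbf{Unfold} can only be applied finitely often before the relevant prefix is fully built (else we again get an unbounded strictly-increasing-in-length chain whose limit would not be eventually constant). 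This contradicts non-termination, so the algorithm is strongly terminating.

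The main obstacle I anticipate is the bookkeeping in the last step: making precise the claim that a non-terminating run ``reaches'' a configuration satisfying the \textbf{Valid} guard. One has to combine two finiteness facts---well-foundedness bounds the number of \textbf{Induction} applications that strictly decrease a fixed coordinate, and the eventual-constancy of $X^{(\infty)}$ bounds how long \textbf{Unfold} keeps firing---and argue that together they bound the total length of the run. A clean way to do this is to define a rank function on KT sequences (e.g. into a well-founded order built from the lengths together with the descending coordinate values up to index $j+1$) that strictly decreases, or stays bounded while another component decreases, under \textbf{Unfold} and \textbf{Induction}, so that an infinite run is impossible; verifying that \textbf{Unfold} and \textbf{Induction} both respect this rank is the delicate part. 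The rest---limit-closure of the witness conditions and the non-stuck property---is routine given Lem.~\ref{lem:kt_order}, Thm.~\ref{thm:KTseqCPO}, and $\omega$-continuity of $F$.
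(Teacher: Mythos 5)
Your proposal is correct and follows essentially the paper's own route: the first claim is obtained exactly as in the paper (well-foundedness forces \textbf{Unfold} to fire infinitely often, so the lengths are unbounded and the supremum in the $\omega$-cpo of Thm.~\ref{thm:KTseqCPO} must be a \KTomega{} witness), and the second claim rests on the same two facts the paper uses, namely that decreasing chains in a well-founded $L$ stabilize coordinatewise and that the limit chain bounded by $\alpha$ has two equal consecutive entries. The only cosmetic difference is that you argue directly that the \textbf{Valid} guard $X_{j+1}\le X_j$ is eventually enabled, whereas the paper phrases the same stabilization argument contrapositively as an infinite-descent contradiction; your worry about needing a rank function is unnecessary, since coordinatewise stabilization already suffices.
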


\subsection{Negative PDR: Sequential Negative Witnesses
  \conf{70}}\label{sec:neg}
We next introduce \emph{Kleene sequences} as a
lattice-theoretic counterpart of
\emph{proof obligations} in the standard PDR. Kleene sequences
represent a chain of sufficient conditions to conclude that certain unsafe  states are reachable.

\begin{definition}[Kleene sequence] \label{def:kleene_sequence}
  Let $L,F,\alpha$ be as in  Def.~\ref{def:lfpOverapprox}. A \emph{Kleene sequence}
  for the LFP-OA problem $\mu F \leq^? \alpha$ is a finite sequence
  $(C_0, \dots, C_{n-1})$, for $n \geq 0$ ($C$ is empty if $n=0$), satisfying
  \begin{enumerate}
  \item \label{item:cfc} $C_j \leq FC_{j-1}$ for each
    $1 \leq j \leq n-1$;
  \item \label{item:cn} $C_{n-1} \not \leq \alpha$.
  \end{enumerate}
  A Kleene sequence $(C_0, \dots, C_{n-1})$ is \emph{conclusive} if
  $C_0 = \bot$.
  We may use $i \ (0 \leq i \leq n)$ instead of $0$ as the starting
  index of the Kleene sequence $C$.
 \end{definition}
When we have a Kleene sequence $C=(C_{0},\dots,C_{n-1})$, the
chain of implications $(C_{j}\le F^{j}\bot) \implies (C_{j+1}\le F^{j+1}\bot)$ hold for $0 \leq j < n-1$.
Therefore when $C$ is conclusive, $C_{n-1}$ is a
Kleene witness (Cor.~\ref{cor:kt_kleene}.\ref{item:cor:kt_kleene2}).

\begin{proposition} \label{prop:unsafe}
  Let $L,F,\alpha$ be as in  Def.~\ref{def:lfpOverapprox}.
   There exists a Kleene (negative)
  witness if and only if there exists a conclusive Kleene sequence.
 \end{proposition}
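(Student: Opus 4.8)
The plan is to prove Proposition~\ref{prop:unsafe} by establishing both directions of the equivalence, closely mirroring the structure of the proof of Proposition~\ref{prop:safe}. Recall from Corollary~\ref{cor:kt_kleene}.\ref{item:cor:kt_kleene2} that a Kleene (negative) witness is an element $x \in L$ together with an $n \in \mathbb{N}$ such that $x \leq F^n\bot$ and $x \not\leq \alpha$.

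\medskip
\noindent
\textbf{($\Leftarrow$):} Suppose there is a conclusive Kleene sequence $C = (C_0, \dots, C_{n-1})$, so that $C_0 = \bot$, $C_j \leq FC_{j-1}$ for $1 \leq j \leq n-1$, and $C_{n-1} \not\leq \alpha$. I would first argue by induction on $j$ that $C_j \leq F^j\bot$ for all $0 \leq j \leq n-1$: the base case $j=0$ is exactly $C_0 = \bot = F^0\bot$, and the inductive step uses $C_{j+1} \leq FC_j \leq F(F^j\bot) = F^{j+1}\bot$, where the second inequality is monotonicity of $F$ (which holds since $F$ is $\omega$-continuous). This is the chain of implications already noted in the text just before the proposition. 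Taking $x := C_{n-1}$ and this value of $n-1$, we have $x \leq F^{n-1}\bot$ and $x \not\leq \alpha$, so $x$ is a Kleene witness.

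\medskip
\noindent
\textbf{($\Rightarrow$):} Suppose there is a Kleene witness, i.e.\ some $n \in \mathbb{N}$ and $x \in L$ with $x \leq F^n\bot$ and $x \not\leq \alpha$. I would construct a conclusive Kleene sequence of length $n+1$ by setting $C_j := F^j\bot$ for $0 \leq j \leq n-1$ and $C_n := x$. Then $C_0 = \bot$ (conclusiveness); the condition $C_j \leq FC_{j-1}$ holds with equality for $1 \leq j \leq n-1$ since $F^j\bot = F(F^{j-1}\bot)$, and holds for $j = n$ because $C_n = x \leq F^n\bot = F(F^{n-1}\bot) = FC_{n-1}$; finally $C_n = x \not\leq \alpha$ gives condition~\ref{item:cn}. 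Hence $C$ is a conclusive Kleene sequence. (A small degenerate case to mention: if $n = 0$, then $x \leq \bot$, so take the sequence $(\bot)$ directly, or equivalently $(x)$ with $x = \bot$; it is conclusive and $\bot \not\leq \alpha$.)

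\medskip
There is no real obstacle here: the proposition is essentially a repackaging of the Kleene characterization in Corollary~\ref{cor:kt_kleene}.\ref{item:cor:kt_kleene2} together with the definitional unfolding of ``conclusive Kleene sequence,'' and the only ingredient beyond bookkeeping is the easy induction showing $C_j \leq F^j\bot$, which relies solely on monotonicity of $F$. The one point to be careful about is the indexing/degenerate-length conventions (empty sequences, the shift allowed in the starting index per Definition~\ref{def:kleene_sequence}), but these do not affect the substance of the argument.
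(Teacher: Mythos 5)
Your proof is correct and follows essentially the same route as the paper: the ($\Leftarrow$) direction is the identical induction showing $C_j \leq F^j\bot$ and taking $C_{n-1}$ as the witness, and the ($\Rightarrow$) direction builds the sequence of iterates $F^j\bot$ (the paper uses $(\bot, F\bot, \dots, F^n\bot)$ directly, observing $F^n\bot \not\leq \alpha$ since $x \leq F^n\bot$ and $x \not\leq \alpha$, whereas you append $x$ itself as the last entry---a trivial variation). Your care with the $n=0$ degenerate case is a minor bonus the paper omits.
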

 \begin{proof}
  ($\Rightarrow$): If there exists a Kleene witness $x$ such that
  $x \leq F^n \bot$ and $x \not \leq \alpha$,
  $(\bot, F\bot, \dots, F^n \bot)$ is a conclusive Kleene sequence.
  ($\Leftarrow$): Assume there exists a conclusive Kleene sequence
  $C$. Then $C_{n-1}$ satisfies $C_{n-1} \leq F^{n-1}\bot$ and
  $C_{n-1} \not \leq \alpha$ because of
  $C_{n-1} \leq FC_{n-2} \leq \cdots \leq F^{n-1}C_0 = F^{n-1}\bot$ and Def.~\ref{def:kleene_sequence}.\ref{item:cn}.
  \qed
 \end{proof}

This proposition suggests the following algorithm to negatively answer to
the LFP-OA problem. It searches for a conclusive
Kleene sequence. The algorithm updates a Kleene sequence until its
first component becomes $\bot$.
\begin{definition}[negative LT-PDR]
  Let $L,F,\alpha$ be as in  Def.~\ref{def:lfpOverapprox}.
  \emph{Negative LT-PDR} is
  the algorithm shown in Alg.~\ref{alg:nega_pdr},
  which says `False' to the LFP-OA problem $\mu F \leq^? \alpha$ if successful.
\end{definition}
The rules are designed by the following principles.

 \textbf{Candidate} initializes $C$ with only one
  element $x$. The element $x$ has to be chosen such that
  $x \not \leq \alpha$ to ensure
  Def.~\ref{def:kleene_sequence}.\ref{item:cn}.

 \textbf{Model} is applied when the current Kleene
  sequence $C$ is conclusive.

 \textbf{Decide} prepends $x$ to $C$.
 The condition $C_0 \leq Fx$ ensures
  Def.~\ref{def:kleene_sequence}.\ref{item:cfc}.

\begin{theorem} \label{thm:negative}
  Let $L,F,\alpha$ be as in  Def.~\ref{def:lfpOverapprox}.
  \begin{enumerate}
  \item Negative LT-PDR is sound, i.e.\ if it outputs `False' then $\mu F \not\leq \alpha$.
  \item Assume $\mu F \not \le \alpha$ is true. Then negative LT-PDR
    is weakly terminating
    (meaning that suitable choices \kori{of $x$ when applying rules \textbf{Candidate} and \textbf{Decide}} make the algorithm terminate).
    \qed
  \end{enumerate}
\end{theorem}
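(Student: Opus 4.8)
The plan is to prove the two parts of Theorem~\ref{thm:negative} separately, mirroring the structure of the analogous result for positive LT-PDR (Thm.~\ref{thm:positive_sound_terminate}) and leaning on Prop.~\ref{prop:unsafe}.

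For \textbf{soundness}, the key observation is a loop invariant: every data value $C$ that arises during a run of negative LT-PDR is in fact a Kleene sequence in the sense of Def.~\ref{def:kleene_sequence}. I would verify this by induction on the number of rule applications. The initial value $C = ()$ is the empty sequence, which is a Kleene sequence with $n = 0$. For the inductive step I check each rule: \textbf{Candidate} replaces $C$ by the one-element sequence $(x)$ with $x \not\leq \alpha$, and condition~\ref{item:cn} of Def.~\ref{def:kleene_sequence} holds by the side condition while condition~\ref{item:cfc} is vacuous; \textbf{Decide} prepends $x$ with $C_0 \leq Fx$, so the new sequence $(x, C_0, \dots, C_{n-1})$ still satisfies~\ref{item:cfc} (the newly required inequality is exactly the side condition, the others are inherited) and~\ref{item:cn} is unchanged since the last element is untouched; \textbf{Model} does not modify $C$. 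Hence when the algorithm returns `False' via \textbf{Model}, the side condition $C_0 = \bot$ together with the invariant says $C$ is a \emph{conclusive} Kleene sequence, so by Prop.~\ref{prop:unsafe} there is a Kleene (negative) witness, and by Cor.~\ref{cor:kt_kleene}.\ref{item:cor:kt_kleene2} we conclude $\mu F \not\leq \alpha$.

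For \textbf{weak termination} under the hypothesis $\mu F \not\leq \alpha$, I would exhibit an explicit terminating run using informed guesses, exactly as the $(\mu F \leq \mu F)$ trick works on the positive side. By Cor.~\ref{cor:kt_kleene}.\ref{item:cor:kt_kleene2}, $\mu F \not\leq \alpha$ gives some $n \in \mathbb{N}$ and $x \in L$ with $x \leq F^n\bot$ and $x \not\leq \alpha$; equivalently (taking $x = F^n\bot$, which is $\not\leq\alpha$ since $x \leq F^n\bot$ and $x \not\leq \alpha$) we may as well use the initial-chain element $F^n\bot$. The strategy: first apply \textbf{Candidate} choosing $x = F^n\bot$, obtaining $C = (F^n\bot)$; this is legal since $F^n\bot \not\leq \alpha$. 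Then apply \textbf{Decide} repeatedly, at the $k$-th step choosing the witness $F^{n-k}\bot$, which satisfies $F^{n-k+1}\bot \leq F(F^{n-k}\bot)$ trivially (in fact with equality), so the side condition $C_0 \leq Fx$ is met. After $n$ applications of \textbf{Decide} the sequence has become $(\bot, F\bot, \dots, F^n\bot)$ with $C_0 = \bot$, so \textbf{Model} fires and returns `False'. This is a finite terminating run, establishing weak termination.

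The main obstacle is largely bookkeeping rather than conceptual: one must be careful that the side conditions in \textbf{Candidate} and \textbf{Decide} are \emph{exactly} what is needed to preserve the two clauses of Def.~\ref{def:kleene_sequence}, and in the termination argument one must confirm that the chosen informed guesses are always among the elements the nondeterministic rules are allowed to pick (i.e.\ the side conditions hold for those specific choices). A minor subtlety in the soundness invariant is the index shift when \textbf{Decide} prepends an element; I would handle this by stating the invariant for a sequence indexed from an arbitrary starting index $i$, matching the remark after Def.~\ref{def:kleene_sequence}, so that the inductive step goes through cleanly. No deeper difficulty is expected, since unlike the positive case there is no need to invoke well-foundedness or chain conditions for weak termination.
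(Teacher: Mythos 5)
Your proposal is correct and follows essentially the same route as the paper: part (1) is the observation that the rules preserve Kleene-sequencehood so that \textbf{Model} yields a conclusive Kleene sequence, whence Prop.~\ref{prop:unsafe} and Cor.~\ref{cor:kt_kleene} give $\mu F \not\leq \alpha$; part (2) is the informed run that builds $(\bot, F\bot, \dots, F^n\bot)$ from an $n$ with $F^n\bot \not\leq \alpha$ and then fires \textbf{Model}. Your write-up merely makes explicit the loop invariant and the legality of each choice, which the paper's terse proof leaves implicit.
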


\subsection{LT-PDR: Integrating Positive and Negative}\label{sec:int}
We have introduced two simple PDR algorithms, called positive LT-PDR
(\S{}\ref{sec:pos}) and negative LT-PDR (\S{}\ref{sec:neg}).  They are
so simple that they have potential inefficiencies. Specifically, in positive LT-PDR, it
is unclear that how we choose $x \in L$ in \textbf{Induction}, while
in negative LT-PDR, it may \kori{easily} diverge because the rules
\textbf{Candidate} and \textbf{Decide} may choose $x\in L$ that would
not lead to a conclusive Kleene sequence.  We resolve these inefficiencies by
combining positive LT-PDR and negative LT-PDR.  The combined PDR
algorithm is called LT-PDR, and it is a lattice-theoretic
generalization of conventional PDR.

Note that negative LT-PDR is only weakly terminating. Even worse, it is easy to make it diverge---after a choice of $x$ in \textbf{Candidate} or \textbf{Decide} such that $x\not\le \mu F$, no continued execution of the algorithm can lead to a conclusive Kleene sequence. For deciding $\mu F \leq^? \alpha$  efficiently, therefore, it is crucial to detect such useless Kleene sequences.

The core fact that underlies the efficiency of PDR is the following proposition, which says that a KT sequence (in positive LT-PDR) can quickly tell that a Kleene sequence (in negative LT-PDR) is useless. This fact is crucially used for many rules in LT-PDR (Def.~\ref{def:lt-pdr}).
\begin{proposition} \label{prop:C_X} Let $C=(C_i, \dots, C_{n-1})$ be a
  Kleene sequence $(2 \leq n, 0 < i \leq n-1)$ and
  $X=(X_0 \leq \cdots \leq X_{n-1})$ be a KT sequence. Then
  \begin{enumerate}
  \item \label{item:c_x} $C_i \not \leq X_i$ implies that $C$ cannot be extended to a
    conclusive one,
    that is, there does not exist
    $C_0, \dots, C_{i-1}$ such that $(C_0, \dots, C_{n-1})$ is
    conclusive.
  \item \label{item:c_fx} $C_i \not \leq F X_{i-1}$ implies that $C$
    cannot be extended to a conclusive one.
  \item \label{item:n-step} There is no conclusive Kleene sequence with length
    $n-1$.  \qed
  \end{enumerate}
\end{proposition}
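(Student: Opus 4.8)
The plan is to prove the three claims of Proposition~\ref{prop:C_X} in the order (\ref{item:c_x}), (\ref{item:c_fx}), (\ref{item:n-step}), since each essentially builds on the previous one. The common engine is the chain of implications noted just after Def.~\ref{def:kleene_sequence}: if $C=(C_0,\dots,C_{n-1})$ is a Kleene sequence then from $C_j \le FC_{j-1}$ and monotonicity of $F$ one gets $C_j \le F^j C_0$ for all $j$; and, on the positive side, a KT sequence $X=(X_0\le\cdots\le X_{n-1})$ satisfies $FX_i \le X_{i+1}$ together with $X_i \le \alpha$ for $i \le n-2$. The strategy for (\ref{item:c_x}) is contrapositive: suppose $C$ did extend to a conclusive Kleene sequence $(C_0,\dots,C_{n-1})$ with $C_0=\bot$. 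Then for every index $j$ we have $C_j \le F^j\bot \le F^j X_0$ — wait, more directly, I would show by induction on $j$ that $C_j \le X_j$ for $0 \le j \le n-1$. Base case: $C_0 = \bot \le X_0$. Inductive step: assuming $C_{j-1}\le X_{j-1}$, monotonicity gives $C_j \le FC_{j-1} \le FX_{j-1} \le X_{j+1}$... careful, the KT-sequence condition reads $FX_{j-1} \le X_j$ (with the shifted indexing of $F^\#_n$, condition \ref{item:fn-alg} of Def.~\ref{def:kt_sequence} is $FX_i \le X_{i+1}$, i.e.\ $FX_{j-1}\le X_j$). So $C_j \le FC_{j-1}\le FX_{j-1}\le X_j$, closing the induction. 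In particular $C_i \le X_i$, contradicting the hypothesis $C_i \not\le X_i$. Hence no such extension exists.

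For (\ref{item:c_fx}), I would similarly argue by contraposition: if $C$ extends to a conclusive Kleene sequence, then by part (\ref{item:c_x}) — or directly by the same induction — we have $C_{i-1}\le X_{i-1}$, and the Kleene condition at index $i$ gives $C_i \le FC_{i-1}\le FX_{i-1}$ by monotonicity of $F$. This contradicts $C_i \not\le FX_{i-1}$. So this part is an immediate corollary of the induction established in (\ref{item:c_x}); I would phrase it as "by the same argument as in (\ref{item:c_x}), any conclusive extension would satisfy $C_{i-1}\le X_{i-1}$, whence $C_i\le FC_{i-1}\le FX_{i-1}$."

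For (\ref{item:n-step}), the claim is that there is no conclusive Kleene sequence of length $n-1$ at all (given that a KT sequence $X$ of length $n$ exists). The key point is that the KT sequence certifies $\mu F \le \alpha$ restricted to the first $n-1$ approximants, or more precisely: a conclusive Kleene sequence $(D_0,\dots,D_{n-2})$ of length $n-1$ would yield, by Prop.~\ref{prop:unsafe} / its proof, a Kleene witness $D_{n-2}$ with $D_{n-2}\le F^{n-2}\bot$ and $D_{n-2}\not\le\alpha$. On the other hand, the initial-chain comparison gives $F^{n-2}\bot \le X_{n-2}$: indeed one shows $F^j\bot \le X_j$ by induction ($F^0\bot=\bot\le X_0$; $F^{j}\bot = F(F^{j-1}\bot)\le FX_{j-1}\le X_j$), so with $j=n-2$, $D_{n-2}\le F^{n-2}\bot \le X_{n-2}\le\alpha$, using the KT-sequence bound $X_{n-2}\le\alpha$ from Def.~\ref{def:kt_sequence}.\ref{item:xn}. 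This contradicts $D_{n-2}\not\le\alpha$. So no length-$(n-1)$ conclusive Kleene sequence exists.

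I do not expect a genuine obstacle here; the whole proposition is a bookkeeping exercise once the single induction $C_j \le X_j$ (and the companion $F^j\bot \le X_j$) is set up correctly. The one place to be careful is the index shift built into $F^\#_n$ — one must consistently use $FX_{i}\le X_{i+1}$ rather than $FX_i \le X_i$ — and the constraint $0 < i \le n-1$, $2\le n$ in the hypothesis, which guarantees the indices $i-1$ and $n-2$ appearing above are legitimate (in particular $X_{i-1}$, $X_{n-2}$ are defined and the bound $X_{n-2}\le\alpha$ applies since $n-2 \le n-2$). I would state the induction once as a lemma-internal claim and invoke it for all three parts to keep the write-up short.
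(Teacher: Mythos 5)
Your proposal is correct and follows essentially the same route as the paper: the paper factors the argument through two lemmas (Lem.~\ref{lem:x_init}, that $F^j\bot\le X_j$, and Lem.~\ref{lem:C_X}, that extendability is equivalent to $C_i\le F^i\bot$ and implies $C_i\le F^jX_{i-j}$), whereas you inline both into a single direct induction $C_j\le X_j$ (resp.\ $F^j\bot\le X_j$), which is the composite of the same two inequalities. The index bookkeeping you flag ($FX_{j-1}\le X_j$, $X_{n-2}\le\alpha$, $0<i$) is exactly where care is needed, and you handle it correctly.
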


The proof relies on the following lemmas.
\begin{lemma} \label{lem:x_init} Any KT sequence
  $(X_0 \leq \cdots \leq X_{n-1})$ over-approximates the initial
  sequence: $ F^i\bot\le X_i$ holds for any $i$ such that $0\le i\le n-1$.
  \qed
\end{lemma}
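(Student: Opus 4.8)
The plan is a straightforward induction on the index $i$, using only the monotonicity of $F$ and the prefixed-point condition that defines a KT sequence. Recall that a KT sequence $(X_0 \le \cdots \le X_{n-1})$ satisfies, by Def.~\ref{def:kt_sequence}.\ref{item:fn-alg}, the inequalities $FX_i \le X_{i+1}$ for every $i \in [0, n-2]$ (equivalently, $X$ is a prefixed point of $F^\#_n$); the bound $X_{n-2} \le \alpha$ from Def.~\ref{def:kt_sequence}.\ref{item:xn} is not needed here.

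First I would handle the base case $i = 0$: since $F^0\bot = \bot$ and $\bot \le X_0$ by definition of $\bot$, the claim holds. For the inductive step, I would assume $F^i\bot \le X_i$ for some $i$ with $0 \le i \le n-2$ and derive $F^{i+1}\bot \le X_{i+1}$ as follows: applying $F$ to both sides of the induction hypothesis and using monotonicity of $F$ gives $F^{i+1}\bot = F(F^i\bot) \le FX_i$, and then the prefixed-point condition $FX_i \le X_{i+1}$ finishes the step by transitivity.

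The only point requiring a moment's care is the bookkeeping on the index range: the induction hypothesis is invoked for $i$ up to $n-2$, and correspondingly the prefixed-point inequality $FX_i \le X_{i+1}$ is available precisely for $i \in [0, n-2]$, so the induction reaches all of $X_0, \dots, X_{n-1}$ and no further hypothesis on $F$ (such as $\omega$-continuity) or on $\alpha$ is used. I do not expect any genuine obstacle; this lemma is purely an unfolding of the definition of KT sequence together with monotonicity, and it serves as the elementary building block for Prop.~\ref{prop:C_X}.
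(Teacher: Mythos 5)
Your proof is correct, and it is a more elementary route than the one the paper takes. The paper's argument goes through the lattice of chains: in the proof of Thm.~\ref{thm:safe_witness} it is observed that the initial chain of $F$ is exactly $\mu F^\#$, and then Knaster--Tarski (the least fixed point is below every prefixed point) applied to $F^\#_n$ on $[n,L]$ gives the componentwise bound $F^i\bot \le X_i$ all at once, since a KT sequence is by definition a prefixed point of $F^\#_n$. Your induction on $i$ is precisely what one obtains by unwinding that abstract argument componentwise, so the two proofs rest on the same fact; what your version buys is self-containedness (it uses only monotonicity of $F$, which the paper guarantees since $\omega$-continuous maps on complete lattices are monotone, plus the condition $FX_i\le X_{i+1}$ from Def.~\ref{def:kt_sequence}), whereas the paper's version buys brevity by reusing the Galois-connection machinery already set up for Thm.~\ref{thm:safe_witness}. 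Your index bookkeeping is also right: the inequality $FX_i\le X_{i+1}$ is available exactly for $i\in[0,n-2]$, which is exactly what the inductive step consumes to reach $X_{n-1}$.
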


\begin{lemma} \label{lem:C_X}
  Let $C=(C_i, \dots, C_{n-1})$ be a
  Kleene sequence $(0 < i \leq n-1)$ and
  $(X_0 \leq \cdots \leq X_{n-1})$ be a KT sequence.  The following
   satisfy $1 \Leftrightarrow 2 \Rightarrow 3$.
  \begin{enumerate}
  \item \label{item:possible} The Kleene sequence $C$
    can be extended to a conclusive one.
  \item \label{item:fibot}$C_i \leq F^i\bot$.
  \item \label{item:fjx}$C_i \leq F^j X_{i-j}$ for each $j$ with $0 \leq j \leq i$.
  \qed
  \end{enumerate}
\end{lemma}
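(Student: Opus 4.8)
The plan is to prove the cycle $1\Leftrightarrow 2$ first and then the implication $2\Rightarrow 3$ (which subsumes the special case $3\Rightarrow 2$ at $j=0$ trivially, so the remaining content of $3$ is the stronger statement about all $j$). For the direction $1\Rightarrow 2$: suppose $C=(C_i,\dots,C_{n-1})$ extends to a conclusive Kleene sequence $(C_0,\dots,C_{n-1})$ with $C_0=\bot$. By Def.~\ref{def:kleene_sequence}.\ref{item:cfc} we have $C_{j}\le FC_{j-1}$ for $1\le j\le i$, and by monotonicity of $F$ a short induction on $j$ gives $C_j\le F^j\bot$; taking $j=i$ yields $C_i\le F^i\bot$, which is $2$. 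For $2\Rightarrow 1$: given $C_i\le F^i\bot$, I would explicitly exhibit a conclusive extension by prepending the initial-chain prefix, i.e.\ set $C_j:=F^j\bot$ for $0\le j\le i-1$. Then $C_0=\bot$ as required, the conditions $C_{j+1}\le FC_j$ hold for $1\le j\le i-1$ by $F^{j+1}\bot = F(F^j\bot)$, and the junction condition $C_i\le F C_{i-1}=F^i\bot$ is exactly the hypothesis $2$; the condition Def.~\ref{def:kleene_sequence}.\ref{item:cn} ($C_{n-1}\not\le\alpha$) is inherited from the original $C$. Hence $1\Leftrightarrow 2$.

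For $2\Rightarrow 3$, the claim is that $C_i\le F^i\bot$ implies $C_i\le F^jX_{i-j}$ for all $0\le j\le i$. Here I would invoke Lemma~\ref{lem:x_init}, which gives $F^{i-j}\bot\le X_{i-j}$ for each such $j$. Applying the monotone map $F^j$ to this inequality yields $F^j(F^{i-j}\bot)=F^i\bot\le F^jX_{i-j}$. Combining with $2$, namely $C_i\le F^i\bot$, transitivity gives $C_i\le F^jX_{i-j}$, which is $3$. (The edge case $j=0$ just restates $C_i\le X_i$, and $j=i$ gives $C_i\le F^iX_0=F^i\bot$, consistent with $2$.)

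The argument is essentially a bookkeeping exercise combining monotonicity of $F$, the two defining conditions of a Kleene sequence, and Lemma~\ref{lem:x_init}; I do not anticipate a genuine obstacle. The one place to be careful is the index ranges in the $2\Rightarrow 1$ construction: one must check that prepending $F^0\bot,\dots,F^{i-1}\bot$ does not disturb the already-present tail $C_i,\dots,C_{n-1}$ and that the single interface inequality at position $i$ is precisely what hypothesis $2$ supplies, so that no additional choices or continuity assumptions are needed. Everything else follows by a one-line induction or a single application of monotonicity, and the well-definedness of $F^i\bot$ is automatic since $L$ is a complete lattice and $F$ is monotone.
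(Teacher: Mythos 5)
Your proposal is correct and follows essentially the same route as the paper's proof: the chain $C_i\le FC_{i-1}\le\cdots\le F^iC_0=F^i\bot$ for $1\Rightarrow 2$, the explicit extension $(\bot,F\bot,\dots,F^{i-1}\bot,C_i,\dots,C_{n-1})$ for $2\Rightarrow 1$, and Lemma~\ref{lem:x_init} plus monotonicity of $F^j$ for $2\Rightarrow 3$. Your version merely spells out the index bookkeeping that the paper leaves implicit.
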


Using the above lattice-theoretic properties, we combine positive and
negative LT-PDRs into the following {\em LT-PDR} algorithm. It is
also a lattice-theoretic generalization of the original PDR
algorithm.  The combination
exploits the mutual
relationship between KT sequences and Kleene sequences,
exhibited as Prop.~\ref{prop:C_X}, for narrowing down choices in positive and negative LT-PDRs.
\begin{definition}[LT-PDR] \label{def:lt-pdr}
  Given a complete lattice $L$,
  an $\omega$-continuous function $F: L \to L$,
  and an element $\alpha \in L$,
  \emph{LT-PDR} is the algorithm
  shown in Alg.~\ref{alg:pdr} for the LFP-OA problem
  $\mu F \leq^? \alpha$.

\end{definition}
The rules are designed by the following principles.

(\textbf{Valid}, \textbf{Unfold}, and \textbf{Induction}): These rules
  are almost the same as in positive LT-PDR.
  In \textbf{Unfold},
  we reset the Kleene sequence because of
  Prop.~\ref{prop:C_X}.\ref{item:n-step}.
  Occurrences of \textbf{Unfold} punctuate an execution of the algorithm: between two occurrences of  \textbf{Unfold}, a main goal (towards a negative conclusion)  is to construct a conclusive Kleene sequence with the same length as the  $X$.

(\textbf{Candidate}, \textbf{Model}, and \textbf{Decide}): These rules
  have many similarities to those in negative LT-PDR. Differences are as follows:
  the \textbf{Candidate} and \textbf{Decide} rules impose $x \leq X_i$
  on the new element $x$ in $(x, C_{i+1}, \dots, C_{n-1})$ because
  Prop.~\ref{prop:C_X}.\ref{item:c_x} tells us that other choices are useless.  In \textbf{Model}, we only need to check
  whether $C_1$ is defined instead of $C_0 = \bot$. Indeed, since $C_1$ is
  added in \textbf{Candidate} or \textbf{Decide},
  $C_1 \leq X_1 = F\bot$ always holds.
  Therefore, $2 \Rightarrow 1$ in Lem.~\ref{lem:C_X} shows that $(\bot, C_1, \dots, C_{n-1})$ is
  conclusive.

(\textbf{Conflict}): This new rule emerges from the combination of
  positive and negative LT-PDRs.  This rule is applied when
  $C_i \not \leq FX_{i-1}$, which confirms that the current $C$ cannot
  be extended to a conclusive one (Prop.~\ref{prop:C_X}.\ref{item:c_fx}).
  Therefore, we eliminate $C_i$ from $C$ and strengthen $X$ so that we
  cannot choose $C_i$ again, that is, so that $C_i \not \leq (X_i \land x)$.
  Let us explain how $X$ is strengthened.  The element $x$ has to be
  chosen so that $C_i \not \leq x$ and $F(X_{i-1} \land x) \leq x$.
  The former dis-inequality ensures
  the strengthened $X$ satisfies $C_i \not \leq (X_i \land x)$,
  and
  the latter
  inequality implies $F(X_{i-1} \land x) \leq x$.
  One can see that \textbf{Conflict} is \textbf{Induction} with additional condition $C_i \not \leq x$, which enhances so that the search space for $x$ is narrowed down using the Kleene sequence $C$.

Canonical choices
of $x\in L$ in \textbf{Candidate}, \textbf{Decide}, and \textbf{Conflict}
are $x:=X_{n-1}$, $x:=X_{i-1}$, and $x:=FX_{i-1}$, respectively.
However, there can be cleverer choices; e.g.~$x:= S\setminus(C_i\setminus FX_{i-1})$ in \textbf{Conflict} when $L=\Pf S$.

 \begin{lemma} \label{lem:config} Each rule of LT-PDR, when applied to
  a pair of a KT  and a Kleene sequence, yields a pair of a
  KT and a Kleene sequence.  \qed
 \end{lemma}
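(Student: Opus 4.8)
\textbf{Proof plan for Lemma~\ref{lem:config}.}
The plan is to check each of the seven rules of Alg.~\ref{alg:pdr} in turn, verifying that the output pair $(X';C')$ satisfies the two defining conditions of a KT sequence (Def.~\ref{def:kt_sequence}: $X'_{m-2}\le\alpha$, and $FX'_j\le X'_{j+1}$ for all $j$) and the two defining conditions of a Kleene sequence (Def.~\ref{def:kleene_sequence}: $C'_j\le FC'_{j-1}$, and $C'_{m-1}\not\le\alpha$), given that the input $(X;C)$ already satisfies all four. Since the rules split naturally into those that only touch $X$ (\textbf{Valid}, \textbf{Unfold}, \textbf{Induction}), those that only touch $C$ (\textbf{Candidate}, \textbf{Model}, \textbf{Decide}), and the mixed rule (\textbf{Conflict}), I would organize the verification along those lines; the paper's own ``principles'' paragraphs after Def.~\ref{def:lt-pdr} already record the key inequality behind each case, so the proof is essentially an assembly of these observations into a clean case analysis.

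First, the $X$-only rules. \textbf{Valid} returns without modifying the data, so there is nothing to check. For \textbf{Unfold}, $X$ is extended to $(X_0\le\dots\le X_{n-1}\le\top)$ and $C$ is reset to $()$; the empty Kleene sequence is trivially a Kleene sequence, the new last element $\top$ is $\ge$ the old one so the chain condition is preserved, $FX_{n-2}\le X_{n-1}\le\top$ gives the new prefixed-point instance, and the guard $X_{n-1}\le\alpha$ is exactly what is needed for condition~\ref{item:xn} of Def.~\ref{def:kt_sequence} at the new penultimate index. For \textbf{Induction}, $X$ is replaced by $X[X_j:=X_j\land x]_{2\le j\le k}$; monotonicity of $\land$ keeps the chain increasing, the length is unchanged so condition~\ref{item:xn} survives (note $X_{n-2}$ is untouched when $k<n-1$, and when $k=n-1$ one still has $X_{n-2}\land x\le X_{n-2}\le\alpha$), and the prefixed-point condition is the content of the cited computation in Appendix~\ref{ap:config}: the guard $F(X_{k-1}\land x)\le x$ together with $FX_{j-1}\le X_j$ yields $F(X_{j-1}\land x)\le X_j\land x$ at each relevant index. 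The Kleene sequence $C$ is carried through unchanged, hence still valid.

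Next, the $C$-only rules. \textbf{Model} returns, modulo the cosmetic replacement of $C_0$ by $\bot$; the discussion after Def.~\ref{def:lt-pdr} (via $2\Rightarrow1$ of Lem.~\ref{lem:C_X} and $C_1\le X_1=F\bot$) shows $(\bot,C_1,\dots,C_{n-1})$ is a bona fide conclusive Kleene sequence, so again nothing to verify for the output data structure. \textbf{Candidate} sets $C:=(x)$ with $x\not\le\alpha$ (and $x\le X_{n-1}$), a length-one Kleene sequence: the chain condition~\ref{item:cfc} is vacuous and condition~\ref{item:cn} is the guard; $X$ is untouched. \textbf{Decide} prepends $x$ with $C_i\le Fx$ (and $x\le X_{i-1}$), which is precisely condition~\ref{item:cfc} at the new front index, while the tail and the non-safety of the last element are inherited; $X$ is untouched.

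Finally, the mixed rule \textbf{Conflict}, which I expect to be the only real point of substance. Here $C_i$ is dropped (leaving $(C_{i+1},\dots,C_{n-1})$) and $X$ is strengthened to $X[X_j:=X_j\land x]_{2\le j\le i}$ using $x$ with $C_i\not\le x$ and $F(X_{i-1}\land x)\le x$. The $X$-side is handled exactly as in \textbf{Induction} (same guard shape, with $k=i$), so the strengthened $X$ is again a KT sequence. For the $C$-side, $(C_{i+1},\dots,C_{n-1})$ inherits the chain condition for indices $\ge i+2$ and inherits $C_{n-1}\not\le\alpha$; the only thing to note is that the new starting index is $i+1$ and no new constraint links it backward, so it is a well-formed Kleene sequence in the index-shifted sense allowed by Def.~\ref{def:kleene_sequence}. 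The subtle coupling — that after the update one still has $C_j\le X_j$ etc. for the \emph{remaining} $C_j$, as needed for subsequent rule applications — is not required for this lemma's statement (it only asks for ``a KT and a Kleene sequence''), so I would resist the temptation to prove more than is stated; the genuine obstacle, to the extent there is one, is simply being careful with the two boundary cases $k=n-1$ in \textbf{Induction} and $i=n-1$ in \textbf{Conflict}, where the modified index range abuts the penultimate element governed by condition~\ref{item:xn}.
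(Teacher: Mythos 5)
Your overall strategy---an elementary, rule-by-rule verification of the two conditions in Def.~\ref{def:kt_sequence} and Def.~\ref{def:kleene_sequence}---is sound and arguably more transparent than the paper's own proof, which treats \textbf{Unfold} via a Galois connection between $[n,L]$ and $[n+1,L]$ (truncation left adjoint to $\top$-padding, plus the lifting theorem of~\cite{HermidaJ98}) and treats \textbf{Induction}/\textbf{Conflict} via an adjunction calculation showing that the guard $F(X_{k-1}\land x)\le x$ is equivalent to $F^\#_n(X\land r\Delta x)\le r\Delta x$. Your handling of the Kleene-sequence side and of \textbf{Valid}, \textbf{Unfold}, \textbf{Candidate}, \textbf{Model}, and \textbf{Decide} is correct (the one slip---citing $FX_{n-2}\le X_{n-1}\le\top$ for the new prefixed-point instance in \textbf{Unfold}, where what is needed is the trivially true $FX_{n-1}\le\top$---is harmless).

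There is, however, a genuine gap in \textbf{Induction} (and hence in \textbf{Conflict}) at the lower junction between the untouched prefix $X_0,X_1$ and the modified block $X_2\land x,\dots,X_k\land x$. Your key inequality $F(X_{j-1}\land x)\le X_j\land x$ instantiated at $j=2$ concerns $F(X_1\land x)$, but the rule leaves $X_1$ unchanged, so what must actually be shown is $FX_1\le X_2\land x$---and, for the chain condition, $X_1\le X_2\land x$, which ``monotonicity of $\land$'' does not deliver. Both reduce to $X_1\le x$, and this can fail for an arbitrary KT sequence: take $L=2^{\{a,b\}}$, $F=\mathrm{id}$, $\alpha=\{a,b\}$, $X=(\{a\}\le\{a,b\}\le\{a,b\})$, $k=2$, $x=\{a\}$; the guard holds, yet the output $(\{a\},\{a,b\},\{a\})$ is not even a chain. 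So the lemma really needs the run invariant $X_0=\bot$ and $X_1=F\bot$, which every rule maintains from the initial configuration; under it, $X_1=F\bot\le F(X_{k-1}\land x)\le x$ follows from the guard and the junction closes. The paper's own proof quietly relies on the same invariant (its $X\land r\Delta x$ meets positions $0$ through $k$, which agrees with the rule's positions $2$ through $k$ only because $X_0=\bot$ and $F\bot\le x$), and it anchors condition~\ref{item:xn} of Def.~\ref{def:kt_sequence} at the initial configuration; your write-up should make this invariant explicit rather than claim the statement for arbitrary KT sequences.
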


 \begin{theorem}[correctness] \label{thm:pdrcor}
  LT-PDR is sound, i.e.~if it outputs `True'
  then $\mu F \leq \alpha$ holds, and if it outputs `False' then
  $\mu F \not \leq \alpha$ holds.  \qed
 \end{theorem}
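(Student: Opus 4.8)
The plan is to reduce soundness of LT-PDR to the soundness of its two constituent algorithms, positive LT-PDR (Thm.~\ref{thm:positive_sound_terminate}) and negative LT-PDR (Thm.~\ref{thm:negative}), by tracking what the data $(X;C)$ represents along any run. First I would invoke Lem.~\ref{lem:config}: along any run, $X$ is always a genuine KT sequence for $\mu F\le^?\alpha$ and $C$ is always a genuine Kleene sequence (with the shifted starting index $i$). This is the structural invariant that makes the two ``certificate'' readings below legitimate.

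For the \textbf{`True'} case: the output is produced only by the \textbf{Valid} rule, whose guard is that $X_{j+1}\le X_j$ for some $j<n-1$. By the invariant, $X$ is a KT sequence, and the guard says exactly that $X$ is \emph{conclusive} in the sense of Def.~\ref{def:kt_sequence}. By Prop.~\ref{prop:safe}, a conclusive KT sequence yields a \KTomega{} witness, hence a KT witness by Thm.~\ref{thm:safe_witness}, hence $\mu F\le\alpha$ by Cor.~\ref{cor:kt_kleene}.\ref{item:cor:kt_kleene1}. (Equivalently: this is literally the soundness half of Thm.~\ref{thm:positive_sound_terminate}, since the \textbf{Valid}/\textbf{Unfold}/\textbf{Induction} fragment of LT-PDR coincides with positive LT-PDR.)

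For the \textbf{`False'} case: the output is produced only by the \textbf{Model} rule, which returns the sequence $(\bot, C_1,\dots,C_{n-1})$ when $C_1$ is defined. Here I must check that this returned sequence really is a \emph{conclusive} Kleene sequence. The tail $(C_1,\dots,C_{n-1})$ is a Kleene sequence by the invariant, so conditions Def.~\ref{def:kleene_sequence}.\ref{item:cfc} for $j\ge 2$ and \ref{item:cn} already hold; prepending $\bot$ preserves \ref{item:cn} and adds only the obligation $C_1\le F\bot$. This is where I would use the remark accompanying the algorithm: whenever $C_1$ was created (by \textbf{Candidate} or \textbf{Decide}), the side condition forced $C_1\le X_1$, and $X_1=F\bot$ holds for every KT sequence produced by the algorithm (the initial chain starts $\bot\le F\bot$, and neither \textbf{Unfold} nor \textbf{Induction}/\textbf{Conflict} alters $X_1$, since \textbf{Induction}/\textbf{Conflict} only touch indices $j\ge 2$). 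Hence $C_1\le F\bot$, so $(\bot,C_1,\dots,C_{n-1})$ is conclusive, and Prop.~\ref{prop:unsafe} (or directly Cor.~\ref{cor:kt_kleene}.\ref{item:cor:kt_kleene2} via Prop.~\ref{prop:unsafe}'s argument) gives $\mu F\not\le\alpha$.

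The main obstacle is the bookkeeping in the \textbf{`False'} case: one must be sure that the index shifting is consistent and, in particular, that $X_1=F\bot$ is a maintained invariant rather than only an initial condition, so that the $C_1\le X_1$ side condition of \textbf{Candidate}/\textbf{Decide} actually delivers the missing Kleene condition $C_1\le F\bot$. Everything else is a direct appeal to Lem.~\ref{lem:config}, Prop.~\ref{prop:safe}, Prop.~\ref{prop:unsafe}, and Cor.~\ref{cor:kt_kleene}; no new fixed-point reasoning is needed beyond what those statements package.
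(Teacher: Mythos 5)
Your proposal is correct and matches the paper's (largely implicit) argument: the paper treats Thm.~\ref{thm:pdrcor} as an immediate consequence of Lem.~\ref{lem:config} (the data invariant), Prop.~\ref{prop:safe}/Prop.~\ref{prop:unsafe} together with Cor.~\ref{cor:kt_kleene}, and the observation made in the discussion of the \textbf{Model} rule that $C_1\le X_1=F\bot$ always holds (via Lem.~\ref{lem:C_X}, $2\Rightarrow 1$). Your explicit verification that $X_1=F\bot$ is a maintained invariant is exactly the bookkeeping the paper relies on but leaves to the reader.
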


Many existing PDR algorithms ensure termination if the state
space is finite.  \kori{A general principle behind is stated below. Note that it rarely applies to infinitary or quantitative settings, where we would need some abstraction for termination.}
 \begin{proposition}[termination] \label{prop:term}
	LT-PDR terminates regardless of the order of the rule-applications
	if the following conditions are satisfied.
	\begin{enumerate}
		\item\label{item:prop:term0}
		\textbf{Valid} and \textbf{Model} rules are
		immediately applied if applicable.
		\item\label{item:prop:term1}
		$(L,\leq)$ is well-founded.
    \item\label{item:prop:term2} Either of the following is
      satisfied: a) $\mu F \leq \alpha$ and $(L, \leq)$ has
      no strictly increasing $\omega$-chain bounded by $\alpha$,
      or b) $\mu F \not \leq \alpha$.  \qed
	\end{enumerate}
 \end{proposition}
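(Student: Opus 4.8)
The plan is to argue that under conditions \ref{item:prop:term0}--\ref{item:prop:term2} no infinite run of LT-PDR exists, by exhibiting a measure on configurations that strictly decreases (in a well-founded order) at every rule application. First I would dispose of the two ``terminal'' rules: by condition~\ref{item:prop:term0}, as soon as \textbf{Valid} or \textbf{Model} becomes applicable the algorithm halts, so in any putative infinite run these rules are never enabled; in particular, on every configuration of an infinite run, $X$ is never conclusive and $C_1$ is undefined. This means the infinite run consists solely of applications of \textbf{Unfold}, \textbf{Induction}, \textbf{Candidate}, \textbf{Decide}, and \textbf{Conflict}.

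Next I would bound the number of \textbf{Unfold} applications. Each \textbf{Unfold} increments the length $n$ of $X$. I claim $n$ is bounded. Under case~\ref{item:prop:term2}(a), $\mu F\le\alpha$, and by Lemma~\ref{lem:x_init} every KT sequence satisfies $F^{i}\bot\le X_{i}$; combined with the KT-sequence constraints ($FX_{i}\le X_{i+1}$, and $X_{i}\le\alpha$ for $i\le n-2$), one shows that if no \textbf{Valid} is ever enabled then the $X_{i}$'s for $i\le n-2$ form a strictly increasing chain bounded by $\alpha$ (any two consecutive equal entries would enable \textbf{Valid}), which cannot be infinite by hypothesis; hence $n$ is bounded. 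Under case~\ref{item:prop:term2}(b), $\mu F\not\le\alpha$, so by Cor.~\ref{cor:kt_kleene}.\ref{item:cor:kt_kleene2} there are $N$ and a Kleene witness at level $N$; I would argue the algorithm cannot \textbf{Unfold} past a fixed bound without a conclusive Kleene sequence appearing (using soundness/completeness of the negative side, Prop.~\ref{prop:unsafe} and Prop.~\ref{prop:C_X}.\ref{item:n-step}), so again $n$ stabilizes. Either way, after finitely many steps no further \textbf{Unfold} occurs and $n$ is fixed at some value $N$.

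Now, on the tail of the run with $n=N$ fixed, I would define the measure $m(X;C) := (X_{0},\dots,X_{N-1};\, i)$ where $i$ is the starting index of $C$ (with $i=N$ when $C$ is empty), ordered lexicographically: first the tuple $(X_{0},\dots,X_{N-1})$ under the product of the well-founded order $\le$ on $L$ (reading ``smaller = strictly below''), then $i$ ordered by the usual order on $\{0,1,\dots,N\}$ read so that larger $i$ is ``smaller'' (i.e.\ a shorter $C$ is better). The claim is:
\begin{itemize}
\item \textbf{Induction} and \textbf{Conflict} replace some $X_{j}$ by $X_{j}\wedge x$ with $X_{k}\not\le x$ for the relevant $k$, hence strictly decrease at least one coordinate $X_{j}$ (and never increase any), so the $X$-component strictly decreases;
\item \textbf{Conflict} additionally shortens $C$ (drops $C_{i}$), but in any case the $X$-component already strictly decreases;
\item \textbf{Decide} leaves $X$ unchanged and prepends an element to $C$, decreasing the starting index $i$ by one --- wait, this \emph{increases} the length of $C$.
\end{itemize}
Here is the main obstacle, and where I would spend the real effort: \textbf{Decide} makes $C$ \emph{longer}, so it increases $i$-as-starting-index in the wrong direction, and \textbf{Candidate} creates a length-one $C$ out of nothing. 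So a naive lexicographic measure fails. The fix is to observe that $C$ can only grow down to index $1$ (it never reaches index $0$ without \textbf{Model} firing, which is excluded), so the number of consecutive \textbf{Decide} steps between two ``$X$-changing'' steps is at most $N-1$; and \textbf{Candidate} requires $C=()$ and $X_{N-1}\not\le\alpha$. Thus the right measure is lexicographic with the well-founded $X$-tuple as the \emph{dominant} component and the pair $(\text{``}C=()\text{?''},\, i)$ as a bounded secondary component that can only strictly decrease finitely often before some $X$-changing rule (\textbf{Induction}/\textbf{Conflict}) must fire --- and I must check that between two successive \textbf{Unfold}-free $X$-changes, only finitely many \textbf{Candidate}/\textbf{Decide}/\textbf{Model} steps intervene. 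Concretely: after a \textbf{Candidate}, the only rules that apply without changing $X$ are \textbf{Decide} (at most $N-2$ times, since $C$ reaches length $\le N-1$ before index hits $1$) and then \textbf{Model} (excluded) or \textbf{Conflict} (changes $X$); so at most $N-1$ non-$X$-changing steps occur before either the run halts or some $X_{j}$ strictly drops in the well-founded order $L$. Since a strictly decreasing sequence in the well-founded lattice $L^{N}$ (product order) has finite length, the tail with $n=N$ is finite. Combining with the earlier finiteness of the \textbf{Unfold}-phase, the whole run is finite, contradicting the assumption of an infinite run. Hence LT-PDR terminates.

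I expect the bookkeeping in the last paragraph --- precisely formalizing ``between two $X$-changes only boundedly many steps occur,'' and handling the interaction of \textbf{Candidate} (which needs $C=()$) with the reset of $C$ performed by \textbf{Unfold} --- to be the delicate part; everything else (the two cases bounding $n$, the strict decrease of the $X$-tuple under \textbf{Induction}/\textbf{Conflict} via $X_{k}\not\le x$, well-foundedness of the product $L^{N}$) is routine given the earlier lemmas, especially Lemma~\ref{lem:x_init}, Lemma~\ref{lem:kt_order}, and Prop.~\ref{prop:C_X}.
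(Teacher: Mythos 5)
Your proposal is correct and follows essentially the same route as the paper's (much terser) proof: first bound the number of \textbf{Unfold} applications---via Prop.~\ref{prop:C_X}.\ref{item:n-step} in case (b) and via the no-strictly-increasing-chain hypothesis in case (a)---and then use well-foundedness of $L$ together with the bounded length of $C$ to show that only finitely many steps can occur between consecutive \textbf{Unfold}s; your version simply fills in the bookkeeping that the paper leaves implicit. The one claim that needs a small repair is that \textbf{Conflict} always strictly decreases the $X$-tuple: this relies on the invariant $C_i\le X_i$, which holds at insertion time but can be broken by an intervening \textbf{Induction} with $k\ge i$, after which \textbf{Conflict} may leave $X$ unchanged; since \textbf{Conflict} always strictly shortens $C$, your secondary ($C$-index) component and the ``boundedly many non-$X$-changing steps'' accounting still absorb this case, at the cost of a slightly larger constant.
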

Cond.~\ref{item:prop:term0} is natural: it just requires LT-PDR to immediately conclude `True' or `False' if it can.
Cond.~\ref{item:prop:term1}--\ref{item:prop:term2} are always satisfied when $L$ is finite.

Thm.~\ref{thm:pdrcor} and Prop.~\ref{prop:term} still hold if \textbf{Induction} rule is dropped. However, the rule can accelerate the convergence of KT sequences and improve efficiency.
\begin{remark}[LT-OpPDR]\label{rem:LTOpPDR}
 The GFP-UA problem $\alpha\le^{?}\nu F$ is the dual of LFP-OA, obtained by opposing the order $\le$ in $L$. We can also dualize the LT-PDR algorithm (Alg.~\ref{alg:pdr}), obtaining what we call the \emph{LT-OpPDR} algorithm for GFP-UA. Moreover, we can express LT-OpPDR as LT-PDR if a suitable \emph{involution} $\neg\colon L\to L^{\op}$ is present. See \finalarxiv{\cite[Appendix B]{ArxivFull}}{Appendix~\ref{appendix:LTOpPDR}} for further details; see also Prop.~\ref{prop:corresponds_inv}.
\end{remark}

\section{Structural Theory of PDR by Category Theory}\label{sec:strTh}
Before we discuss  concrete instances of LT-PDR in~\S{}\ref{sec:instances},
 we develop a structural theory of transition systems and predicate transformers as a basis of LT-PDR.
The theory is formulated in the language of \emph{category theory}~\cite{MacLane71,Awodey06,Jacobs16coalgBook,CLTT}.
We use category theory
 because 1) categorical modeling of relevant notions
is well established in the community (see e.g.~\cite{Jacobs16coalgBook,CLTT,BonchiKP18,AguirreK20,Sokolova11}), and 2)  it gives us the right level of abstraction that accommodates a variety of instances. In particular,  qualitative and quantitative settings are described in a uniform manner.

\begin{table}[tbp]
\caption{Categorical modeling of state-based dynamics and predicate transformers}
\label{table:categoricalNotions}
 \begin{tabular}{m{16em}l}
 \toprule
  \rowcolor[gray]{.87}
 \multicolumn{2}{c}{a transition system as a \emph{coalgebra}~\cite{Jacobs16coalgBook} in
the base category $\mathbb{B}$ of sets and functions}
 \\ \cmidrule{1-2}
 objects $X,Y,\dotsc$ in
  $\mathbb{B}$~~~
 &
 sets (in our examples where $\mathbb{B}=\mathbf{Set}$)
 \\
 \rowcolor[gray]{.95}
 an arrow $f\colon X\to Y$ in $\mathbb{B}$
 &
 a function  (in our examples where $\mathbb{B}=\mathbf{Set}$)
 \\
 a functor $G\colon\mathbb{B}\to \mathbb{B}$
 &
\parbox{20em}{ a transition type \\
$\left(\parbox{18em}{$G=\Pf$ for Kripke structures (\S\ref{sec:LTPDRsForKripke}), \\$G=(\D(-)+1)^\Act$ for MDPs (\S\ref{sec:LTPDRsForMDP}), etc.}\right)$
}
 \\
  \rowcolor[gray]{.95}
 a \emph{coalgebra} $\delta\colon S\to GS$ in $\mathbb{B}$~\cite{Jacobs16coalgBook}
 &
 a transition system (Kripke structure, MDP, etc.)
 \\
 \midrule
  \rowcolor[gray]{.87}
 \multicolumn{2}{c}{a \emph{fibration} $p\colon \mathbb{E}\to \mathbb{B}$~\cite{CLTT} that equips sets in $\mathbb{B}$ with  \emph{predicates}}
 \\ \cmidrule{1-2}
 the fiber category $\mathbb{E}_{S}$  over $S$ in $\mathbb{B}$
 &
 the lattice of predicates over a set $S$
 \\
  \rowcolor[gray]{.95}
\parbox{15em}{ the \emph{pullback} functor $l^{*}\colon \mathbb{E}_{Y}\to \mathbb{E}_{X}$
 \\\phantom{hoge} for $l\colon X\to Y$ in $\mathbb{B}$
}
 &
\parbox{20em}{ substitution $P(y)\mapsto P(l(x))$ in
  \\\phantom{hoge}predicates $P\in \mathbb{E}_{Y}$ over $Y$}
 \\[+.7em]
 a \emph{lifting} $\dot{G}\colon \mathbb{E}\to\mathbb{E}$ of $G$ along $p$
 &
\parbox{20em}{ logical interpretation of the transition type $G$
 \\
 (specifies e.g.\ the  may vs.\ must modalities)
}
 \\\midrule
  \rowcolor[gray]{.87}
 \multicolumn{2}{c}{the \emph{predicate transformer}, whose fixed points are of our interest}
 \\ \cmidrule{1-2}
 the composite $\delta^{*}\dot{G}\colon \mathbb{E}_{S}\to\mathbb{E}_{S}$
 &
\parbox{20em}{ the predicate transformer associated with
  \\\phantom{hoge} the transition system $\delta$
} \\\bottomrule
 \end{tabular}
\end{table}

 Our structural theory (\S\ref{sec:strTh}) serves as a backend, not a frontend. That is,
\begin{itemize}
 \item the theory in \S\ref{sec:strTh} is important in that it explains how the instances in  \S\ref{sec:instances} arise and why others do not, but
 \item the instances in \S\ref{sec:instances} are described in non-categorical terms, so readers who skipped \S\ref{sec:strTh} will have no difficulties following \S\ref{sec:instances}  and using those instances.
\end{itemize}

\subsection{Categorical Modeling of Dynamics and Predicate Transformers}\label{sec:categoricalModeling}
Our interests are in  instances of the LFP-OA problem $\mu F \leq^{?} \alpha$ (Def.~\ref{def:lfpOverapprox}) that appear in  \emph{model checking}. In this context, 1) the underlying lattice $L$ is that of \emph{predicates} over a state space, and 2) the function $F\colon L\to L$ arises from the dynamic/transition structure, specifically as a \emph{predicate transformer}. The categorical notions in Table~\ref{table:categoricalNotions} model these ideas (state-based dynamics, predicate transformers). This modeling is well-established in the community.

Our introduction of Table~\ref{table:categoricalNotions} here is minimal, due to the limited space.
See \finalarxiv{\cite[Appendix C]{ArxivFull}}{Appendix~\ref{appendix:categorical}} and the references therein for more details.

A \emph{category}  consists of \emph{objects} and \emph{arrows} between them. In Table~\ref{table:categoricalNotions}, categories occur twice: 1) a \emph{base category} $\mathbb{B}$ where objects are typically sets and arrows are typically functions; and 2) \emph{fiber categories} $\mathbb{E}_{S}$, defined for each object $S$ of $\mathbb{B}$, that are identified with the lattices of \emph{predicates}. Specifically, objects $P,Q,\dotsc$ of $\mathbb{E}_{S}$ are predicates over $S$, and an arrow $P\to Q$ represents logical implication. A general fact behind the last is that every preorder is a category---see e.g.~\cite{Awodey06}.

\myparagraph{Transition Systems as Coalgebras}
State-based transition systems are modeled as \emph{coalgebras} in the base category $\mathbb{B}$~\cite{Jacobs16coalgBook}. We use a \emph{functor} $G\colon\mathbb{B}\to\mathbb{B}$ to represent a transition type. A \emph{$G$-coalgebra} is an arrow $\delta\colon S\to GS$, where $S$ is a state space and $\delta$ describes the dynamics. For example,  a Kripke structure can be identified with a pair $(S,\delta)$ of a set $S$ and a function $\delta\colon S\to \Pf S$, where $\Pf S$ denotes the powerset. The powerset construction $\Pf$ is known to be a functor $\Pf\colon \Set\to\Set$; therefore Kripke structures are $\Pf$-coalgebras. For other choices of $G$, $G$-coalgebras become different types of transition systems, such as MDPs (\S\ref{sec:LTPDRsForMDP}) and Markov Reward Models (\S\ref{sec:LTPDRsForMRM}).

\myparagraph{Predicates Form a Fibration }
 Fibrations are powerful categorical constructs that can model various  indexed entities; see e.g.~\cite{CLTT} for its general theory. Our use of them is for organizing the lattices $\mathbb{E}_{S}$ of \emph{predicates} over a set $S$, indexed by the choice of $S$. For example, $\mathbb{E}_{S}=2^{S}$---the lattice of subsets of $S$---for modeling qualitative predicates. For quantitative reasoning (e.g.\ for MDPs), we use $\mathbb{E}_{S}=[0,1]^{S}$, where $[0,1]$ is the unit interval. This way,  qualitative and quantitative reasonings are mathematically unified in the language of fibrations.

A \emph{fibration} is a functor $p\colon \mathbb{E}\to\mathbb{B}$ with suitable properties;  it can be thought of as a collection $(\mathbb{E}_{S})_{S\in \mathbb{B}}$ of \emph{fiber categories} $\mathbb{E}_{S}$---indexed by objects $S$ of $\mathbb{B}$---suitably organized as a single category $\mathbb{E}$. Notable in this organization is that we obtain the \emph{pullback} functor $l^{*}\colon \mathbb{E}_{Y}\to \mathbb{E}_{X}$
for each arrow $l\colon X\to Y$ in $\mathbb{B}$. In our examples, $l^{*}$ is a \emph{substitution} along $l$ in predicates---$l^{*}$ is the monotone map that carries  a predicate $P(y)$ over $Y$ to the predicate $P(l(x))$ over $X$.

In this paper, we restrict to a subclass of fibrations (called \emph{\CLatw fibrations}) in which every fiber category $\mathbb{E}_{S}$ is a complete lattice, and each pullback functor preserves all meets. We therefore write $P\le Q$ for arrows in $\mathbb{E}_{S}$; this represents logical implication, as announced above. Notice that each $f^*$ has a left adjoint (lower adjoint in terms of Galois connection), %
which exists by Freyd's adjoint functor theorem. The left adjoint is denoted by $f_*$.

\begin{wrapfigure}[4]{r}{0pt}
\begin{math}
 \xymatrix@R=1em{
 \mathbb{E}
        \ar[r]^-{\dot{G}}
        \ar[d]_-{p}
 &
 \mathbb{E}
        \ar[d]_-{p}
 \\
 \mathbb{B}
        \ar[r]^-{G}
 &
 \mathbb{B}
 }
\end{math}
\end{wrapfigure}
We also consider a \emph{lifting} $\dot{G}\colon \mathbb{E}\to\mathbb{E}$ of $G$ along $p$; it is a functor $\dot{G}$ such that $p\dot{G}=Gp$. See the diagram on the right. It specifies the \emph{logical interpretation} of the transition type $G$. For example, for $G=\Pf$ (the powerset functor) from the above, two choices of $\dot{G}$ are for the \emph{may} and \emph{must} modalities. See e.g.~\cite{HermidaJ98,KoriHK21,AguirreK20,KomoridaKHKH19}.

\myparagraph{Categorical Predicate Transformer} The above constructs allow us to model predicate transformers---$F$ in our examples of the LFP-OA problem $\mu F \leq^{?} \alpha$---in categorical terms. A \emph{predicate transformer} along a coalgebra $\delta\colon S\to GS$ with respect to the lifting $\dot{G}$ is simply the composite
\begin{math}
 \mathbb{E}_{S}\xrightarrow{\dot{G}}
 \mathbb{E}_{GS}\xrightarrow{\delta^{*}}
 \mathbb{E}_{S}
\end{math}, where the first $\dot{G}$ is the restriction of $\dot{G}\colon \mathbb{E}\to\mathbb{E}$ to $\mathbb{E}_{S}$. Intuitively, 1) given a \emph{postcondition} $P$ in $\mathbb{E}_{S}$, 2) it is first interpreted as the predicate $\dot{G}P$ over $GS$, and then 3) it is pulled back along the dynamics $\delta$ to yield a \emph{precondition} $\delta^{*}\dot{G}P$.  Such (backward) predicate transformers are fundamental in a variety of model checking problems.

\subsection{Structural Theory of PDR from Transition Systems}\label{sec:structuralTheoryofPDR}
We formulate a few general \emph{safety} problems. We show how they are amenable to the LT-PDR (Def.~\ref{def:lt-pdr}) and LT-OpPDR (Rem.~\ref{rem:LTOpPDR})  algorithms.
\begin{definition}[\backwardGFP, BSP] \label{def:safe_prob_bd} Let $p$ be a \CLatw
  fibration, $\delta: S \to GS$ be a coalgebra in $\mathbb{B}$, and $\dot{G}: \mathbb{E} \to \mathbb{E}$ be a lifting of
  $G$ along $p$ such that $\dot G_X:\mathbb{E}_X\to\mathbb{E}_{GX}$ is
  $\omega^\op$-continuous for each $X\in\mathbb{B}$.  The
  \emph{\backwardGFP for
    $(\iota \in \mathbb{E}_S, \delta,
    \alpha \in \mathbb{E}_S)$ in $(p, G, \dot{G})$} is
    the GFP-UA problem for $(\mathbb{E}_S, \alpha \land \delta^*\dot{G}, \iota)$, that is,
  \begin{equation}
    \label{eq:bd_gfp}
    \iota \;\leq^?\; \nu x.\, \alpha \land \delta^*\dot{G} x.
  \end{equation}
\end{definition}
Here, $\iota$ represents the initial states and $\alpha$ represents the safe states.
The predicate transformer  $x\mapsto \alpha \land \delta^*\dot{G}x$ in (\ref{eq:bd_gfp}) is the standard one for modeling safety---currently safe ($\alpha$), and the next time $x$ ($\delta^*\dot{G}x$). Its gfp is the safety property; (\ref{eq:bd_gfp}) asks if all initial states ($\iota$) satisfy the safety property.
Since
the \backwardGFP is a GFP-UA problem, we can solve it by LT-OpPDR (Rem.~\ref{rem:LTOpPDR}).

\begin{wrapfigure}[5]{r}{0pt}
  \begin{math}
  \xymatrix@R=2em@C=3.3em{
  \text{BSP} \ar[r]^-{\text{as-is}} \ar@/_1ex/[dr]_-{\text{involution }\neg} \ar@/^1ex/[dr]^(.7){\text{ suitable adjoints}
} &\text{GFP-UA} \ar[r]^-{\text{LT-OpPDR}} &\text{True/False} \\
   &\text{LFP-OA} \ar[r]^-{\text{LT-PDR}} &\text{True/False}
  }
  \end{math}
\end{wrapfigure}
Additional assumptions allow us to
reduce the \backwardGFP to LFP-OA problems, which are solvable by LT-PDR, as shown on the right.

The first case requires the existence of the \emph{left adjoint} to the predicate transformer
$\delta^*\dot G_S:\mathbb{E}_S\to \mathbb{E}_S$.
Then we can translate BSP to the following LFP-OA problem.
It directly asks
 whether all reachable states are safe.
\begin{proposition}[\forwardLFP, FSP] \label{prop:safe_prob_fd} In the setting of
  Def.~\ref{def:safe_prob_bd}, assume that each
  $\dot{G}_X: \mathbb{E}_{X} \to \mathbb{E}_{GX}$ preserves all meets.
  Then by letting $\dot{H}_S: \mathbb{E}_{GS} \to \mathbb{E}_{S}$ be
  the left adjoint of $\dot{G}_S$, the BSP \eqref{eq:bd_gfp} is
  equivalent to the LFP-OA problem for
  $(\mathbb{E}_S,\iota\vee\dot{H}_S\delta_*,\alpha)$:
  \begin{equation} \label{eq:fd_lfp}
    \mu x.\,\iota \lor \dot{H}_S\delta_*x\; \leq^?\; \alpha.
  \end{equation}
  This problem
 is called the \emph{\forwardLFP} for $(\iota, \delta, \alpha)$ in $(p, G, \dot{G})$.
    \qed
\end{proposition}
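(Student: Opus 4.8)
The plan is to reduce the statement to a purely order-theoretic fact about a pair of adjoint monotone maps on the complete lattice $\mathbb{E}_S$, and then to check that the two predicate transformers in question form such an adjoint pair. Concretely, I would abbreviate $\Box := \delta^{*}\dot{G}_S\colon \mathbb{E}_S\to\mathbb{E}_S$ (so that \eqref{eq:bd_gfp} is $\iota\le^{?}\nu x.\,\alpha\land\Box x$) and $\Diamond := \dot{H}_S\delta_{*}\colon\mathbb{E}_S\to\mathbb{E}_S$ (so that \eqref{eq:fd_lfp} is $\mu x.\,\iota\lor\Diamond x\le^{?}\alpha$), and write $\phi(x)=\alpha\land\Box x$ and $\psi(x)=\iota\lor\Diamond x$.

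First I would assemble the adjunction $\Diamond\dashv\Box$. The hypothesis that each $\dot{G}_X$ preserves all meets, applied at $X=S$, gives the left adjoint $\dot{H}_S\dashv\dot{G}_S$ named in the statement. Independently, since $p$ is a \CLatw fibration, $\delta^{*}\colon\mathbb{E}_{GS}\to\mathbb{E}_S$ preserves all meets and hence comes with the left adjoint $\delta_{*}\dashv\delta^{*}$. Composing adjoints (contravariantly), the left adjoint of $\Box=\delta^{*}\circ\dot{G}_S$ is $\dot{H}_S\circ\delta_{*}=\Diamond$, i.e.\ $\Diamond y\le x \iff y\le\Box x$ for all $x,y\in\mathbb{E}_S$. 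I would also record that $\Diamond$, being a left adjoint, preserves all joins, so $\psi=\iota\lor\Diamond(-)$ is $\omega$-continuous and $\mu\psi$ is a bona fide instance of the LFP-OA problem (Def.~\ref{def:lfpOverapprox}); existence of $\nu\phi$ and $\mu\psi$ themselves is guaranteed by Thm.~\ref{thm:kt_cc}.\ref{item:thm:kt_cc1}.

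The core is then the equivalence $\iota\le\nu\phi \iff \mu\psi\le\alpha$, proved in two directions from the adjunction and lattice laws alone. For ($\Leftarrow$): put $r=\mu\psi$; from $\psi(r)=r=\iota\lor\Diamond r$ we get $\iota\le r$ and $\Diamond r\le r$, the latter being equivalent by adjunction to $r\le\Box r$; if moreover $r\le\alpha$, then $r\le\alpha\land\Box r=\phi(r)$, so $r$ is a postfixed point of $\phi$ and $r\le\nu\phi$ by Thm.~\ref{thm:kt_cc}.\ref{item:thm:kt_cc1}, hence $\iota\le r\le\nu\phi$. For ($\Rightarrow$): put $s=\nu\phi$; from $s=\phi(s)=\alpha\land\Box s$ we get $s\le\alpha$ and $s\le\Box s$, the latter equivalent by adjunction to $\Diamond s\le s$; with $\iota\le\nu\phi=s$ this gives $\psi(s)=\iota\lor\Diamond s\le s$, so $s$ is a prefixed point of $\psi$ and $\mu\psi\le s\le\alpha$. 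Combining the two implications yields the equivalence of the two problems.

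I do not expect a genuine obstacle: the argument is short and uses nothing beyond the adjunction and Knaster--Tarski. The only points needing care are bookkeeping ones: getting the variances right so that $\Diamond=\dot{H}_S\delta_{*}$ is indeed an endomap of $\mathbb{E}_S$ left adjoint to $\Box$, and confirming the $\omega$-continuity of $\psi$ so that calling \eqref{eq:fd_lfp} an "LFP-OA problem" is justified. (The $\omega^{\op}$-continuity of $\phi$ needed to view \eqref{eq:bd_gfp} as a GFP-UA instance is already ensured by the hypotheses of Def.~\ref{def:safe_prob_bd} together with meet-preservation of $\delta^{*}$.)
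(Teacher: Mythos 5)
Your proposal is correct and follows essentially the same route as the paper: the paper's proof is a chain of equivalences that rewrites $\iota\le\nu x.\,\alpha\land\delta^{*}\dot{G}x$ as the existence of a postfixed point of $\alpha\land\delta^{*}\dot{G}(-)$ above $\iota$, swaps $x\le\delta^{*}\dot{G}x$ for $\dot{H}_S\delta_{*}x\le x$ via exactly the composite adjunction $\dot{H}_S\delta_{*}\dashv\delta^{*}\dot{G}_S$ you construct, and reads the result as the existence of a prefixed point of $\iota\lor\dot{H}_S\delta_{*}(-)$ below $\alpha$; your two-implication version merely instantiates this with the canonical witnesses $\nu\phi$ and $\mu\psi$. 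Your extra check that $\iota\lor\dot{H}_S\delta_{*}(-)$ is $\omega$-continuous (so that \eqref{eq:fd_lfp} is a well-posed LFP-OA instance) is a welcome detail the paper leaves implicit.
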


The second case assumes that the complete
lattice $\mathbb{E}_S$ of predicates admits an involution operator
$\neg:\mathbb{E}_S\to\mathbb{E}_S^\op$ (cf.\ \finalarxiv{\cite[Appendix B]{ArxivFull}}{Appendix~\ref{appendix:LTOpPDR}}).
\begin{proposition}[\backwardLFP, IBSP] \label{prop:corresponds_inv} In the setting of
  Def.~\ref{def:safe_prob_bd}, assume further that there is a monotone
  function $\neg: \mathbb{E}_S \to \mathbb{E}^\op_S$ satisfying
  $\neg \circ \neg = \mathrm{id}$.  Then the \backwardGFP \eqref{eq:bd_gfp}
  is equivalent to the LFP-OA problem for
  $(\mathbb{E}_S,(\neg\alpha)\lor(\neg\circ\delta^*\dot
  G\circ\neg),\neg\iota)$, that is,
  \begin{equation}  \label{eq:bd_lfp}
    \mu x.\,(\neg \alpha) \lor (\neg\circ \delta^*\dot{G}\circ\neg x)\; \leq^{?}\; \neg \iota.
  \end{equation}
  We call
   \eqref{eq:bd_lfp} the \emph{\backwardLFP} for $(\iota, \delta, \alpha)$ in $(p, G, \dot{G})$.
  Here  $(\neg \alpha) \lor (\neg\circ \delta^*\dot{G}\circ\neg (-))$ is the
  \emph{inverse backward predicate transformer}.
  \qed
\end{proposition}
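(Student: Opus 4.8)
The plan is to show that applying the involution $\neg$ pointwise turns the GFP-UA problem \eqref{eq:bd_gfp} into the LFP-OA problem \eqref{eq:bd_lfp} by transporting the fixed-point characterization along the order-isomorphism $\neg$. First I would record the elementary consequences of the hypothesis $\neg\circ\neg=\mathrm{id}$ together with monotonicity of $\neg\colon\mathbb{E}_S\to\mathbb{E}_S^\op$: namely that $\neg$ is an order-isomorphism between $\mathbb{E}_S$ and $\mathbb{E}_S^\op$ (its own inverse), hence it sends $\bigwedge$ to $\bigvee$ and vice versa, sends $\top$ to $\bot$, and reverses $\le$, i.e.\ $a\le b \iff \neg b\le \neg a$. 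In particular $\iota\le \nu x.\,\alpha\land\delta^*\dot G x$ holds iff $\neg\bigl(\nu x.\,\alpha\land\delta^*\dot G x\bigr)\le \neg\iota$.

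The second step is to identify $\neg(\nu F)$ with $\mu F'$ for the conjugated map $F' \coloneqq \neg\circ F\circ\neg$, where $F \coloneqq \alpha\land\delta^*\dot G(-)$ and $F' = (\neg\alpha)\lor(\neg\circ\delta^*\dot G\circ\neg(-))$ — here the meet $\alpha\land(-)$ becomes the join $(\neg\alpha)\lor(-)$ precisely because $\neg$ swaps meets and joins. This is a standard fact: since $\neg$ is an order-iso $\mathbb{E}_S\xrightarrow{\cong}\mathbb{E}_S^\op$, conjugation by $\neg$ carries the greatest fixed point of $F$ in $\mathbb{E}_S$ to the least fixed point of $F'$ in $\mathbb{E}_S$. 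One can prove it directly: $x$ is a fixed point of $F$ iff $\neg x$ is a fixed point of $F'$, so $\neg$ restricts to an order-iso between $\mathrm{Fix}(F)$ and $\mathrm{Fix}(F')^{\op}$, whence it sends the top of $\mathrm{Fix}(F)$ (which is $\nu F$ by Thm.~\ref{thm:kt_cc}.\ref{item:thm:kt_cc1}) to the bottom of $\mathrm{Fix}(F')$ (which is $\mu F'$). Alternatively, invoke Knaster--Tarski directly: $\mu F' = \bigwedge\{x\mid F'x\le x\} = \bigwedge\{x\mid \neg\circ F\circ\neg x\le x\} = \bigwedge\{\neg y\mid y\le F y\} = \neg\bigvee\{y\mid y\le Fy\} = \neg(\nu F)$, using that $\neg$ is a bijection and swaps $\bigvee/\bigwedge$ and reverses $\le$.

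Combining the two steps: $\iota\le\nu F \iff \neg(\nu F)\le\neg\iota \iff \mu F'\le\neg\iota$, which is exactly \eqref{eq:bd_lfp}. A small check to dispatch along the way: that $F' = (\neg\alpha)\lor(\neg\circ\delta^*\dot G\circ\neg(-))$ is $\omega$-continuous, so that it is a legitimate instance of the LFP-OA problem (Def.~\ref{def:lfpOverapprox}); this follows since $\dot G_S$ is $\omega^{\op}$-continuous by the standing assumption in Def.~\ref{def:safe_prob_bd}, $\delta^*$ preserves all meets (it is a pullback functor in a \CLatw fibration), and conjugation by the order-iso $\neg$ turns preservation of $\omega^{\op}$-infima into preservation of $\omega$-suprema, while $(\neg\alpha)\lor(-)$ is $\omega$-continuous on any complete lattice. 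The main obstacle is essentially bookkeeping — making sure every order direction is flipped at the right moment and that meets/joins and $\top/\bot$ are consistently swapped when passing through $\neg$ — rather than any conceptual difficulty; once the ``$\neg$ is an order-iso onto the opposite'' observation is in place, everything else is routine transport of structure.
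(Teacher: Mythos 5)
Your proof is correct and follows essentially the same route as the paper: the paper reduces the claim to the general fact (Prop.~\ref{prop:corresponds_pdr_dpdr} in the appendix) that an order-isomorphism $f\colon L\iso L'$ translates $\mu F\leq^{?}\alpha$ into $\mu(f\circ F\circ f^{-1})\leq^{?}f(\alpha)$, applied to $\neg\colon\mathbb{E}_S^{\op}\iso\mathbb{E}_S$, which is exactly the transport-along-$\neg$ you carry out by hand via Knaster--Tarski. Your explicit verification that $\neg(\nu F)=\mu(\neg\circ F\circ\neg)$ and the $\omega$-continuity check for the conjugated transformer are sound and simply unfold what the paper leaves to the cited general lemma.
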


 When both additional assumptions are fulfilled (in Prop.~\ref{prop:safe_prob_fd} \&~\ref{prop:corresponds_inv}), we obtain two LT-PDR algorithms to
solve BSP.
One can even simultaneously run these two algorithms---this is done in
fbPDR~\cite{SeufertS18, SeufertS19}. See also \S\ref{sec:LTPDRsForKripke}.
\begin{auxproof}
 Such a setting does not seem common, however:    the only example we know is
 for Kripke structures~\cite{SeufertS18, SeufertS19}.
 We will illustrate this special case in
 \S\ref{sec:LTPDRsForKripke}.
\end{auxproof}

\section{Known and New PDR Algorithms as Instances}\label{sec:instances}
We present several concrete instances of our LT-PDR algorithms. The one for Markov reward models is new (\S\ref{sec:LTPDRsForMRM}). We also sketch how those instances can be systematically derived by the theory in \S\ref{sec:strTh}; details are in \finalarxiv{\cite[Appendix D]{ArxivFull}}{Appendix~\ref{appendix:strDerivDetails}}.

\subsection{LT-PDRs for Kripke Structures: \FKripke and \IBKripke}\label{sec:LTPDRsForKripke}
In most of the PDR literature, the target system is a Kripke structure that arises from a program's operational semantics. A \emph{Kripke structure} consists of a set $S$ of states
and a transition relation
$\delta\subseteq S\times S$ (here we ignore initial states and atomic propositions). The basic problem formulation is as follows.
\begin{definition}[backward safety problem (BSP) for Kripke structures]\label{def:BSPKripke}
 The \emph{BSP} for a Kripke structure $(S,\delta)$, a set $\iota\in 2^S$ of initial states, and a set $\alpha\in 2^S$ of safe states, is the  GFP-UA problem
\begin{math}\label{eq:BSPKripke}
   \iota \,\leq^?\, \nu x.\, \alpha \land F' x,
\end{math}
where $F'\colon 2^{S}\to 2^{S}$ is defined by $F'(A)\triangleq\{s\mid \forall s'.\, ((s,s')\in \delta \Rightarrow s'\in A)\}$.
\end{definition}
It is clear that the GFP in Def.~\ref{def:BSPKripke} represents the set of states from which all reachable states are in $\alpha$. Therefore the BSP is the usual safety problem.

The above BSP is easily seen to be equivalent to the following problems.
\begin{proposition}[forward safety problem (FSP) for Kripke structures]\label{prop:FSPKripke}
The BSP in Def.~\ref{def:BSPKripke} is equivalent to the  LFP-OA problem
\begin{math}\label{eq:FSPKripke}
  \mu x.\, \iota \lor \Fkrf x\, \leq^? \,\alpha
\end{math},
 where  $\Fkrf\colon 2^{S}\to 2^{S}$ is defined by $\Fkrf(A)\triangleq \bigcup_{s\in A}\{s'\mid (s,s')\in \delta\}$.
 \qed
\end{proposition}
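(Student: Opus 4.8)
The plan is to obtain this as an instance of the reduction in Prop.~\ref{prop:safe_prob_fd}, or—more self-containedly—to give a short direct proof via a Galois connection between the two predicate transformers. The only instance-specific fact I would establish first is that $\Fkrf$ is the left adjoint of $F'$: for all $A,B\in 2^{S}$,
\begin{equation*}
  \Fkrf(A)\subseteq B \quad\Longleftrightarrow\quad A\subseteq F'(B),
\end{equation*}
since both sides say exactly ``$s\in A$ and $(s,s')\in\delta$ imply $s'\in B$''. (In the language of \S\ref{sec:strTh} this is the statement that $F'=\delta^{*}\dot G$ for $G=\Pf$ with the must-modality lifting $\dot G$, that each $\dot G_{X}$ preserves meets, and that the composite $\dot H_{S}\delta_{*}$ of the corresponding left adjoints is precisely $\Fkrf$; then Prop.~\ref{prop:safe_prob_fd} applies verbatim and the proposition follows.)

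Second, I would isolate the following purely order-theoretic lemma, which uses only Knaster--Tarski (Thm.~\ref{thm:kt_cc}.\ref{item:thm:kt_cc1}): if $(L,\le)$ is a complete lattice and $G\dashv F$ is a Galois connection on $L$ with $G$ the left adjoint (so $Gx\le y \iff x\le Fy$), then for all $\iota,\alpha\in L$,
\begin{equation*}
  \iota\le \nu x.\,(\alpha\land Fx) \quad\Longleftrightarrow\quad \mu x.\,(\iota\lor Gx)\le\alpha.
\end{equation*}
For ``$\Leftarrow$'', set $m\coloneqq\mu x.\,(\iota\lor Gx)$; the fixed-point equation gives $\iota\le m$ and $Gm\le m$, so $m\le Fm$ by the adjunction, and together with $m\le\alpha$ this makes $m$ a postfixed point of $x\mapsto\alpha\land Fx$, whence $m\le\nu x.\,(\alpha\land Fx)$, and $\iota\le m$ closes it. The direction ``$\Rightarrow$'' is symmetric: from $n\coloneqq\nu x.\,(\alpha\land Fx)\ge\iota$ one reads off $n\le\alpha$ and $n\le Fn$, hence $Gn\le n$, so $n$ is a prefixed point of $x\mapsto\iota\lor Gx$ and $\mu x.\,(\iota\lor Gx)\le n\le\alpha$. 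Instantiating $L=2^{S}$, $G=\Fkrf$, $F=F'$ yields the proposition.

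The remaining bookkeeping is routine: one checks that $x\mapsto\iota\lor\Fkrf x$ is $\omega$-continuous ($\Fkrf$ preserves all unions), so the right-hand side is a genuine LFP-OA instance in the sense of Def.~\ref{def:lfpOverapprox}, and dually that $x\mapsto\alpha\land F'x$ is $\omega^{\op}$-continuous, so the left-hand side is the GFP-UA problem of Def.~\ref{def:BSPKripke}; neither continuity fact is needed for the equivalence itself. I do not expect a genuine obstacle here—the whole content is the one-line verification of the adjunction $\Fkrf\dashv F'$, after which the lemma above is mechanical.
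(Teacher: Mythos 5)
Your proposal is correct and follows essentially the same route as the paper: the paper proves this proposition by checking that the must-modality lifting $\dot{\Pf}$ preserves meets, identifying the left adjoint of $F'=\delta^{*}\dot{\Pf}$ as $\Fkrf=\bigcup\circ\,\delta_{*}$, and invoking Prop.~\ref{prop:safe_prob_fd}, whose proof is exactly your order-theoretic lemma (a Knaster--Tarski transfer between postfixed points of $x\mapsto\alpha\land F'x$ and prefixed points of $x\mapsto\iota\lor\Fkrf x$ across the adjunction $\Fkrf\dashv F'$). Your self-contained lemma and its two-directional argument faithfully unfold that same chain of equivalences, and your direct verification of $\Fkrf(A)\subseteq B\iff A\subseteq F'(B)$ is the correct instance-specific step.
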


\begin{proposition}[inverse backward safety problem (IBSP) for Kripke structures]\label{prop:IBSPKripke}
The BSP in Def.~\ref{def:BSPKripke} is equivalent to the  LFP-OA problem
\begin{math}\label{eq:IBSPKripke}
\mu x.\, \neg \alpha\lor \neg F'(\neg x) \,\leq^?\, \neg \iota
\end{math},
where $\neg\colon 2^{S}\to 2^{S}$ is the complement function $A\mapsto S\setminus A$.
 \qed
\end{proposition}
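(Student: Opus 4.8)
The plan is to obtain Proposition~\ref{prop:IBSPKripke} as a direct instance of the general Proposition~\ref{prop:corresponds_inv}. First I would set up the categorical data: take $\mathbb{B}=\Set$, $G=\Pf$, and identify the Kripke structure $(S,\delta)$ with the $\Pf$-coalgebra $\bar\delta\colon S\to\Pf S$ given by $\bar\delta(s)=\{s'\mid(s,s')\in\delta\}$; take the subset fibration $2^{(-)}$, so that $\mathbb{E}_S=2^S$; and take $\dot G$ to be the \emph{must} (box) lifting of $\Pf$, for which $\delta^*\dot G(A)=\{s\mid\bar\delta(s)\subseteq A\}$ --- which is exactly the $F'$ of Definition~\ref{def:BSPKripke}. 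Next I would check the hypotheses: $2^S$ is a complete lattice; $\neg\colon 2^S\to(2^S)^{\op}$, $A\mapsto S\setminus A$, is order-reversing (hence monotone as a map into the opposite poset) and satisfies $\neg\circ\neg=\mathrm{id}$; and $\dot G_X$ is $\omega^{\op}$-continuous --- in fact $F'$ preserves arbitrary meets, since $s\in F'(\bigcap_n A_n)$ iff every $\delta$-successor of $s$ lies in every $A_n$. With these choices the \backwardGFP of Definition~\ref{def:safe_prob_bd} is literally the BSP of Definition~\ref{def:BSPKripke}, and specializing \eqref{eq:bd_lfp} yields precisely $\mu x.\,\neg\alpha\lor\neg F'(\neg x)\leq^?\neg\iota$, as claimed.

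Since \S\ref{sec:strTh} is advertised as an optional backend, I would also give a self-contained argument via a de Morgan / fixed-point-duality computation. Write $Gx\triangleq\alpha\land F'x$ and $\tilde Gx\triangleq\neg\alpha\lor\neg F'(\neg x)$, both monotone on $2^S$. A one-line check gives $\tilde G(\neg x)=\neg(Gx)$, so $\neg$ restricts to an order-reversing bijection between the fixed points of $G$ and those of $\tilde G$; consequently $\neg(\nu G)=\mu\tilde G$. Since $\neg$ is an order-reversing involution, $\iota\leq\nu G\iff\neg(\nu G)\leq\neg\iota\iff\mu\tilde G\leq\neg\iota$, which is exactly the asserted equivalence of the two decision problems. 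Along the way I would record that $\tilde G$ is $\omega$-continuous: $\neg F'(\neg-)$ is the diamond modality $A\mapsto\{s\mid\exists s'.\,(s,s')\in\delta\land s'\in A\}$, which preserves arbitrary unions, and joining with the constant $\neg\alpha$ preserves nonempty joins; this makes $\mu\tilde G$ and the LFP-OA problem well-posed in the sense of Definition~\ref{def:lfpOverapprox}, and dually it makes the GFP-UA formulation of the BSP legitimate (as $\alpha\land F'$ preserves all meets).

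I do not anticipate a genuine obstacle here; the statement is essentially a corollary of Proposition~\ref{prop:corresponds_inv} (or, on the elementary route, of the standard fact that an order-reversing involution interchanges least and greatest fixed points of conjugate monotone maps). The only points that need a moment's care are (i) pairing $F'$ with the \emph{must} lifting rather than the \emph{may} one, and (ii) checking the (co)continuity side-conditions so that both the GFP-UA and the LFP-OA formulations are admissible instances of the framework; both are short computations on powersets and the $\delta$-successor relation.
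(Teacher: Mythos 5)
Your proposal is correct and follows the paper's own route: the paper obtains Prop.~\ref{prop:IBSPKripke} exactly by instantiating Prop.~\ref{prop:corresponds_inv} in the setting $(d^{\mathbf{2}},\Pf,\dot\Pf)$ with the must lifting (so that $\delta^*\dot\Pf=F'$) and the complement involution $\neg$, as spelled out in Appendix~\ref{appendix:p_monad}. Your second, self-contained de Morgan argument is just an unfolding of the paper's general proof of Prop.~\ref{prop:corresponds_pdr_dpdr} (translation of the problem along the order-reversing isomorphism $\neg\colon 2^S\to(2^S)^{\op}$), so it is the same idea made concrete rather than a genuinely different route.
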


\myparagraph{Instances of LT-PDR}
The FSP and IBSP (Prop.~\ref{prop:FSPKripke}--\ref{prop:IBSPKripke}), being  LFP-OA, are amenable to the LT-PDR algorithm (Def.~\ref{def:lt-pdr}). Thus we obtain two instances of LT-PDR; we call them \emph{\FKripke} and \emph{\IBKripke}. \IBKripke{} is a step-by-step dual to the application of LT-OpPDR to the BSP (Def.~\ref{def:BSPKripke})---see Rem.~\ref{rem:LTOpPDR}.

We compare these two instances of LT-PDR
with   algorithms in the literature.
If we impose $|C_i|=1$
on each element $C_i$ of Kleene sequences, the \FKripke{} instance of LT-PDR coincides with
the conventional IC3/PDR~\cite{Bradley11, EenMB11}. In contrast,  \IBKripke{}  coincides with \emph{Reverse PDR} in~\cite{SeufertS18,SeufertS19}.
The parallel execution of \FKripke{} and \IBKripke{} roughly corresponds to
fbPDR~\cite{SeufertS18, SeufertS19}.
\begin{auxproof}
 All these algorithms---including LT-PDR---represent predicates (i.e.\ subsets) by logical formulas in their implementation, in order to exploit  efficient SAT solvers.
\end{auxproof}

\myparagraph{Structural Derivation} The equivalent problems (Prop.~\ref{prop:FSPKripke}--\ref{prop:IBSPKripke})
are derived systematically from the categorical theory in \S\ref{sec:structuralTheoryofPDR}. Indeed, using a lifting
 $\dot \Pf\colon 2^{S} \to 2^{\Pf S}$ such that $A\mapsto\{A'\mid A'\subseteq A\}$ (the \emph{must modality} $\Box$), $F'$ in Def.~\ref{def:BSPKripke} coincides with $\delta^{*}\dot\Pf$ in (\ref{eq:bd_gfp}). The above $\dot \Pf$ preserves meets (cf.\ the modal axiom $\Box(\varphi\land\psi)\cong\Box\varphi\land\Box\psi$, see e.g.~\cite{BlackburnRV01}); thus Prop.~\ref{prop:safe_prob_fd} derives the FSP. Finally, $\neg$ in Prop.~\ref{prop:IBSPKripke} allows the use of Prop.~\ref{prop:corresponds_inv}. More details are in
 \finalarxiv{\cite[Appendix D]{ArxivFull}}{Appendix~\ref{appendix:strDerivDetails}}.

\subsection{LT-PDR for MDPs: \IBMDP}\label{sec:LTPDRsForMDP}
The only known PDR-like algorithm for \emph{quantitative} verification is  \emph{PrIC3}~\cite{BatzJKKMS20} for Markov decision processes (MDPs). Here we instantiate LT-PDR for MDPs and compare it with PrIC3.

An \emph{MDP} consists of a set $S$
of states,
a set $\Act$ of actions
and a transition function $\delta$ mapping $s \in S$ and $a \in \Act$
to either $\ast$ (``the action $a$ is unavailable at $s$'') or a probability distribution $\delta(s)(a)$ over $S$.

\begin{definition}[IBSP for MDPs]\label{def:IBSPMDP}
 The \emph{inverse backward safety problem (IBSP)} for an MDP $(S,\delta)$, an initial state $s_{\iota}\in S$, a real number $\lambda\in[0,1]$, and a set $\alpha\subseteq S$ of safe states, is the  LFP-OA problem
\begin{math}\label{eq:IBSPMDP}
  \mu x.\, F'( x) \;\leq^?\;  d_{\iota,\lambda}
\end{math}.
Here $d_{\iota,\lambda}\colon S\to [0,1]$ is the predicate such  that $d_{\iota,\lambda}(s_{\iota})=\lambda$ and $d_{\iota,\lambda}(s)=1$ otherwise.  $F'\colon [0,1]^{S}\to [0,1]^{S}$ is defined by $  F' (d)(s) =1$ if $s\not\in\alpha$, and $  F' (d)(s) =     \max\{\sum_{s' \in S} d(s') \cdot \delta(s)(a)(s') \mid a \in
    \Act, \delta (s)(a) \neq \ast \}$ if $s\in\alpha$.
\end{definition}
The function $F'$ in Def.~\ref{def:IBSPMDP} is a \emph{Bellman operator} for MDPs---it takes the average of $d$ over $\delta(s)(a)$ and takes the maximum over $a$. Therefore the lfp in Def.~\ref{def:IBSPMDP} is the maximum reachability probability to $S\setminus \alpha$; the problem asks if it is  $\le \lambda$. In other words, it asks whether the \emph{safety} probability---of staying in $\alpha$ henceforth, under any choices of actions---is $\ge 1-\lambda$. This problem is the same as in~\cite{BatzJKKMS20}.

\myparagraph{Instance of PDR} The IBSP (Def.~\ref{def:IBSPMDP}) is LFP-OA and thus amenable to LT-PDR. We call this instance \emph{\IBMDP};
See \finalarxiv{\cite[Appendix E]{ArxivFull}}{Appendix~\ref{ap:mdp}} for details.

\IBMDP{}
shares many essences with PrIC3
\cite{BatzJKKMS20}. It uses the operator $F'$ in Def.~\ref{def:IBSPMDP}, which coincides with the one in \cite[Def.~2]{BatzJKKMS20}.
PrIC3 maintains \emph{frames}; they coincide with KT sequences in \IBMDP.

 Our Kleene sequences correspond to
 \emph{obligations} in PrIC3, modulo the following difference. Kleene sequences aim at a negative witness (\S\ref{sec:neg}), but they happen to help the positive proof efforts too (\S\ref{sec:int}); obligations in PrIC3 are solely for accelerating the positive proof efforts.
Thus, if PrIC3 cannot solve these efforts, we need to check whether obligations yield a negative witness.

\myparagraph{Structural Derivation} One can derive the IBSP (Def.~\ref{def:IBSPMDP}) from the categorical theory in \S\ref{sec:structuralTheoryofPDR}. Specifically, we first formulate the \emph{BSP}
\begin{math}
   \neg d_{\lambda} \;\leq^?\; \nu x.\, d_\alpha \land \delta^*\dot{\Gmdp }x
\end{math},
where $\dot{G}$ is a suitable lifting (of $G$ for MDPs, Table~\ref{table:categoricalNotions}) that combines average and minimum,
 $\neg \colon [0,1]^{S}\to[0,1]^{S}$ is defined by  $(\neg d)(s)\triangleq 1-d(s)$,
and $d_{\alpha}$ is such that $d_{\alpha}(s)=1$ if $s\in\alpha$ and $d_{\alpha}(s)=0$ otherwise. Using
 $\neg \colon [0,1]^{S}\to[0,1]^{S}$ in the above as an involution, we apply Prop.~\ref{prop:corresponds_inv} and obtain the IBSP (Def.~\ref{def:IBSPMDP}).

Another benefit of the categorical theory is that it  can tell us a forward instance of LT-PDR (much like \FKripke{} in \S\ref{sec:LTPDRsForKripke}) is unlikely for MDPs. Indeed, we showed in Prop.~\ref{prop:safe_prob_fd} that $\dot{G}'s$ preservation of meets is essential  (existence of a left adjoint is equivalent to meet preservation). We can easily show that our $\dot{G}$ for MDPs does not preserve meets. See \finalarxiv{\cite[Appendix G]{ArxivFull}}{Appendix~\ref{appendix:profwedge}}.

\subsection{LT-PDR for Markov Reward Models: \MRM}
\label{sec:LTPDRsForMRM}
We present a PDR-like algorithm for \emph{Markov reward models (MRMs)}, which seems to be new, as an instance of LT-PDR.
An  MRM consists of a set $S$ of states
and a transition function $\delta$
that maps $s \in S$ (the current state) and $c \in \mathbb{N}$ (the reward) to a function
$\delta(s)(c): S \to [0, 1]$;  the last represents the probability distribution
of next states.

We solve the following problem. We use $[0,\infty]$-valued predicates---representing accumulated rewards---where $[0,\infty]$ is the set of extended nonnegative reals.

  \begin{definition}[SP for MRMs]\label{def:SPMRM}
 The \emph{safety problem (SP)} for an MRM $(S,\delta)$, an initial state $s_{\iota}\in S$,  $\lambda\in[0,\infty]$, and a set $\alpha\subseteq S$ of safe states is
 \begin{math}
  \mu x.\, F'( x) \,\leq^?\,  d_{\iota,\lambda}
 \end{math}.
 Here $d_{\iota,\lambda}\colon S\to [0,\infty]$ maps $s_{\iota}$ to $\lambda$ and others to $\infty$, and $F'\colon [0,\infty]^{S}\to [0,\infty]^{S}$ is defined by
 \begin{math}
 F'(d)(s) = 0
 \end{math} if $s\not\in\alpha$, and
 \begin{math}
 F'(d)(s) =
    \sum_{s' \in S,c \in \mathbb{N}} (c+d(s')) \cdot
    \delta(s)(c)(s')
 \end{math} if $s\in\alpha$.
 \end{definition}

The function $F'$ accumulates expected reward in $\alpha$. Thus the problem asks if the expected accumulated reward, starting from $s_{\iota}$ and until leaving $\alpha$, is $\le \lambda$.

\myparagraph{Instance of PDR} The SP (Def.~\ref{def:SPMRM}) is  LFP-OA thus amenable to LT-PDR. We call this instance \emph{\MRM}. It seems new.
See \finalarxiv{\cite[Appendix F]{ArxivFull}}{Appendix~\ref{ap:mrm}} for details.

\myparagraph{Structural Derivation} The function $F'$ in Def.~\ref{def:SPMRM} can be expressed categorically as $F'(x)=d_\alpha \land \delta^*\dot{G}(x)$, where $d_{\alpha}\colon S\to [0,\infty]$ carries $s\in\alpha$ to $\infty$ and $s\not\in\alpha$ to $0$, and $\dot{G}$ is a suitable lifting that accumulates expected reward. However, the SP (Def.~\ref{def:SPMRM}) is \emph{not} an instance of the three general safety problems in \S\ref{sec:structuralTheoryofPDR}.
\begin{auxproof}
 ---the combination of $\mu$ and $\land$ in Def.~\ref{def:SPMRM} does not occur in \S\ref{sec:structuralTheoryofPDR}.
\end{auxproof}
Consequently, we expect that other instances of LT-PDR than \MRM{} (such as \FKripke{} and \IBKripke{} in \S\ref{sec:LTPDRsForKripke}) are hard for MRMs.

\section{Implementation and Evaluation}
\label{sec:implEval}

\myparagraph{Implementation \lstinline{LTPDR}}
We implemented LT-PDR  in Haskell. Exploiting Haskell's language features, it is succinct ($\sim$50 lines) and almost a literal translation of Alg.~\ref{alg:pdr} to Haskell. Its main part is presented in \finalarxiv{\cite[Appendix K]{ArxivFull}}{Appendix~\ref{appendix:code}}. In particular, using suitable type classes, the code is as abstract and generic as Alg.~\ref{alg:pdr}.

 Specifically, our implementation is a Haskell module named \lstinline{LTPDR}. It has two interfaces, namely the type class \lstinline{CLat $\tau$} (the lattice of predicates) and the type  \lstinline{Heuristics $\tau$} (the definitions of \textbf{Candidate}, \textbf{Decide}, and \textbf{Conflict}). The main function for LT-PDR is
 \lstinline{ltPDR :: CLat $\tau$ => Heuristics $\tau$ -> ($\tau$ -> $\tau$) -> $\tau$ -> IO (PDRAnswer $\tau$)}, where the second argument is for a monotone function $F$ of type \lstinline{$\tau$ -> $\tau$} and the last is for the safety predicate  $\alpha$.

 Obtaining concrete instances is easy by fixing $\tau$ and \lstinline{Heuristics $\tau$}.  A simple implementation of \FKripke{} takes  $15$ lines; a more serious SAT-based one for \FKripke{}  takes $\sim$130 lines; \IBMDP{} and \MRM{} take $\sim$80 lines each.

\myparagraph{Heuristics}
 We briefly discuss the heuristics, i.e.~how to choose $x \in L$ in \textbf{Candidate}, \textbf{Decide}, and \textbf{Conflict}, used in our experiments.
The heuristics of \FKripke{} is based on the conventional PDR~\cite{Bradley11}.
The heuristics of \IBMDP{} is based on
the idea of representing
the smallest possible $x$ greater than some real number $v \in [0, 1]$
(e.g.~$x$ taken in \textbf{Candidate})
as $x=v+\epsilon$, where
$\epsilon$ is a symbolic variable.
This implies that \textbf{Unfold} (or \textbf{Valid}, \textbf{Model}) is always applied in finite steps, which further guarantees finite-step termination for invalid cases and $\omega$-step termination for valid cases (see \finalarxiv{\cite[Appendix H]{ArxivFull}}{Appendix~\ref{ap:heuristics_mdp}} for more detail).
The heuristics of \MRM{} is similar to that of \IBMDP{}.

\myparagraph{Experiment Setting}
We experimentally assessed the performance of instances of \lstinline{LTPDR}.
The settings are as follows:  1.2GHz Quad-Core Intel Core i7 with 10 GB memory using Docker, for \IBMDP{};
 Apple M1 Chip with  16 GB memory
for the other.
The different setting is because we needed Docker to run PrIC3~\cite{BatzJKKMS20}.

\myparagraph{Experiments with \MRM{}}
Table~\ref{table:expMRM}
shows the results. We observe that \MRM{} answered correctly, and that the execution time is reasonable.  Further performance analysis (e.g.\ comparison with~\cite{KatoenKZ05}) and improvement is future work; the point here, nevertheless, is the fact that we obtained a reasonable MRM model checker by adding $\sim$80 lines to the generic solver \lstinline{LTPDR}.

\myparagraph{Experiments with \IBMDP{}}
Table~\ref{table:expMDP} shows the results.
Both PrIC3 and our \IBMDP{} solve a a linear programming (LP) problem in
 \textbf{Decide}.
PrIC3 uses Z3 for this; \IBMDP{} uses GLPK.
PrIC3  represents an MDP symbolically, while \IBMDP{} do so concretely. Symbolic representation in \IBMDP{} is possible---it is future work.
PrIC3 can use four different
 \emph{interpolation generalization} methods, leading to different performance (Table~\ref{table:expMDP}).

We observe that \IBMDP{} outperforms PrIC3 for some benchmarks with smaller state spaces.
We believe that the failure of \IBMDP{} in many  instances can be attributed to our current choice of a generalization method (it is the closest to the linear one for PrIC3).
Table~\ref{table:expMDP} suggests that  use of \emph{polynomial} or \emph{hybrid} can enhance the performance.

\myparagraph{Experiments with \FKripke{}}
Table~\ref{table:expFKripke} shows the results. The benchmarks are mostly
 from the HWMCC'15 competition~\cite{HWMCC15},
except for \texttt{latch0.smv}\footnote{\url{https://github.com/arminbiere/aiger}} and \texttt{counter.smv} (our own).

IC3ref vastly outperforms \FKripke{} in many instances.
This is hardly a surprise---IC3ref was developed towards superior performance, while \FKripke's emphasis is on its theoretical simplicity and genericity.
We nevertheless see that \FKripke{} solves some benchmarks of substantial size, such as \texttt{power2bit8.smv}.
This demonstrates the practical potential of LT-PDR, especially in view of the following improvement opportunities (we will pursue them as future work): 1) use of well-developed SAT solvers (we currently use \texttt{toysolver}\footnote{\url{https://github.com/msakai/toysolver}} for its good interface but we could use Z3); 2) allowing  $|C_i| > 1$, a technique discussed in \S\ref{sec:LTPDRsForKripke} and implemented in IC3ref but not in \FKripke{}; and 3) other small improvements,  e.g.\ in our CNF-based  handling of propositional formulas.

\myparagraph{Ablation Study}
To assess the value of the key concept of PDR (namely the \emph{positive-negative interplay} between the Knaster--Tarski and Kleene theorems (\S\ref{sec:int})), we compared \FKripke{} with the instances of positive and negative LT-PDR (\S\ref{sec:pos}--\ref{sec:neg}) for Kripke structures.

Table~\ref{table:expAblation} shows the results.
Note that
the value of the positive-negative interplay is already theoretically established; see  e.g.\ Prop.~\ref{prop:C_X} (the interplay detects executions that lead to nowhere).
This value was also experimentally witnessed: see \texttt{power2bit8.smv} and \texttt{simpleTrans.smv}, where the one-sided methods made wrong choices and timed out.
One-sided methods can be efficient when they get lucky (e.g.\ in \texttt{counter.smv}).
 LT-PDR may be  slower because of the overhead of running two sides, but that is a trade-off for the increased chance of termination.

\myparagraph{Discussion}
We observe that all of the studied instances exhibited at least reasonable performance. We note again that detailed
performance analysis and improvement is out of our current scope. Being able to derive these model checkers, with such a small effort as $\sim$100 lines of Haskell code each, demonstrates the value of our abstract theory and its generic Haskell implementation \lstinline{LTPDR}.

\begin{table}[tbp!]\footnotesize\centering
    \caption{experimental results for our \FKripke{}, \IBMDP{}, and \MRM{}}
    \begin{subtable}[t]{0.48\textwidth}
        \centering
     \caption{Results with \MRM{}. The MRM is from \cite[Example 10.72]{BaierK}, whose ground truth expected reward is $\frac{4}{3}$. The benchmarks ask if the expected reward (not known to the solver) is $\le 1.5$ or $\le 1.3$.
}\label{table:expMRM}
\scalebox{.7}{     \begin{tabular}{ccc}
    \toprule
    Benchmark  & Result  &  Time  \\
    \midrule
    $\textsc{DieByCoin}^{{\leq^?}1.5}$  & True  & \SI{6.01}{ms} \\
    $\textsc{DieByCoin}^{{\leq^?}1.3}$  & False  & \SI{43.1}{\micro s} \\
    \bottomrule
  \end{tabular}
}    \end{subtable}
    \hfill
    \begin{subtable}[t]{0.48\textwidth}
        \centering
  \caption{Results with \FKripke{} in comparison with IC3ref, a reference implementation of \cite{Bradley11} (\url{https://github.com/arbrad/IC3ref}). Both solvers answered correctly.
  Timeout (TO) is 600 sec.
 }\label{table:expFKripke}
 \scalebox{.7}{
  \begin{tabular}{ccccc}
    \toprule
    Benchmark  &$|S|$ & Result  & \FKripke & IC3ref  \\
    \midrule
    latch0.smv  &$2^3$ & True  & \SI{317}{\micro s} & \SI{270}{\micro s} \\
    counter.smv  &$2^5$ & False  & \SI{1.620}{s} & \SI{3.27}{ms} \\
    power2bit8.smv &$2^{15}$ & True  & \SI{1.516}{s} & \SI{4.13}{ms} \\
    ndista128.smv &$2^{17}$ & True  & TO & \SI{73.1}{ms} \\
    shift1add256.smv &$2^{21}$ & True  & TO & \SI{174}{ms} \\
    \bottomrule
  \end{tabular}
 }    \end{subtable}

    \vspace{.6\baselineskip}
    \begin{subtable}{\textwidth}
     \centering
   \caption{Results with \IBMDP (an excerpt of \finalarxiv{\cite[Table 3]{ArxivFull}}{Table~\ref{table:expMDPextended}}). Comparison is against PrIC3~\cite{BatzJKKMS20} with four
different interpolation generalization methods (none, linear, polynomial, hybrid). The benchmarks are from~\cite{BatzJKKMS20}.
$|S|$ is the number of states of the benchmark MDP. ``GT pr.'' is for the \emph{ground truth probability}, that is the reachability probability $\mathit{Pr}^{\mathit{max}}(s_\iota \models \diamond (S \setminus\alpha))$ computed outside the solvers under experiments. The solvers were asked whether the GT pr.\ (which they do not know) is $\le \lambda$ or not; they all answered correctly. The last five columns show the average execution time in seconds.  -- is for ``did not finish,'' for out of memory or timeout (600 sec.)
}\label{table:expMDP}
\scalebox{.7}{  \begin{tabular}{ccccccccc}
    \toprule
    Benchmark & $|S|$ &
    GT pr.\
    & $\lambda$ & \IBMDP{} &
    \multicolumn{4}{c}{PrIC3}
    \\\cmidrule{6-9}
    &&&&& none & lin. & pol. & hyb.
    \\
    \midrule
    \multirow{2}{*}{Grid} & \multirow{2}{*}{$10^2$} & \multirow{2}{*}{$1.2E^{-3}$} & 0.3 &0.31 & 1.31 & 19.34 & -- & --  \\
              &   &   & 0.2 &0.48 & 1.75 & 24.62 & -- & --\\
    \midrule
    \multirow{2}{*}{Grid} & \multirow{2}{*}{$10^3$} & \multirow{2}{*}{$4.4E^{-10}$} & 0.3 &122.29 & -- & -- & -- & -- \\
              &   &   & 0.2 &136.46 & -- & -- & -- & --\\
    \midrule
    \multirow{3}{*}{BRP} & \multirow{3}{*}{$10^3$} & \multirow{3}{*}{0.035} & 0.1 &-- & -- & -- & -- & -- \\
              &   &   & 0.01  &18.52 & 56.55 & 594.89 & -- & 722.38  \\
              &   &   & 0.005 &1.36  & 11.68 & 238.09 & -- & --   \\
      \midrule
    \multirow{4}{*}{ZeroConf} & \multirow{4}{*}{$10^4$} & \multirow{4}{*}{0.5} & 0.9 &-- & -- & -- & 0.58 & 0.51 \\
              &   &   & 0.75 & -- & -- & -- & 0.55 & 0.46 \\
              &   &   & 0.52 & -- & -- & -- & 0.48 & 0.46 \\
              &   &   & 0.45 &$<$0.1 & $<$0.1 & $<$0.1 & $<$0.1 & $<$0.1 \\
    \midrule
    \multirow{4}{*}{Chain} & \multirow{4}{*}{$10^3$} & \multirow{4}{*}{0.394} & 0.9 & -- & 72.37 & -- & 0.91 & 0.70 \\
              &   &   & 0.4 & -- & 80.83 & -- & 0.93 & -- \\
              &   &   & 0.35 & 177.12 & 115.98 & -- & -- & --  \\
              &   &   & 0.3 & 88.27 & 66.89 & 557.68 & -- & -- \\
    \midrule
    \multirow{4}{*}{DoubleChain} & \multirow{4}{*}{$10^3$} & \multirow{4}{*}{0.215} & 0.9 & -- & -- & -- & 1.83 & 1.99 \\
              &   &   & 0.3 & -- & -- & -- & 1.88 & 1.96 \\
              &   &   & 0.216 & -- & -- & -- & 139.76 & -- \\
              &   &   & 0.15 & 7.46 & -- & -- & -- & -- \\
    \bottomrule
  \end{tabular}}

    \end{subtable}

    \vspace{.6\baselineskip}
    \begin{subtable}{\textwidth}
     \centering
      \caption{Ablation experiments: LT-PDR (\FKripke{}) vs.~positive and negative LT-PDRs, implemented for the FSP for Kripke structures. The benchmarks are  as in Table~\ref{table:expFKripke}, except for a new micro benchmark \texttt{simpleTrans.smv}.
  Timeout (TO) is 600 sec.}
  \label{table:expAblation}
  \centering
 \scalebox{.7}{  \begin{tabular}{ccccc}
    \toprule
    Benchmark  & Result  &  LT-PDR & positive & negative \\
    \midrule
    latch0.smv & True & \SI{317}{\micro s} & \SI{1.68}{ms} & TO \\
    power2bit8.smv  & True & \SI{1.516}{s} & TO & TO \\
    counter.smv  & False & \SI{1.620}{s} & TO & \SI{2.88}{\micro s} \\
    simpleTrans.smv & False  & \SI{295}{\micro s} & TO & TO \\
    \bottomrule
  \end{tabular}
 }    \end{subtable}

    \vspace{.6\baselineskip}
\end{table}

 \section{Conclusions and Future Work}
 We have presented a lattice-theoretic generalization of the PDR
 algorithm called LT-PDR. This involves the decomposition of the PDR
 algorithm into positive and negative ones, which are tightly connected
 to the Knaster--Tarski and Kleene fixed point theorems, respectively. We then combined it
 with the coalgebraic and fibrational theory for modeling transition
 systems with predicates. We instantiated it with several transition
 systems, deriving existing PDR algorithms as well as a new one over
 Markov reward models.
 We leave instantiating our LT-PDR and categorical safety problems to
 derive other PDR-like algorithms, such as PDR for hybrid
 systems~\cite{SuenagaI20}, for future work.

We will also work on the combination of our work and the theory of \emph{abstract interpretation}~\cite{CousotC77, CousotC79}.
Our current framework axiomatizes what is needed of heuristics, but it does not tell how to realize such heuristics (that differ a lot in different concrete  settings).
We expect abstract interpretation to  provide some general recipes for realizing such heuristics.

\clearpage

\bibliographystyle{splncs04}
\bibliography{mybib}

\begin{thebibliography}{10}
\providecommand{\url}[1]{\texttt{#1}}
\providecommand{\urlprefix}{URL }
\providecommand{\doi}[1]{https://doi.org/#1}

\bibitem{HWMCC15}
The 8th competitive event for hardware model checkers ({HWMCC'15}) (2015),
  \url{http://fmv.jku.at/hwmcc15/}

\bibitem{AguirreK20}
Aguirre, A., Katsumata, S.: Weakest preconditions in fibrations. Electronic
  Notes in Theoretical Computer Science  \textbf{352},  5 -- 27 (2020),
  \url{http://www.sciencedirect.com/science/article/pii/S1571066120300487}, the
  36th Mathematical Foundations of Programming Semantics Conference, 2020

\bibitem{Awodey06}
Awodey, S.: Category Theory. Oxford Logic Guides, Oxford Univ. Press (2006)

\bibitem{BaierK}
Baier, C., Katoen, J.: Principles of model checking. {MIT} Press (2008)

\bibitem{Baranga91}
Baranga, A.: The contraction principle as a particular case of kleene's fixed
  point theorem. Discret. Math.  \textbf{98}(1),  75--79 (1991)

\bibitem{BatzJKKMS20}
Batz, K., Junges, S., Kaminski, B.L., Katoen, J., Matheja, C., Schr{\"{o}}er,
  P.: {PrIC3}: Property directed reachability for {MDPs}. In: Lahiri, S.K.,
  Wang, C. (eds.) Computer Aided Verification - 32nd International Conference,
  {CAV} 2020, Proceedings, Part {II}. pp. 512--538 (2020),
  \url{https://doi.org/10.1007/978-3-030-53291-8\_27}

\bibitem{BlackburnRV01}
Blackburn, P., de~Rijke, M., Venema, Y.: Modal Logic. No.~53 in Tracts in
  Theor. Comp. Sci. (2001)

\bibitem{BonchiKP18}
Bonchi, F., K{\"{o}}nig, B., Petrisan, D.: Up-to techniques for behavioural
  metrics via fibrations. In: Schewe, S., Zhang, L. (eds.) 29th International
  Conference on Concurrency Theory, {CONCUR} 2018. LIPIcs, vol.~118, pp.
  17:1--17:17. Schloss Dagstuhl - Leibniz-Zentrum f{\"{u}}r Informatik (2018).
  \doi{10.4230/LIPIcs.CONCUR.2018.17},
  \url{https://doi.org/10.4230/LIPIcs.CONCUR.2018.17}

\bibitem{Bradley11}
Bradley, A.R.: {SAT}-based model checking without unrolling. In: Jhala, R.,
  Schmidt, D.A. (eds.) Verification, Model Checking, and Abstract
  Interpretation - 12th International Conference, {VMCAI} 2011. Proceedings.
  pp. 70--87 (2011), \url{https://doi.org/10.1007/978-3-642-18275-4\_7}

\bibitem{CousotC77}
Cousot, P., Cousot, R.: Abstract interpretation: {A} unified lattice model for
  static analysis of programs by construction or approximation of fixpoints.
  In: Graham, R.M., Harrison, M.A., Sethi, R. (eds.) Conference Record of the
  Fourth {ACM} Symposium on Principles of Programming Languages, 1977. pp.
  238--252. {ACM} (1977), \url{https://doi.org/10.1145/512950.512973}

\bibitem{Cousot79}
Cousot, P., Cousot, R.: Constructive versions of {Tarski’s} fixed point
  theorems. Pacific Journal of Mathematics  \textbf{82}(1),  43--57 (1979)

\bibitem{CousotC79}
Cousot, P., Cousot, R.: Systematic design of program analysis frameworks. In:
  Aho, A.V., Zilles, S.N., Rosen, B.K. (eds.) Conference Record of the Sixth
  Annual {ACM} Symposium on Principles of Programming Languages, 1979. pp.
  269--282. {ACM} Press (1979), \url{https://doi.org/10.1145/567752.567778}

\bibitem{EenMB11}
E{\'{e}}n, N., Mishchenko, A., Brayton, R.K.: Efficient implementation of
  property directed reachability. In: Bjesse, P., Slobodov{\'{a}}, A. (eds.)
  International Conference on Formal Methods in Computer-Aided Design, {FMCAD}
  2011. pp. 125--134. {FMCAD} Inc. (2011),
  \url{http://dl.acm.org/citation.cfm?id=2157675}

\bibitem{Gurfinkel2015IC3PA}
Gurfinkel, A.: {IC3}, {PDR}, and friends (2015),
  \url{https://arieg.bitbucket.io/pdf/gurfinkel_ssft15.pdf}

\bibitem{HermidaJ98}
Hermida, C., Jacobs, B.: Structural induction and coinduction in a fibrational
  setting. Inf. Comput.  \textbf{145}(2),  107--152 (1998).
  \doi{10.1006/inco.1998.2725}

\bibitem{Herrlich74}
Herrlich, H.: Topological functors. General Topology and its Applications
  \textbf{4}(2),  125--142 (1974)

\bibitem{HoderB12}
Hoder, K., Bj{\o}rner, N.: Generalized property directed reachability. In:
  Cimatti, A., Sebastiani, R. (eds.) Theory and Applications of Satisfiability
  Testing - {SAT} 2012 - 15th International Conference, 2012. Proceedings. pp.
  157--171 (2012), \url{https://doi.org/10.1007/978-3-642-31612-8\_13}

\bibitem{Jacobs16coalgBook}
Jacobs, B.: Introduction to Coalgebra: Towards Mathematics of States and
  Observation, Cambridge Tracts in Theoretical Computer Science, vol.~59.
  Cambridge University Press (2016),
  \url{https://doi.org/10.1017/CBO9781316823187}

\bibitem{CLTT}
Jacobs, B.P.: Categorical Logic and Type Theory, Studies in logic and the
  foundations of mathematics, vol.~141. North-Holland (2001),
  \url{http://www.elsevierdirect.com/product.jsp?isbn=9780444508539}

\bibitem{KatoenKZ05}
Katoen, J., Khattri, M., Zapreev, I.S.: A {Markov} reward model checker. In:
  Second International Conference on the Quantitative Evaluaiton of Systems
  {(QEST} 2005). pp. 243--244. {IEEE} Computer Society (2005),
  \url{https://doi.org/10.1109/QEST.2005.2}

\bibitem{KomoridaKHKH19}
Komorida, Y., Katsumata, S., Hu, N., Klin, B., Hasuo, I.: Codensity games for
  bisimilarity. In: 34th Annual {ACM/IEEE} Symposium on Logic in Computer
  Science, {LICS} 2019, Vancouver, BC, Canada, June 24-27, 2019. pp. 1--13.
  {IEEE} (2019), \url{https://doi.org/10.1109/LICS.2019.8785691}

\bibitem{KoriHK21}
Kori, M., Hasuo, I., Katsumata, S.: Fibrational initial algebra-final coalgebra
  coincidence over initial algebras: Turning verification witnesses upside
  down. In: Haddad, S., Varacca, D. (eds.) 32nd International Conference on
  Concurrency Theory, {CONCUR} 2021. LIPIcs, vol.~203, pp. 21:1--21:22. Schloss
  Dagstuhl - Leibniz-Zentrum f{\"{u}}r Informatik (2021),
  \url{https://doi.org/10.4230/LIPIcs.CONCUR.2021.21}

\bibitem{MacLane71}
{Mac~Lane}, S.: Categories for the Working Mathematician. Springer, Berlin, 2nd
  edn. (1998)

\bibitem{RinetzkyS16}
Rinetzky, N., Shoham, S.: Property directed abstract interpretation. In:
  Jobstmann, B., Leino, K.R.M. (eds.) Verification, Model Checking, and
  Abstract Interpretation - 17th International Conference, {VMCAI} 2016.
  Proceedings. pp. 104--123 (2016),
  \url{https://doi.org/10.1007/978-3-662-49122-5\_5}

\bibitem{SeufertS18}
Seufert, T., Scholl, C.: Combining {PDR} and reverse {PDR} for hardware model
  checking. In: Madsen, J., Coskun, A.K. (eds.) 2018 Design, Automation {\&}
  Test in Europe Conference {\&} Exhibition, {DATE} 2018. pp. 49--54. {IEEE}
  (2018), \url{https://doi.org/10.23919/DATE.2018.8341978}

\bibitem{SeufertS19}
Seufert, T., Scholl, C.: {fbPDR}: In-depth combination of forward and backward
  analysis in property directed reachability. In: Teich, J., Fummi, F. (eds.)
  Design, Automation {\&} Test in Europe Conference {\&} Exhibition, {DATE}
  2019. pp. 456--461. {IEEE} (2019),
  \url{https://doi.org/10.23919/DATE.2019.8714819}

\bibitem{Sokolova11}
Sokolova, A.: Probabilistic systems coalgebraically: {A} survey. Theor. Comput.
  Sci.  \textbf{412}(38),  5095--5110 (2011),
  \url{https://doi.org/10.1016/j.tcs.2011.05.008}

\bibitem{SprungerKDH18}
Sprunger, D., Katsumata, S., Dubut, J., Hasuo, I.: Fibrational bisimulations
  and quantitative reasoning. In: C{\^{\i}}rstea, C. (ed.) Coalgebraic Methods
  in Computer Science - 14th {IFIP} {WG} 1.3 International Workshop, {CMCS}
  2018, Colocated with {ETAPS} 2018, Revised Selected Papers. pp. 190--213
  (2018), \url{https://doi.org/10.1007/978-3-030-00389-0\_11}

\bibitem{SuenagaI20}
Suenaga, K., Ishizawa, T.: Generalized property-directed reachability for
  hybrid systems. In: Beyer, D., Zufferey, D. (eds.) Verification, Model
  Checking, and Abstract Interpretation - 21st International Conference,
  {VMCAI} 2020, Proceedings. pp. 293--313 (2020),
  \url{https://doi.org/10.1007/978-3-030-39322-9\_14}

\bibitem{Tarski55}
Tarski, A.: A lattice-theoretical fixpoint theorem and its applications.
  Pacific Journal of Mathematics  \textbf{5}(2),  285--309 (1955)

\end{thebibliography}

\ifarxiv
\clearpage
\appendix

\section{Further Discussion of Related Work}\label{appendix:relatedWorkOnGen}
We  discuss some other works on generalization of  PDR. Hoder and Bj{\o}rner~\cite{HoderB12} gave an abstract formulation of (the original) PDR, abstracting away implementation details (such as SAT-related ones) and presenting the algorithm itself as a transition system (an ``abstract transition system'' as they call it).
Their notion of
predicate transformer
 is an instance of our
forward predicate transformer (Prop.~\ref{prop:safe_prob_fd}).
They also identified an invariant of frames, and our definition of KT
sequence (Def.~\ref{def:kt_sequence}) is inspired by it.
Another theoretical study of PDR is by Rinetzky and
Shoham~\cite{RinetzkyS16}.  They studied  PDR using
abstract interpretation and showed a mapping between PDR
configurations and elements of what they call cartesian trace semantics.
In both of these works~\cite{HoderB12,RinetzkyS16}, the formulated PDR algorithms target at
Kripke structures, and do not accommodate quantitative verification.
They are instances of our LT-PDR especially for categorical safety problems introduced in \S{}\ref{sec:structuralTheoryofPDR} (specifically the FSP in~\S{}\ref{sec:LTPDRsForKripke}), similarly to the original PDR. Moreover, our view of PDR as collaborative searches for KT and Kleene witnesses is not explicit in~\cite{HoderB12,RinetzkyS16}.

\section{LT-OpPDR (Rem.~\ref{rem:LTOpPDR}) \conf{75}}
\label{appendix:LTOpPDR}

Recall that the GFP-UA problem $\alpha\le^{?}\nu F$ for $(L,F,\alpha)$ is defined to be the
LFP-OA problem for $(L^\op,F,\alpha)$. Hence we can solve the
GFP-UA problem by executing the LT-PDR algorithm over $L^\op$. We call
this algorithm \emph{LT-OpPDR}; in other words,
LT-OpPDR is obtained by opposing each inequality in LT-PDR.

Although LT-OpPDR is a formal dual of LT-PDR,
applying the PDR-like algorithm for solving GFP-UA problems seems to
be new.

 When $L$ admits a duality by involution $\neg:L\to L^\op$, the GFP-UA
problem in $L$ can be formulated as the LFP-OA problem in $L$ (not in $L^\op$ as in the above).
\begin{proposition} \label{prop:corresponds_pdr_dpdr}
  Let $L,F,\alpha$ be as in  Def.~\ref{def:lfpOverapprox}.
  Assume that
  there is a monotone function
  $\neg: L \to L^\op$
  satisfying $\neg\circ \neg=\mathrm{id}_L$.  Then the GFP-UA
  problem $\alpha\leq^{?}\nu F$ in $L$ is equivalent to the LFP-OA problem
  $\mu (\neg F \neg (-)) \leq^{?} \neg \alpha$ in $L$.
\end{proposition}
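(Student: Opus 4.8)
The plan is to show that the involution $\neg$ transports the greatest fixed point of $F$ to the least fixed point of $G\coloneqq\neg\circ F\circ\neg$, and then to read off the equivalence of the two decision problems by applying $\neg$ once more. First I would unpack the hypothesis on $\neg$: since $\neg\colon L\to L^{\op}$ is monotone and $\neg\circ\neg=\mathrm{id}_L$, the map $\neg$ is an order-\emph{reversing} bijection of $L$ onto itself, i.e.\ $x\le y$ iff $\neg y\le\neg x$; in particular $\neg$ exchanges suprema and infima, $\neg(\bigvee A)=\bigwedge\{\neg a\mid a\in A\}$ for every $A\subseteq L$, and dually. Consequently $G=\neg\circ F\circ\neg$ is monotone (order-reversing twice, monotone once), so by Thm.~\ref{thm:kt_cc}.\ref{item:thm:kt_cc1} both $\nu F$ and $\mu G$ exist, with $\nu F=\bigvee\Postf F$ and $\mu G=\bigwedge\Pref G$.

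The heart of the argument is the identity $\mu G=\neg(\nu F)$. To get it I would compute $\Pref G$: for $y\in L$,
\[
 Gy\le y\iff \neg F\neg y\le y\iff \neg y\le F\neg y\iff \neg y\in\Postf F,
\]
where the middle step applies the order-reversing $\neg$ to both sides and uses $\neg\neg=\mathrm{id}$. Hence $\Pref G=\{\neg x\mid x\in\Postf F\}$, and taking infima and using that $\neg$ swaps $\bigvee$ and $\bigwedge$,
\[
 \mu G=\bigwedge\Pref G=\bigwedge\{\neg x\mid x\in\Postf F\}=\neg\Bigl(\bigvee\Postf F\Bigr)=\neg(\nu F).
\]

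Finally I would apply the order-reversing $\neg$ one last time: $\alpha\le\nu F$ holds iff $\neg(\nu F)\le\neg\alpha$ holds iff $\mu G=\mu(\neg F\neg(-))\le\neg\alpha$ holds, which is exactly the claimed equivalence between the GFP-UA problem $\alpha\le^{?}\nu F$ and the LFP-OA problem $\mu(\neg F\neg(-))\le^{?}\neg\alpha$. I do not expect a real obstacle here; the only point needing care is the variance bookkeeping—verifying that $\neg$ reverses the order on $L$ itself (not merely on $L^{\op}$) and therefore interchanges the two Knaster--Tarski characterizations in Thm.~\ref{thm:kt_cc}.\ref{item:thm:kt_cc1}. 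An equivalent, more abstract phrasing of the same proof is that $\neg$ is an isomorphism $L\cong L^{\op}$ of complete lattices carrying $F$ to $G$ and hence $\nu$ to $\mu$, so the two problems are literally the same problem read off in the two lattices, in the spirit of Rem.~\ref{rem:LTOpPDR}; alternatively one can bypass fixed points entirely and derive the equivalence of witnesses directly from Cor.~\ref{cor:kt_kleene}.\ref{item:cor:kt_kleene1} (applied in $L$ for the LFP-OA side) and its order-dual (applied in $L$ for the GFP-UA side), the substitution $x\mapsto\neg x$ matching the two witness conditions $\neg F\neg y\le y\le\neg\alpha$ and $\alpha\le x\le Fx$.
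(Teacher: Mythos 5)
Your proposal is correct and is essentially the paper's argument: the paper proves this by appealing to a general transfer of LFP-OA problems along an order-preserving isomorphism (here $\neg\colon L^{\op}\iso L$), which is exactly the ``more abstract phrasing'' you note at the end, while your explicit Knaster--Tarski computation $\Pref{\neg F\neg}=\neg(\Postf F)$ and hence $\mu(\neg F\neg)=\neg(\nu F)$ is just that lemma unfolded. No gap.
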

\begin{proof}
  This is a consequence of a more general statement about translating
  LFP-OA problem by isomorphisms. Let $L$ be a complete lattice,
  $\alpha$ be an element in $L$, and $F: L \to L$ be an
  $\omega$-continuous function. For any complete lattice $L'$ with an
  order-preserving isomorphism $f: L \iso L'$, LFP-OA problem
  $\mu F \leq^? \alpha$ in $L$ is equivalent to LFP-OA problem
  $\mu (f \circ F \circ f^{-1}) \leq^? f(\alpha)$ in $L'$. \qed
\end{proof}
In this case, we can invoke the LT-PDR algorithm over
$(L,{\neg}\circ F\circ {\neg},\neg\alpha)$ to solve the GFP-UA problem $\alpha \leq^? \nu F$.
We however note that the execution steps of LT-OpPDR over $(L, F, \alpha)$, i.e.~LT-PDR over
$(L^\op,F,\alpha)$, and the execution steps of LT-PDR over
$(L,{\neg}\circ F\circ{\neg},\neg\alpha)$ are essentially the same; the
configuration at each execution step is mutually convertible by the
involution $\neg$.

\section{Structural Theory of PDR by Category Theory, Further Categorical Preliminaries }
\label{appendix:categorical}
Here we provide more details on the categorical modeling in~\S\ref{sec:categoricalModeling}.

A fibration $p: \mathbb{E} \to \mathbb{B}$ is a functor that models
indexing and substitution.  That is, a functor
$p: \mathbb{E} \to \mathbb{B}$ can be seen as a family of categories
$(\mathbb{E}_X)_{X \in \mathbb{B}}$ indexed by $\mathbb{B}$-objects.
Categories with different indices are connected by \emph{substitution
  functors}.  In our examples, the base category $\mathbb{B}$ is that
of sets and functions; and the total category $\mathbb{E}$ models
``predicates'' over $\mathbb{B}$ objects.  We review a minimal set of
definitions and results on fibrations. A good reference is
\cite{CLTT}, here we quote some definitions and examples
given in \S{}2.1 of  \cite{KoriHK21}; see also \cite{AguirreK20} and
\cite{SprungerKDH18}.

\begin{wrapfigure}[6]{r}{0.42\textwidth}
  \begin{math}
    \xymatrix@R=.1em@C-1.7em{
        \mathbb{E} \ar[ddd]_-{p}
        & R
        \ar@/^/[rrd]^-{g}
        \ar@{.>}[rd]_-{h}
        \\
        &
        &
        P \ar[r]_-{f}
        &
        Q
        &
        l^{*}Q
        \ar[r]^-{\overline{l}}
        &
        Q
        \\
        & pR
        \ar@/^/[rrd]^-{pg}
        \ar@{->}[rd]_-{k}
        \\
        \mathbb{B}
        &
        &
        pP \ar[r]_-{pf}
        &
        pQ
        &
        X
        \ar[r]_-{l}
        &
        pQ
      }
  \end{math}
\end{wrapfigure}
\mbox{}\vspace{-\baselineskip}
\begin{definition}
  \!\!{\bf(fibre, fibration;} {\bf \cite[\S{}2.1]{KoriHK21})}
  Let $p: \mathbb{E} \to \mathbb{B}$ be a functor.
  For each $X\in\mathbb{B}$, the \emph{fibre} $\mathbb{E}_X$ over $X$ is the category with objects
  $P \in \mathbb{E}$ such that $pP = X$ and morphisms $f: P \to Q$
  such that $pf=\mathrm{id}_{X}$.

  A morphism $f: P \to Q$ in $\mathbb{E}$ is \emph{cartesian} if it
  satisfies the following universality: for each $g: R \to Q$ in
  $\mathbb{E}$ and $k: pR \to pP$ in $\mathbb{B}$ with
  $pg = pf \circ k$, there exists a unique morphism $h: R \to P$
  satisfying $g = f \circ h$ and $ph = k$ (see the diagram above).

  The functor $p: \mathbb{E} \to \mathbb{B}$ is a \emph{fibration} if,
  for each $Q \in \mathbb{E}$ and each $l: X \to pQ$ in $\mathbb{B}$,
  there exists $l^*Q\in \mathbb{E}$ and a morphism
  $\overline{l}: l^*Q \to Q$ such that $p\overline{l}=l$ and
  $\overline{l}$ is cartesian.
  The functor $p: \mathbb{E} \to \mathbb{B}$ is an \emph{opfibration}
  if $p^\op: \mathbb{E}^\op \to \mathbb{B}^\op$ is a fibration.  A
  functor that is both a fibration and an opfibration is called a
  \emph{bifibration}.
\end{definition}

\noindent
When $p$ is a fibration, the correspondence from $Q$ to $l^* Q$
described above induces the \emph{substitution functor}
$l^*: \mathbb{E}_Y \to \mathbb{E}_X$ which replaces the index.  The
following characterization of bifibrations is useful for us: a
fibration $p$ is a bifibration if and only if each substitution
functor $l^*: \mathbb{E}_Y \to \mathbb{E}_X$ (often called a
\emph{pullback}) has a left adjoint
$l_*: \mathbb{E}_X \to \mathbb{E}_Y$ (often called a
\emph{pushforward}).

\begin{definition}[lifting {\cite[\S{}2.1]{KoriHK21}}]\label{def:lifting}
  Let $p: \mathbb{E} \to \mathbb{B}$ be a functor.
  We say that an endofunctor $\dot{G}$
  on $\mathbb{E}$ is a \emph{lifting} of $G$ along $p$ if
  $p \circ \dot{G} = G \circ p$.  For an object
  $S\in\mathbb{B}$,
  we write $\dot G_S:\mathbb{E}_S\to\mathbb{E}_{GS}$ for
  the restriction of $\dot G$ to fibres.
\end{definition}

To manipulate complete lattices along a transition function, we focus
on a certain class of posetal fibrations called \CLatw fibrations.
They can be seen as \emph{topological functors}~\cite{Herrlich74}
whose fibres are posets.  Many categories arising from spacial and
logical structures naturally determine \CLatw fibrations.

\begin{definition}[\CLatw fibration {\cite[\S{}2.1]{KoriHK21}}] \label{def:clat} A \emph{\CLatw
    fibration} is a fibration $p\colon\mathbb{E}\to\mathbb{B}$ such that
  each fibre $\mathbb{E}_{X}$ is a complete lattice and each
  substitution $f^{*}\colon \mathbb{E}_{Y}\to \mathbb{E}_{X}$
  preserves all meets $\bigwedge$.
  In each fibre $\mathbb{E}_X$, the order is denoted by $\leq_X$ or
  $\leq$.  Its least and greatest elements are denoted by $\bot_X$ and
  $\top_X$; its join and meet are denoted by $\bigvee$ and
  $\bigwedge$.
\end{definition}
The above simple axioms of \CLatw fibrations induce many useful
structures~\cite{KomoridaKHKH19, SprungerKDH18}.  One of them is that
a \CLatw fibration is always a bifibration whose pushforwards $f_{*}$
arise essentially by Freyd's adjoint functor theorem.

\begin{example}[\CLatw fibration {\cite[\S{}2.1]{KoriHK21}}]\label{eg:clat}
 ($\mathbf{Pred} \to \Set$) %
The forgetful functor $\mathbf{Pred} \to \Set$ is a \CLatw fibration.
    Here $\mathbf{Pred}$ is the category  of predicates:
    objects are sets with predicates $(P \subseteq X)$, and morphisms
    $f: (P \subseteq X) \to (Q \subseteq Y)$ are functions
    $f: X \to Y$ satisfying $f(P) \subseteq Q$.

   \noindent
    \begin{minipage}[t]{.78\textwidth}
     \quad (Domain fibration $d^{\Omega}$) For each
      complete lattice $\Omega$, we introduce a \CLatw fibration
      $d^{\Omega}: \Set/\Omega \to \Set$ as follows.

      Here, we write $\Set/\Omega$ for the lax slice category
      with objects
    \end{minipage}
    \hfill
    \begin{minipage}[t]{.18\textwidth}
      \begin{math}
        \xymatrix@R=1em@C=0.5em{ {X} \ar[rr]^-{h}_-{\leq_X}
          \ar[rd]_-{f} && {Y} \ar[ld]^-{g}
          \\
          &\Omega&}
      \end{math}
    \end{minipage}
    $(X,f: X \to \Omega)$ of a set
    and a function (an ``$\Omega$-valued predicate on $X$'').
    We shall
    often write simply $f: X \to \Omega$ for the pair $(X,f)$.  Its
    morphisms from $f: X\to \Omega$ to $g: Y \to \Omega$ are functions
    $h: X \to Y$ such that $f \leq_X g \circ h$, as shown above,
    where the order $\leq_X$ is the pointwise order between functions
    of the type $X \to \Omega$; the same order $\leq_{X}$ defines the
    order in each fiber $(\Set/\Omega)_{X}=\Set(X,\Omega)$.
    Then $d_\Omega$ is the evident forgetful functor, extracting the
    upper part of the above triangle.  Following
    \cite[Def.~4.1]{AguirreK20}, we call $d^\Omega$ a \emph{domain
      fibration} (from the lax slice category).

\begin{auxproof}
     Notational convention: for a lattice element $x\in\Omega$,
    we write $\mathbf{x}$ for the constant $\Omega$-valued predicate
    ${\bf x}\triangleq \lambda{y}.x$.
\end{auxproof}

\end{example}

\section{Structural Derivation of Instances of LT-PDR (\S\ref{sec:instances})}
\label{appendix:strDerivDetails}
In \S\ref{sec:instances}, for each instance of LT-PDR, we only sketched its structural derivation from the categorical theory in \S\ref{sec:structuralTheoryofPDR}. Here we give a systematic exposition to the structural derivation.

\begin{wrapfigure}[5]{r}{0pt}
  \begin{math}
    \vcenter{\xymatrix@R=1.8em@C=1em{
        \Set/\Omega \lloop{\dot{G}} \ar[d]^{d^\Omega} \\
        \Set \lloop{G} }}
  \end{math}
\end{wrapfigure}
We discuss concrete instances of our  PDR framework. In its course, known PDR variations are organized in a unified categorical language; we also derive a new variation.

These concrete instances are formulated in a domain fibration $d^\Omega$ for varying $\Omega$ (Ex.~\ref{eg:clat}; see right).
Given a complete lattice $\Omega$, a set functor $G$, and a monotone $G$-algebra $\tau: G\Omega \to \Omega$ (see Def.~\ref{def:mono} below), we obtain a setting $(d^\Omega, G, \dot{G})$ for safety problems (\S\ref{sec:structuralTheoryofPDR}).
Specifically,
$\dot{G}$ is the lifting of $G$ defined by the given monotone $G$-algebra $\tau$, see Lem.~\ref{lem:mono_alg} below.

\begin{definition}[monotone algebra
  \cite{AguirreK20}] \label{def:mono} Let $G: \Set \to \Set$ be a
  functor and $\Omega$ be a complete lattice.  We call
  $\tau: G\Omega \to \Omega$ a \emph{monotone $G$-algebra} over
  $\Omega$ if
  $i \leq_X i' \Rightarrow \tau \circ Gi \leq_{GX} \tau \circ Gi'$
  holds for all $X\in\Set$ and all $i, i' \in \Set(X, \Omega)$.
\end{definition}

\begin{lemma}[\cite{BonchiKP18}] \label{lem:mono_alg}
  In the setting of Def.~\ref{def:mono},
  a monotone $G$-algebra $\tau$ gives rise to the lifting
  $\dot{G}: \Set/\Omega \to \Set/\Omega$ given by
  $\dot G(x)=\tau\circ Gx$.
  \qed

\end{lemma}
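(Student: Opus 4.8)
The plan is to verify directly that the assignment $\dot{G}(X,f)\coloneqq (GX,\tau\circ Gf)$ on objects and $\dot{G}(h)\coloneqq Gh$ on morphisms defines a functor on the lax slice category $\Set/\Omega$, and then to read off that it lifts $G$ by inspection. First I would record that the lifting condition $d^\Omega\circ\dot{G}=G\circ d^\Omega$ is immediate: on an object $(X,f)$ both composites return $GX$, and on a morphism $h$ both return $Gh$, since $d^\Omega$ is the forgetful functor extracting the domain of the triangle. So no computation is needed for the defining equation of a lifting; the work is entirely in checking functoriality.

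The only step with genuine content is that $\dot{G}$ is well-defined on morphisms. Given $h\colon (X,f)\to (Y,g)$ in $\Set/\Omega$, by definition $f\le_X g\circ h$ in the fibre $(\Set/\Omega)_X=\Set(X,\Omega)$. I must show that $Gh$ is a morphism $(GX,\tau\circ Gf)\to(GY,\tau\circ Gg)$, i.e.\ that $\tau\circ Gf\le_{GX}(\tau\circ Gg)\circ Gh$ holds in $\Set(GX,\Omega)$. By functoriality of $G$ the right-hand side equals $\tau\circ G(g\circ h)$, so the claim reduces to $\tau\circ Gf\le_{GX}\tau\circ G(g\circ h)$; this is exactly the monotone-$G$-algebra axiom of Def.~\ref{def:mono} instantiated at $i\coloneqq f$ and $i'\coloneqq g\circ h$, using the hypothesis $f\le_X g\circ h$. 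Preservation of identities and of composites is then inherited verbatim from $G$: $\dot{G}(\mathrm{id}_X)=G\mathrm{id}_X=\mathrm{id}_{GX}$ and $\dot{G}(h'\circ h)=G(h'\circ h)=Gh'\circ Gh=\dot{G}(h')\circ\dot{G}(h)$, these being equalities of underlying $\Set$-morphisms and hence valid in $\Set/\Omega$, whose arrows are $\Set$-arrows subject only to an inequality side-condition.

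The main (and rather mild) obstacle is simply keeping the bookkeeping between the two fibres $(\Set/\Omega)_X=\Set(X,\Omega)$ and $(\Set/\Omega)_{GX}=\Set(GX,\Omega)$ and their respective pointwise orders straight; once the required morphism condition is rephrased as a single inequality in $\Set(GX,\Omega)$, it is closed immediately by the monotonicity axiom together with one application of functoriality of $G$. No Galois-connection or adjoint-functor machinery (as used elsewhere in the paper) is needed for this lemma.
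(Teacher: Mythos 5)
Your proof is correct, and it is the standard direct verification one would expect here: the paper itself gives no proof of Lemma~\ref{lem:mono_alg} (it is quoted from \cite{BonchiKP18} with the argument omitted), and the content is exactly what you identify, namely that well-definedness of $\dot G$ on a morphism $h\colon(X,f)\to(Y,g)$ reduces via functoriality of $G$ to the monotone-algebra axiom applied to $f\le_X g\circ h$, with functoriality of $\dot G$ and the equation $d^\Omega\circ\dot G=G\circ d^\Omega$ holding by inspection. Nothing is missing.
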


One benefit of this framework $(d^\Omega, G, \dot{G})$ is that we may
easily get an involution appeared in Prop.~\ref{prop:corresponds_inv}.  From a monotone function
$\neg: \Omega \to \Omega^\op$ satisfying
$\neg \circ \neg = \mathrm{id}$, we can define
$\neg: \mathbb{E}_S \to \mathbb{E}_S^\op$ mapping $f: S \to \Omega$ to
$\neg \circ f: S \to \Omega$.
All involutions appeared in this section can be defined in this way.

\subsection{LT-PDR for Kripke Structures:  \FKripke{} and \IBKripke{}
  \conf{70}
} \label{appendix:p_monad} We
instantiate the structural theory in
\S{}\ref{sec:structuralTheoryofPDR} to derive LT-PDR algorithms for
Kripke structures.
We then compare them with IC3/PDR~\cite{Bradley11, EenMB11} and Reverse
PDR~\cite{SeufertS18,SeufertS19}.

In most of the PDR literature, the target system is a Kripke structure that arises from a program's operational semantics.
A \emph{Kripke structure} consists of a set $S$ of states
and a transition relation
$\delta\subseteq S\times S$ (we ignore initial states and atomic propositions).
We regard $\delta$ as a function
$S \to \Pf S$; it is thus a coalgebra of the powerset functor $\Pf$ on $\Set$.

\begin{wrapfigure}[4]{r}{0pt}
  \begin{math}
    \vcenter{\xymatrix@R=1.8em@C=1em{
        \Pred \lloop{\dotP} \ar[d]^{d^\mathbf{2}} \\
        \Set \lloop{\Pf} }}
  \end{math}
\end{wrapfigure}
To employ the theory in  \S{}\ref{sec:structuralTheoryofPDR}, we
next choose a complete lattice and a monotone $\Pf$-algebra.
Consider the complete lattice $\mathbf{2} = \{\bot < \top\}$, and the monotone
$\Pf$-algebra $\tau_{\mathrm{and}}: \Pf \mathbf{2} \to \mathbf{2}$ mapping $A$ to
$\bigwedge_{a \in A} a$.
Then we obtain the triple $(d^\mathbf{2}, \Pf, \dotP)$ as a setting of  the safety problems in \S\ref{sec:structuralTheoryofPDR}.  Note that $\Set/\mathbf{2}$ is
isomorphic to the category of predicates $\mathbf{Pred}$ (see
Ex.~\ref{eg:clat}).

We are ready to consider the \backwardGFP for the transition system.
Let $\alpha \subseteq S$ be a set of safe states.
The \backwardGFP for $(\iota, \delta, \alpha)$ in
$(d^\mathbf{2}, \Pf, \dotP)$ is the GFP-UA problem
\begin{equation}
  \label{eq:bd_p}
  \iota \;\leq^?\; \nu x.\, \alpha \land \delta^*\dotP x.
\end{equation}
This is the problem checking whether the initial states are always in the safe states after any steps.
This setting satisfies both of the additional assumptions imposed in Prop.~\ref{prop:safe_prob_fd} and \ref{prop:corresponds_inv} (namely the existence of an adjoint and an involution).
Therefore, we have two LT-PDR algorithms to solve \eqref{eq:bd_p} as below.
\myparagraph{FSP for Kripke Structures (Prop.~\ref{prop:FSPKripke})}
Notice that the lifting $\dotP_S$, which maps $X \subseteq S$ to
$\Pf X \subseteq \Pf S$, has a left adjoint
$\bigcup_S: \Pred_{\Pf S} \to \Pred_S$.
Then by Prop.~\ref{prop:safe_prob_fd}, the \backwardGFP
\eqref{eq:bd_p} can be solved by the LT-PDR algorithm for the \forwardLFP
\begin{equation}
  \label{eq:fd_p}
  \mu x.\, \iota \lor \bigcup\delta_*x\; \leq^? \;\alpha.
\end{equation}
  The forward predicate transformer
  $\Fkrf = \iota \lor \bigcup\delta_*(-)$ on predicates over $S$ expands
  to $\Fkrf(x) = \iota \cup \bigcup_{s \in x} \delta(s)$,
  which has already appeared in Example~\ref{ex:forward}.

\myparagraph{IBSP for Kripke Structures (Prop.~\ref{prop:IBSPKripke})}
Notice that the complete lattice of predicates over $S$
  admits an involution defined by set complement
  $\neg x \triangleq S \setminus x$.
  Then by
  Prop.~\ref{prop:corresponds_inv},
  the \backwardGFP \eqref{eq:bd_p}
  can be solved by the LT-PDR algorithm for the \backwardLFP
  \begin{equation} \label{eq:rev_p} \mu x.\, \neg \alpha \lor
    \neg \delta^*\dotP(\neg x) \;\leq^?\; \neg \iota.
  \end{equation}
  The function $\Fkrb = \neg \alpha \lor \neg \delta^*\dotP\neg(-)$ that appears in the above expands to
  $
    \Fkrb(x) = (S \setminus \alpha) \cup \{s \mid \exists s' \in \delta s.~s' \in x \}.
  $

\subsection{LT-PDR for MDPs: \IBMDP}
\label{appendix:mdp}
We instantiate the theory  in
\S{}\ref{sec:structuralTheoryofPDR} to derive an LT-PDR algorithm for Markov
decision processes (MDP for short).  We then compare it with the
probabilistic model checking algorithm \emph{PrIC3}
\cite{BatzJKKMS20}.

An MDP consists of a set $S$
of states,
a set $\Act$ of actions
and a transition function $\delta$ mapping $s \in S$ and $a \in \Act$
to $\delta(s)(a)$ representing a
probability distribution of next states.
We model the transition function of the MDP as a coalgebra $\delta: S \to \Gmdp S$ of
$\Gmdp \triangleq (\D(-)+1)^\Act$, where $\D$ is the
finite probability distribution endofunctor on $\Set$
\cite{Sokolova11}.
The case $\delta(s)(a) = \ast \in 1$ means that the action $a$ is not available at $s$.

\begin{wrapfigure}[5]{r}{0pt}
  \begin{math}
    \vcenter{\xymatrix@R=1.8em@C=1em{
        \Set/[0, 1] \lloop{\dot{\Gmdp}} \ar[d]^{d^{[0, 1]}} \\
        \Set \lloop{\Gmdp} }}
  \end{math}
\end{wrapfigure}
To employ the theory in \S{}\ref{sec:structuralTheoryofPDR}, we
next choose a complete lattice $\Omega$ and a monotone $\Gmdp
$-algebra over $\Omega$. Consider the complete lattice $[0, 1]$ of the real numbers in the unit
interval with the usual order, and the monotone algebra
$\tau: \Gmdp [0,1] \to [0, 1]$ mapping $f$ to
$\min \{\sum_{n \in [0, 1]} n \cdot f(a)(n) \mid a \in \Act, fa \neq
\ast\}$ (note that $\min \{\} = 1$).
Then we obtain the triple $(d^{[0, 1]}, \Gmdp ,\dot\Gmdp )$ as a setting of  the safety problems in \S\ref{sec:structuralTheoryofPDR}.
We note that $\dot{G}_S$ does not have a left adjoint (see Appendix~\ref{appendix:profwedge}). We therefore cannot apply Prop.~\ref{prop:safe_prob_fd} to the current setting.

We are ready to consider the \backwardGFP for MDPs.  Let
$s_\iota\in S$ be an initial state, and
$\alpha \subseteq S$ be a set of safe states. We convert
 $s_\iota$, $\lambda$ and $\alpha$ to $[0,1]$-valued predicates
$d_{\iota,\lambda}$ and $d_\alpha$: $d_{\iota,\lambda}$ maps
$s_\iota$ to $\lambda$ and others to $1$, and $d_\alpha$ maps
$s \in \alpha$ to $1$ and $s \not \in \alpha$ to $0$.  We use the involution
 $\neg \colon [0,1]^{S}\to[0,1]^{S}$  defined by  $(\neg d)(s)\triangleq 1-d(s)$, too. Then
the backward
safety problem for $(\neg d_{\iota,\lambda}, \delta, d_\alpha)$ in
$(d^{[0, 1]},\Gmdp , \dot{\Gmdp })$ is the GFP-UA problem
\begin{equation}
  \label{eq:mdpgfpua}
\neg  d_{\iota,\lambda} \;\leq^?\; \nu x.\, d_\alpha \land \delta^*\dot{\Gmdp }x.
\end{equation}
This is the problem whether the probability of being at $\alpha$ all
the time is greater than or equal to $1-\lambda$ under any choices of actions in the MDP.

\myparagraph{IBSP for MDPs (Def.~\ref{def:IBSPMDP})}
Note that the complete lattice of $[0,1]$-valued predicates over $S$
admits the above involution $\neg$.
Then by
Prop.~\ref{prop:corresponds_inv},
the BSP in
  \eqref{eq:mdpgfpua} can be solved by  LT-PDR  for the inverse backward
  safety problem
\begin{equation}
  \label{eq:mdplfpoa}
  \mu x.\, \neg d_\alpha \lor \neg \delta^*\dot{\Gmdp }(\neg x) \;\leq^?\;  d_{\iota,\lambda}.
\end{equation}
The precise algorithm is in Appendix~\ref{ap:mdp}.
The function $\Fmdp \triangleq \neg d_\alpha \lor \neg \delta^*\dot{\Gmdp }(\neg
-)$
used in~(\ref{eq:mdplfpoa}) expands as follows (recall
$\delta^*\dot{\Gmdp }(d) = \tau \circ \D d \circ \delta$):
\begin{equation}
  \Fmdp (d)(s) =
  \begin{cases}
    1 & (s \not \in \alpha) \\
    \max\{\sum_{s' \in S} ds' \cdot \delta(s)(a)(s') \mid a \in
    \Act, \delta (s)(a) \neq \ast \} & (s \in \alpha).
  \end{cases}
  \label{eq:rbpt}
\end{equation}
This is a standard Bellman Operator for MDPs.

\subsection{LT-PDR for Markov Reward Models: \MRM}
\label{appendix:mrm}
We instantiate an LT-PDR algorithm for Markov Reward Models (MRM for short),
which is seemingly new.
As we said in \S{}\ref{sec:LTPDRsForMRM},
the safety problem we will define is not an instance of the theory in
\S{}\ref{sec:structuralTheoryofPDR} (especially Prop.~\ref{def:safe_prob_bd}--\ref{prop:corresponds_inv}).

An MRM consists of a set $S$ of states
and a transition function $\delta$
mapping $s \in S$ and $c \in \mathbb{N}$ to a function
$\delta(s)(c): S \to [0, 1]$ that represents probability distribution
of next states.  We model the transition function of the MRM as a
coalgebra $\delta: S \to GS$ of the endofunctor
$G \triangleq \mathcal{D}((-) \times \mathbb{N})$, where $\mathcal{D}$
is introduced in \S{}\ref{appendix:mdp}.

Note that the above definition accommodates another definition of MRM, namely as an MC
$(S, \delta_{\mathrm{MC}}: S \to \D S)$ with a reward
function $\mathrm{rew}: S \to \mathbb{N}$~\cite{BaierK}. Specifically, we can set
$\delta(s)(c)(s')\coloneqq\delta_\mathrm{MC}(s)(s')$ if $c=\mathrm{rew}(s)$
and $\delta(s)(c)(s')\coloneqq 0$ otherwise.

\begin{wrapfigure}[5]{r}{0pt}
  \begin{math}
    \vcenter{\xymatrix@R=1.8em@C=1em{
        \Set/[0, \infty] \lloop{\dot{G}} \ar[d]^{d^{[0, \infty]}} \\
        \Set \lloop{G} }}
  \end{math}
\end{wrapfigure}
We next define a complete lattice and a monotone $G$-algebra.  Consider
the complete lattice $[0, \infty]$ of the extended nonnegative real
numbers with the usual order, and the monotone $G$-algebra
$\tau: G[0, \infty] \to [0, \infty]$ defined by
$ \tau(\mu)=E_{(r,n)\sim \mu}[r+n]$. It takes the expectation of $r+n$
for the distribution
$\mu\in G[0,\infty]=\mathcal{D}([0,\infty]\times\mathbb{N})$
(see \cite[Ex.~6.6]{AguirreK20} for further details).  Then we obtain
the triple $(d^{[0, \infty]}, G, \dot{G})$, as a setting of the safety problem.

\myparagraph{SP for MRMs (Def.~\ref{def:SPMRM})}
Using the above data, we obtain the function $F'(x)=d_\alpha \land \delta^*\dot{G}(x)$ in \S\ref{sec:LTPDRsForMRM}. This $F'$ can be concretely described as in Def.~\ref{def:SPMRM}.

\section{LT-PDR Algorithm \IBMDP{}  for MDPs in \S{}\ref{sec:LTPDRsForMDP}} \label{ap:mdp}

See Algorithm \ref{alg:mdp}.
It solves the IBSP for MDPs (Def.~\ref{def:IBSPMDP}).

\begin{algorithm}
  \SetKwInOut{Input}{Input} \SetKwInOut{Output}{Output}
  \SetKwInOut{Initially}{Initially}
  \Input{$(d_{\iota,\lambda}, \delta, d_\alpha)$}
  \Output{`True' with a conclusive KT sequence, or `False' with a conclusive Kleene sequence}
  \KwData{$(X; C)$ where $X$ is a KT
    sequence $X_0 \leq \cdots \leq X_{n-1}$ and
    $C$ is a Kleene sequence $(C_i, C_{i+1}, \dots, C_{n-1})$ in $(\Set/[0, 1])_S$ ($C$ is
    empty if $n=i$).}
  \Initially{$(X; C)=(\mathbf{0} \leq \Fmdp (\mathbf{0}); ())$}  \Repeat{any return
    value is obtained}{ \textbf{Valid}
    If $X_{j+1} \leq X_j$ for some $j < n-1$, return `True' with the conclusive KT sequence $X$. \\
    \textbf{Unfold} If $X_{n-1} \leq  d_{\iota,\lambda}$
    (i.e.~$X_{n-1}(s_\iota) \leq \lambda$), let
    $(X; C) \coloneqq (X_0 \leq \cdots \leq X_{n-1} \leq \mathbf{1}; ())$. \\
    \textbf{Induction} If some $k \geq 2$ and $x\in L$ satisfy
    $X_{k} \not \leq x$ and $\Fmdp(X_{k-1} \land x) \leq x$,
    let $(X; C) \coloneqq (X[X_j := X_j \land x]_{2 \leq j \leq k}, C)$. \\
    \textbf{Candidate} If $C=()$ and $X_{n-1} \not \leq d_{\iota,\lambda}$
    (i.e.~$X_{n-1}(s_\iota) > \lambda$),
    choose $x: S \to [0, 1]$ satisfying $x \leq X_{n-1}$ and $x \not \leq  d_{\iota,\lambda}$, and
    let $(X; C) \coloneqq (X; (x))$. \\
    \textbf{Model}
    If $C_1$ is defined, return `False' with the conclusive Kleene sequence $(\mathbf{0}, C_1, \dots, C_{n-1})$. \\
    \textbf{Decide} If $C_i \leq \Fmdp X_{i-1}$ (i.e.~for all
    $s \in \alpha$, there exists $a \in \Act$ such that
    $C_i(s) \leq \sum_{s' \in S} X_{i-1}s' \cdot \delta(s)(a)(s')$),
    choose $x: S \to [0, 1]$ satisfying $x \leq X_{i-1}$ and $C_i \leq Fx$,
    and let $(X; C) \coloneqq (X; (x, C_i, \dots, C_{n-1}))$. \\
    \textbf{Conflict} If $C_i \not \leq \Fmdp X_{i-1}$ (i.e.~there
    exists $s \in \alpha$ such that
    $C_i(s) > \sum_{s' \in S} X_{i-1}s' \cdot \delta(s)(a)(s')$ for
    all $a \in \Act$), choose $x: S \to [0, 1]$ satisfying
    $C_i \not \leq x$ and $\Fmdp (X_{i-1} \land x) \leq x$, and let
    $(X; C) \coloneqq (X[X_j := X_j \land x]_{2 \leq j \leq i},
    (C_{i+1}, \dots, C_{n-1}))$. }
  \caption{LT-PDR Algorithm \IBMDP{}  for MDPs}
  \label{alg:mdp}
\end{algorithm}

\section{LT-PDR Algorithm \MRM{} for MRMs in \S{}\ref{sec:LTPDRsForMRM}} \label{ap:mrm}

See Algorithm \ref{alg:mrm}.
It solves the SP for MRMs (Def.~\ref{def:SPMRM}).

\begin{algorithm}[h]
  \SetKwInOut{Input}{Input} \SetKwInOut{Output}{Output}
  \SetKwInOut{Initially}{Initially}
  \Input{$(d_{\iota,\lambda}, \delta, d_\alpha)$}
  \Output{`True' with a conclusive KT sequence, or `False' with a conclusive Kleene sequence}
  \KwData{$(X; C)$ where $X$ is a KT
    sequence $X_0 \leq \cdots \leq X_{n-1}$
    and $C$ is a Kleene sequence $(C_i, C_{i+1}, \dots, C_{n-1})$
     in $(\Set/[0, \infty])_S$ ($C$ is empty if $n=i$).}
  \Initially{$(X; C)=(\boldsymbol{0} \leq \Fmrm(\boldsymbol{0});
    ())$} \Repeat{any return value is obtained}{ \textbf{Valid}
    If $X_{j+1} \leq X_{j}$ for some $j < n-1$, return `True' with the conclusive KT sequence $X$. \\
    \textbf{Unfold} If $X_{n-1} \leq d_{\iota,\lambda}$
    (i.e.~$X_{n-1}(s_\iota) \leq \lambda$),
    let $(X; C) \coloneqq (X_0 \leq \cdots \leq X_{n-1} \leq \boldsymbol{\infty}; ())$. \\
    \textbf{Induction} If some $k \geq 2$ and $x\in L$ satisfy
    $X_{k-1} \not \leq x$ and $\Fmrm(X_{k} \land x) \leq x$,
    let $(X; C) \coloneqq (X[X_j := X_j \land x]_{2 \leq j \leq k}, C)$. \\
    \textbf{Candidate} If $C=()$ and $X_{n-1} \not \leq d_{\iota,\lambda}$
    (i.e.~$X_{n-1}(s_\iota) > \lambda$),
    choose $x: S \to [0, \infty]$ satisfying $x \leq X_{n-1}$ and $x \not \leq d_{\iota, \lambda}$, and
    let $(X; C) \coloneqq (X; (x))$. \\
    \textbf{Model}
    If $C_1$ is defined, return `False' with the conclusive Kleene sequence $(\mathbf{0}, C_1, \dots, C_{n-1})$. \\
    \textbf{Decide} If $C_i \leq \Fmrm X_{i-1}$
    (i.e.~$C_i(s)=0$ for all $s \not \in \alpha$,
    and $C_i(s) \leq \sum_{s' \in S} \sum_{c \in \mathbb{N}} (c+X_{i-1}s') \cdot
    \delta(s)(c, s')$ for all $s \in \alpha$),
    choose $x: S \to [0, \infty]$ satisfying
    $x \leq X_{i-1}$ and $C_i \leq \Fmrm x$,
    then let $(X; C) \coloneqq (X; (x, C_i, \dots, C_{n-1}))$. \\
    \textbf{Conflict} If $C_i \not \leq \Fmrm X_{i-1}$
    (i.e.~some $s \not \in \alpha$ satisfies $C_i(s) \neq 0$,
    or some $s \in \alpha$ satisfies
    $C_i(s) > \sum_{s' \in S} \sum_{c \in \mathbb{N}} (c+X_{i-1}s') \cdot \delta(s)(c, s')$),
    choose $x: S \to [0, \infty]$ satisfying
    $C_i \not \leq x$ and $\Fmrm (X_{i-1} \land x) \leq x$, and let
    $(X; C) \coloneqq (X[X_j := X_j \land x]_{2 \leq j \leq i},
    (C_{i+1}, \dots, C_{n-1}))$.}
  \caption{LT-PDR algorithm \MRM{} for a Markov reward model}
  \label{alg:mrm}
\end{algorithm}

\section{No Adjunction in \S\ref{sec:LTPDRsForMDP}} \label{appendix:profwedge}
In general, a component  $\dot{\Gmdp }_S: (\Set/[0, 1])_S \to (\Set/[0, 1])_{(\D S + 1)^\Act}$ of
of the lifting $\dot{\Gmdp}$ does
not have a left adjoint since $\dot{\Gmdp }_S$ may not preserve $\land$.
It follows from the following calculation ($f, g: S \to [0, 1]$ and $d: \Act \to (\D S+1)$):
\begin{align*}
  &(\dot{\Gmdp }(f \land g))(d)  \\
  &= (\tau \circ (\mathcal{D}(f \land g)+1)^\Act)(d) \\
  &= \min\{\sum_{r \in [0, 1]} r \cdot \sum_{\min(fs, gs) = r} (da)(s) \mid a \in \Act, da \neq \ast\}, \\
  &(\dot{\Gmdp } f \land \dot{\Gmdp } g)(d) \\
  &= \min\{\sum_{r \in [0, 1]} r \cdot \sum_{fs = r} (da)(s),
  \sum_{r \in [0, 1]} r \cdot \sum_{gs=r} (da)(s) \mid a \in \Act, da \neq \ast\}
\end{align*}
The above two do not coincide in general.
\qed

\section{Heuristics for \IBMDP{} in \S{}\ref{sec:implEval}} \label{ap:heuristics_mdp}
The algorithm \IBMDP{} in Alg.~\ref{alg:mdp} (except for \textbf{Induction}) is determined by
heuristics, i.e.~the way of choosing $x: S \to [0, 1]$ in \textbf{Candidate}, \textbf{Decide}, and \textbf{Conflict}.
The following is the heuristics of \IBMDP{} used in \S{}\ref{sec:implEval}.
 We use a symbolic free variable $\epsilon$ for a positive margin,
 and define
 $a+\epsilon\leq b$ by $a < b$
 and $a < b+\epsilon$ by $a \leq b$
 for each $a, b \in [0, 1]$.

  (\textbf{Candidate}):
  If $C=()$ and $X_{n-1}(s_\iota) > \lambda$, let $(X;C)\coloneqq (X;(x))$
  where $x: S \to [0, 1]$ maps $s_\iota$ to $\lambda+\epsilon$ and others to $0$.

  (\textbf{Decide}):
  If $C_i \leq \Fmdp (X_{i-1})$ (i.e.~for all
    $s \in \alpha$, there exists $a_s \in \Act$ such that
    $C_i(s) \leq \sum_{s' \in S} X_{i-1}s' \cdot \delta(s)(a_s)(s')$),
  let $(X;C)\coloneqq (X;(x,C_i,\dots,C_{n-1}))$
  where $x: S \to [0, 1]$ is defined as follows.
    Let
    $a_s \in \Act$ be an action for $s \in \alpha$ satisfying $C_i(s) \leq \sum_{s' \in S} X_{i-1}s' \cdot \delta(s)(a_s)(s')$,
    and $V$ be the set $\{s' \in S \mid \delta(s)(a_s)(s') \neq 0 \text{ for some } s \in \supp{C_i} \cap \alpha\}$.
    Then we define $x$ as
    \begin{align*}
      x(s) \coloneqq
      \begin{cases}
        0 &\text{if }s \not \in V \\
        x_s &\text{if }s \in V \text{ and }x_s=X_{i-1}s \\
        x_s+\epsilon &\text{otherwise}
      \end{cases}
    \end{align*}
    where $x_s$ is determined by
    solving the following linear program:
    find $(x_s)_{s \in V}$ that minimize $\Sigma_{s \in V} (2-X_{i-1}s)x_s$ subject to $\{v_s \leq \Sigma_{s' \in S} x_{s'} \cdot \delta(s)(a_s)(s') \mid s \in \supp{C_i} \cap \alpha, C_i(s)=v_s \text{ or }C_i(s)=v_s+\epsilon \text{ for some }v_s\in[0, 1]\} \cup \{0 \leq x_s \leq X_{i-1}s \mid s \in V\}$.

  (\textbf{Conflict}):
    If $C_i > \Fmdp (X_{i-1})$
    (i.e.~there
    exists $s \in \alpha$ such that
    $C_i(s) > \sum_{s' \in S} X_{i-1}s' \cdot \delta(s)(a)(s')$ for
    all $a \in \Act$),
    $A := \{s \in \alpha \mid C_i(s) > \sum_{s' \in S} X_{i-1}s' \cdot \delta(s)(a)(s') \text{ for all }a \in \text{Act}\}$ is not empty.
    Then let $(X;C)\coloneqq (X[X_j\coloneqq X_j \land x]_{2 \leq j \leq i};(C_{i+1},\dots,C_{n-1}))$
    where
    $x: S \to [0, 1]$ maps
    $s \not \in A$ to $1$,
    $s \in A$ with $C_i(s)=v+\epsilon$ to $v$,
    and others to $\Fmdp X_{i-1}(s)$.

 Note that $C_i(s)$ is always $v \in [0, 1]$ or $v+\epsilon$ for some $v \in [0, 1)$ by rules defined above.
 When applying \textbf{Conflict},
 each values of $\epsilon$ in the Kleene sequence $C$ can be implicitly determined
 as small enough ones so that all conditions in rules (e.g.~$C_i\leq X_i$ and $C_i \leq \Fmdp (X_{i-1})$) hold.
 By this fact the heuristics above is valid for Alg.~\ref{alg:mdp}.
 The heuristics of \MRM{} in \S{}\ref{sec:implEval} is similarly designed.

\section{Full Experiment Results for \IBMDP{}} \label{appendix:fullMDP}
See Table~\ref{table:expMDPextended}.
  \begin{table}[htp]
  \caption{Results with \IBMDP. Comparison is against PrIC3~\cite{BatzJKKMS20} with four
different interpolation generalization methods (none, linear, polynomial, hybrid). The benchmarks are from~\cite{BatzJKKMS20}.
$|S|$ is the number of states of the benchmark MDP. ``GT pr.'' is for the \emph{ground truth probability}, that is the reachability probability $\mathit{Pr}^{\mathit{max}}(s_\iota \models \diamond (S \setminus\alpha))$ computed outside the solvers under experiments. The solvers were asked whether the GT pr.\ (which they do not know) is $\le \lambda$ or not; they all answered correctly. The last five columns show the average execution time in seconds.  -- is for ``did not finish,'' for out of memory or timeout (600 sec.)
}
    \label{table:expMDPextended}
    \centering
    \begin{tabular}{ccccccccc}
      \toprule
  Benchmark & $|S|$ & $\mathit{Pr}^{\mathit{max}}(s_\iota \models \diamond (S\setminus\alpha))$ & $\lambda$ & \IBMDP{} & w/o & lin & pol & hyb  \\
      \midrule
    \multirow{2}{*}{Grid} & \multirow{2}{*}{$10^2$} & \multirow{2}{*}{$1.2E^{-3}$} & 0.3 &0.31 & 1.31 & 19.34 & -- & --  \\
      &   &   & 0.2 &0.48 & 1.75 & 24.62 & -- & --\\
      \midrule
    \multirow{2}{*}{Grid} & \multirow{2}{*}{$10^3$} & \multirow{2}{*}{$4.4E^{-10}$} & 0.3 &112.29 & -- & -- & -- & -- \\
      &   &   & 0.2 &136.46 & -- & -- & -- & --\\
      \midrule
    \multirow{3}{*}{BRP} & \multirow{3}{*}{$10^3$} & \multirow{3}{*}{0.035} & 0.1 &-- & -- & -- & -- & -- \\
      &   &   & 0.01  &18.52 & 56.55 & 594.89 & -- & 722.38  \\
      &   &   & 0.005 &1.36  & 11.68 & 238.09 & -- & --   \\
      \midrule
    \multirow{4}{*}{ZeroConf} & \multirow{4}{*}{$10^4$} & \multirow{4}{*}{0.5} & 0.9 &-- & -- & -- & 0.58 & 0.51 \\
      &   &   & 0.75 & -- & -- & -- & 0.55 & 0.46 \\
      &   &   & 0.52 & -- & -- & -- & 0.48 & 0.46 \\
      &   &   & 0.45 & 0.014 & $<$0.1 & $<$0.1 & $<$0.1 & $<$0.1 \\
      \midrule
    \multirow{4}{*}{Chain} & \multirow{4}{*}{$10^3$} & \multirow{4}{*}{0.394} & 0.9 & -- & 72.37 & -- & 0.91 & 0.70 \\
      &   &   & 0.4 & -- & 80.83 & -- & 0.93 & -- \\
      &   &   & 0.35 & 177.12 & 115.98 & -- & -- & --  \\
      &   &   & 0.3 & 88.27 & 66.89 & 557.68 & -- & -- \\
      \midrule
    \multirow{4}{*}{Chain} & \multirow{4}{*}{$10^4$} & \multirow{4}{*}{0.394} & 0.9 & -- & -- & -- & 0.86 & 0.63 \\
      &   &   & 0.48 & -- & -- & -- & 0.84 & -- \\
      &   &   & 0.4 & -- & -- & -- & 0.84 & -- \\
      &   &   & 0.3 & -- & -- & -- & -- & -- \\
      \midrule
    \multirow{2}{*}{Chain} & \multirow{2}{*}{$10^{12}$} & \multirow{2}{*}{0.394} & 0.9 & -- & -- & -- & 0.91 & -- \\
      &   &   & 0.4 & -- & -- & -- & 0.89 & -- \\
      \midrule
    \multirow{4}{*}{DoubleChain} & \multirow{4}{*}{$10^3$} & \multirow{4}{*}{0.215} & 0.9 & -- & -- & -- & 1.83 & 1.99 \\
      &   &   & 0.3 & -- & -- & -- & 1.88 & 1.96 \\
      &   &   & 0.216 & -- & -- & -- & 139.76 & -- \\
      &   &   & 0.15 & 7.46 & -- & -- & -- & -- \\
      \midrule
    \multirow{3}{*}{DoubleChain} & \multirow{3}{*}{$10^4$} & \multirow{3}{*}{0.22} & 0.9 & -- & -- & -- & 1.83 & 2.47 \\
      &   &   & 0.3 & -- & -- & -- & 2.11 & 2.00\\
      &   &   & 0.24 & -- & -- & -- & 2.01 & -- \\
      \bottomrule
    \end{tabular}
  \end{table}

\section{Omitted Proofs}
\subsection{Proof of Cor.~\ref{cor:kt_kleene}}
\begin{proof}
  1) easy.
  2) By Thm.~\ref{thm:kt_cc}, we have the following.
  \begin{align*}
    &\mu F \not \leq \alpha \\
    &\Leftrightarrow \text{there exists $n \in \mathbb{N}$ such that } F^n \bot \not \leq \alpha \\
    &\Leftrightarrow  \text{there exists $n \in \mathbb{N}$ and $x \in L$ such that } x \leq F^n \bot \text{ and } x \not \leq \alpha.
  \end{align*}
  \qed
\end{proof}
\subsection{Proof of Thm.~\ref{thm:safe_witness}}
\begin{proof}
  Since $L$ is a complete lattice, we consider a monotone function
  $\sup: [\omega, L] \to L$ mapping $X$ to
  $\bigvee_{i \in \omega} X_i$, which has the upper (i.e.~right)
  adjoint $\Delta: L \to [\omega, L]$.
  \begin{displaymath}
    \xymatrix@C=2cm{
      [\omega, L] \lloop{F^\#} \ar@<1.2ex>[r]^-{\sup} &L \ar@<1.2ex>[l]^-{\Delta}_-\bot \rloop{F} \\
      \Pref{F^\#} \ar@{^(->}[u] \ar@<1.2ex>[r]^-{\Pref{\sup}} &\Pref{F} \ar@<1.2ex>[l]^-{\Pref{\Delta}}_-\bot \ar@{^(->}[u]
    }
  \end{displaymath}
  Since $F$ is $\omega$-continuous, the monotone function $F^\#$ is a
  lifting of $F$ along $\sup$, that is, $\sup\circ F^\#=F\circ \sup$
  holds.
  Now one can easily check that we can restrict $\sup$ and $\Delta$ to
  functions between $\Pref{F^\#}$ and $\Pref{F}$, and a general result
  in category theory \cite[Thm 2.14]{HermidaJ98} tells us that the
  restrictions (denoted as $\Pref{\sup}$ and $\Pref{\Delta}$ in the
  above diagram) again form a Galois connection. Note that the initial
  chain of $F^\#$ is $\mu F^\#$, and is mapped to $\mu F$ by
  $\Pref{\sup}(\mu F^\#)=\mu F$. We will use this fact in some later proofs.

  Then $\Pref{\Delta}$ maps a KT witness to a KT$^\omega$ witness,
  and $\Pref{\sup}$ maps a KT$^\omega$ witness to a KT witness.
  \qed
\end{proof}

\subsection{Proof of Thm.~\ref{thm:KTseqCPO}}
\begin{proof}
  Assume $X^0 \preceq X^1 \preceq \cdots$ is an $\omega$-chain of KT sequences augmented with KT$^\omega$ witnesses.
  Then the suprema of this chain exist: its $j$-th element is the infimum of $\{X_j^i \mid X_j^i \text{ is defined}\}$ in $L$.
  The suprema
  compose $F^\#_n$ or $F^\#$-algebra and each element is less than or equal to $\alpha$.
  \qed
\end{proof}

\subsection{Proof of Thm.~\ref{thm:positive_sound_terminate}}
\begin{proof}
  (sound) easy by Cor.~\ref{cor:kt_kleene}, Thm.~\ref{thm:safe_witness}, and Prop.~\ref{prop:safe}.
  (weakly terminating)
  If $\mu F \leq \alpha$ then the algorithm weakly terminates by the following procedure (skip \textbf{Induction} when we cannot apply the rule):
  $(\bot \leq F\bot) \xmapsto{\mathbf{Unfold}}
  (\bot \leq F\bot \leq \top) \xmapsto{\mathbf{Induction}}
  (\bot \leq F\bot \leq \mu F) \xmapsto{\mathbf{Unfold}}
  (\bot \leq F\bot \leq \mu F \leq \top) \xmapsto{\mathbf{Induction}}
  (\bot \leq F\bot \leq \mu F \leq \mu F) \xmapsto{\mathbf{Valid}}
  \text{`True'}
  $.
  \qed
\end{proof}

\subsection{Proof of Lem.~\ref{lem:kt_order}}
\begin{proof}
  When we cannot apply both \textbf{Valid} and \textbf{Unfold},
  we can apply \textbf{Induction} by choosing $\mu F$ as $x$.
  \qed
\end{proof}

\subsection{Proof of Thm.~\ref{thm:posi_kt}}
\begin{proof}
  (non-termination)
  Since $L$ is well-founded, a non-terminating run $X^0 \preceq X^1 \preceq \cdots$ infinitely extends the length of KT sequences.
  Therefore, by Thm.~\ref{thm:KTseqCPO},
  the supremum of the $\omega$-chain becomes a KT$^\omega$ witness.

  (strong termination)
  Assume there is a run of positive LT-PDR which does not terminate.
  Let $X^i$ be the $i$-step KT sequence and $X^i_j\coloneqq\top$ when $|X^i| \leq j$.
  The $\omega$-chain $\bigwedge_{i \in \mathbb{N}} X^i$ is under $\alpha$ so it converges in some index $j$: $\bigwedge_{i \in \mathbb{N}}X^i_j = \bigwedge_{i \in \mathbb{N}}X^i_{j+1}$.

  We further assume that there is no $i \in \mathbb{N}$ such that $X^i_j = X^i_{j+1}$.
  Then for each $i \in \mathbb{N}$, there exists $i'$ such that $X^i_j \geq X^{i'}_{j+1}$ since
  $\bigwedge_{i \in \mathbb{N}}X^i_j = \bigwedge_{i \in \mathbb{N}}X^i_{j+1}$.
  Now $X^{i'}_j \neq X^{i'}_{j+1}$ so $X^{i'}_j < X^{i'}_{j+1}$ holds.
  Applying it repeatedly, we have $X^0_j > X^{0'}_j > X^{0''}_j > \dots$.
  This contradicts well-foundedness.
  \qed
\end{proof}

\subsection{Proof of Thm.~\ref{thm:negative}}
\begin{proof}
  1) easy by Cor.~\ref{cor:kt_kleene} and Prop.~\ref{prop:unsafe}.
  2) By Thm.~\ref{thm:kt_cc}, there exists $n \in \mathbb{N}$ such that $F^n \bot \not \leq \alpha$. Then negative LT-PDR terminates when we choose $x$ in \textbf{Candidate} and \textbf{Decide} so as to get the conclusive Kleene sequence $(\bot, F\bot, \dots, F^n \bot)$.
  \qed
\end{proof}

\subsection{Proof of Prop.~\ref{prop:C_X}}
\begin{proof}
  1) If $C_i \not \leq X_i$ then $C_i \not \leq F^i \bot$ by Lem.~\ref{lem:x_init}.
  Lem.~\ref{lem:C_X} (not \ref{item:fibot} $\Rightarrow$ not \ref{item:possible}) concludes the proof.

  2) Considering $j=1$, Lem.~\ref{lem:C_X} (not \ref{item:fjx} $\Rightarrow$ not \ref{item:possible}) concludes the proof.

  3) By Lem.~\ref{lem:x_init} and the KT sequence $(X_0 \leq \dots \leq X_{n-1})$,
  we have $F^{n-2}\bot \leq X_{n-2} \leq \alpha$.
  Letting $i=n-2$, Lem.~\ref{lem:C_X} (not \ref{item:fibot} $\Rightarrow$ not \ref{item:possible}) concludes the proof.
  \qed
\end{proof}

\subsection{Proof of Lem.~\ref{lem:x_init}}
\begin{proof}
  In the proof of Thm.~\ref{thm:safe_witness},
  we showed $\mu F^\#$ is the initial chain of $F$.
  Therefore, each prefixed point of $F^\#$ is greater than or equal to the initial chain of $F$.
  This fact leads to the over-approximation of KT sequences.
  \qed
\end{proof}

\subsection{Proof of Lem.~\ref{lem:C_X}}
\begin{proof}
  ($1 \Rightarrow 2$):
  $C_i \leq FC_{i-1} \leq \cdots \leq F^i C_0 = F^i \bot$.
  ($2 \Rightarrow 1$): It is true since
  $(\bot, F\bot, \dots, F^{i-1}\bot, C_i, \dots, C_{n-1})$ is a
  conclusive Kleene sequence.  ($2 \Rightarrow 3$):
  $F^i \bot \leq F^j X_{i-j}$ by Lem.~\ref{lem:x_init}.  \qed
\end{proof}
\subsection{Proof of Lem.~\ref{lem:config}} \label{ap:config}
\begin{proof}
  Preservation of Kleene sequences is easily proved.
  We prove the preservation of KT sequences
  by checking each condition in Def.~\ref{def:kt_sequence}.
  \begin{enumerate}
  \item  The initial $X$ satisfies $X_{n-2} \leq \alpha$ because
    $\bot \leq \alpha$.  Rules except for \textbf{Unfold} cannot
    increase $X$, especially $X_{n-2}$, and \textbf{Unfold} also
    preserves $X_{n-2} \leq \alpha$.
  \item
    The LT-PDR algorithm starts from $X=(\bot \leq F\bot) \in [2, L]$
    composing $F^\#_2$-algebra.  All rules which update $X$ are the
    following:
    \begin{itemize}
    \item (\textbf{Unfold}): For each $n, m$ with
      $n \leq m \leq \omega$, let $a$ denote the functor from
      $[m, L] \to [n, L]$ which shortens sequences by cutting large
      elements.  Each $a$ has a right adjoint $r$ which appends a
      sequence by $\top \in L$.

      The rule sends $X \in [n, L]$ to $rX \in [n+1, L]$ by $r$ and we
      show $r$ sends $F^\#_n$-algebra to $F^\#_{n+1}$-algebra.  As the
      same discussion in the proof of Thm.~\ref{thm:safe_witness}, Thm 2.14
      in \cite{HermidaJ98} yields the following since $F^\#_{n+1}$ is
      a lifting of $F^\#_n$ along $a$:
      \begin{displaymath}
        \xymatrix{
          [n, L] \lloop{F^\#_{n}} \ar@<-1.2ex>[r]_-{r} &[n+1, L] \ar@<-1.2ex>[l]_-{a}^-\bot \rloop{F^\#_{n+1}}
        }
        \text{ gives }
        \xymatrix{
          \Pref{F^\#_{n}} \ar@<-1.2ex>[r]_-{\Pref{r}} &\Pref{F^\#_{n+1}} \ar@<-1.2ex>[l]_-{\Pref{a}}^-\bot
        }.
      \end{displaymath}
      Thus, $r$ preserves algebras.

    \item (\textbf{Induction, Conflict}): These two rules preserve
      prefixed points of $F^\#_n$ because $F(X_{k-1} \land x) \leq x$ iff
      $F^\#_n(r \Delta x \land X) \leq r\Delta x$ ($\Delta: L \to [k+1, L]$ and $r: [k+1, L] \to [n, L]$) by the following:
      \begin{displaymath}
        \infer=[^{a \dashv r: [k+1, L] \to [n, L]}]{F^\#_n(X \land r \Delta x) \leq r \Delta x}{\infer=[^{\sup \dashv \Delta: L \to [k+1, L]}]{a F^\#_n (X \land r \Delta x) \leq \Delta x}{F(X_{k-1} \land x) = \sup a F^\#_n (X \land r \Delta x) \leq x}}
      \end{displaymath}
    \end{itemize}
  \end{enumerate}
  \qed
\end{proof}
\subsection{Proof of Prop.~\ref{prop:term}}
\begin{proof}
  Note that all rules in LT-PDR change the current data $(X; C)$.
  Since we have well-foundedness and the length of sequences in the data is always finite,
  \textbf{Unfold} or \textbf{Model} will be applied within finite steps.

  When $\mu F \not \leq \alpha$ is true,
  there exists a conclusive Kleene sequence by Cor.~\ref{cor:kt_kleene} and Prop.~\ref{prop:unsafe}.
  Letting $n$ be the length of the sequence,
  by Prop.~\ref{prop:C_X}.\ref{item:n-step}, there is no KT sequence with length $n+1$.
  Thus the algorithm will terminate by \textbf{Model} in finite steps.

  When $\mu F \leq \alpha$ is true and $(L, \leq)$ has no strictly increasing $\omega$-chain bounded by $\alpha$,
  we cannot apply \textbf{Unfold} infinitely.
  Thus the algorithm will terminate by \textbf{Unfold} within finite steps.
  \qed
\end{proof}

\subsection{Proof of Prop.~\ref{prop:safe_prob_fd}}
\begin{proof}
  \begin{align*}
    &\iota \;\leq\; \nu x.\, \alpha \land \delta^*\dot{G}x \\
    &\text{iff there exists a coalgebra } x \leq \alpha \land \delta^* \dot{G}x \text{ in } \mathbb{E}_S \text{ satisfying } \iota \leq x \\
    &\text{iff there exists } x \text{ in } \mathbb{E}_S \text{ satisfying } \iota \leq x \leq \alpha \text{ and } x \leq \delta^* \dot{G}x  \\
    &\text{iff there exists } x \text{ in } \mathbb{E}_S \text{ satisfying } \iota \leq x \leq \alpha \text{ and } \dot{F}\delta_*x \leq x  \\
    &\text{iff there exists an algebra } \iota \lor \dot{F}\delta_* x \leq x \text{ in } \mathbb{E}_S \text{ satisfying } x \leq \alpha \\
    &\text{iff } \mu x.\,\iota \lor \dot{F}\delta_*x \;\leq \;\alpha.
  \end{align*}
      \qed
\end{proof}

\section{Haskell Source Code for LT-PDR}
\label{appendix:code}
The following is our Haskell implementation \lstinline{ltPDR}.
{
\scriptsize\begin{verbatim}
class (Show a) => CLat a where
  type Info a            -- auxiliary information
  leq :: a -> a -> IO (Bool, Info a)
  bot :: a -> a          -- include dummy argument
  top :: a -> a
  meet :: a -> a -> a

type KTSeq a = [a]         -- [X_{n-1}, ..., X_1=f bot]
type KleeneSeq a = Stack a -- Stack (n-i) [C_i, ..., C_{n-1}]
newtype PDRConfig a = KTKl (KTSeq a, KleeneSeq a) deriving (Show)
data PDRAnswer a = Valid (KTSeq a) | InValid (KleeneSeq a) deriving (Show)
data (CLat a) => Heuristics a = Heuristics { f_candidate  :: a -> a -> Info a -> IO a,
                                             f_decide   :: a -> a -> (a -> a) -> Info a -> IO a,
                                             f_conflict :: a -> a -> (a -> a) -> Info a -> IO a }

-- check whether mu F <= alpha
ltPDR :: forall a. CLat a => Heuristics a -> (a -> a) -> a -> IO (PDRAnswer a)
ltPDR heuristics f alpha =
  let init = KTKl ([f $ bot alpha], stackNew) in
    loop init
  where
    loop :: PDRConfig a -> IO (PDRAnswer a)
    loop (KTKl (xs, cs)) = do
      rst <- sequence $ [fst <$> leq (xs !! i) (xs !! (i+1)) | i <- [0..(length xs - 2)]]
      if or rst
        then return $ Valid xs
        else if length xs == naturalToInt (stackSize cs)
          then return $ InValid cs
          else do
            (result1, solver1) <- leq (head xs) alpha
            if result1
              then loop $ KTKl (top alpha:xs, stackNew)
              else
                case stackPop cs of
                  Nothing -> do
                    x <- f_candidate heuristics (head xs) alpha solver1
                    loop $ KTKl (xs, stackPush cs x)
                  Just (cs', ci) ->
                    let sizeOfcs = naturalToInt $ stackSize cs in
                    let xi1 = xs !! sizeOfcs in do
                      (result2, solver2) <- leq ci (f xi1)
                      if result2
                        then do
                          x <- f_decide heuristics xi1 ci f solver2
                          loop $ KTKl (xs, stackPush cs x)
                        else do
                          x <- f_conflict heuristics xi1 ci f solver2
                          let sizeOfxs = length xs
                          let xs' = zipWith (h sizeOfcs sizeOfxs x) xs [0..]
                          loop $ KTKl (xs', cs')
    h :: Int -> Int -> a -> a -> Int -> a
    h sizeOfcs sizeOfxs x x' i = if i == sizeOfxs - 1 || i < sizeOfcs - 1 then x' else meet x x'
\end{verbatim}
}
\else
\fi

\end{document}